\documentclass[11pt,onecolumn]{article}
\usepackage[utf8]{inputenc}
\usepackage{amssymb}
\usepackage{amsthm}
\usepackage{amsmath}
\usepackage{graphicx}
\usepackage{color,soul}
\usepackage{xcolor}
\usepackage{empheq}
\usepackage[many]{tcolorbox}
\usepackage{mathtools}
\usepackage{makecell}
\usepackage{relsize}
\usepackage{subfig}
\usepackage{stmaryrd}
\usepackage{tikz}
\usepackage{enumitem}
\usepackage{hyperref}
\usepackage{booktabs}
\usepackage[vlined,linesnumbered,ruled,resetcount]{algorithm2e}

\usepackage[left=1.0in,top=1.0in,right=1.0in,bottom=1.0in]{geometry}
\theoremstyle{plain}
\newtheorem{theorem}{Theorem}[section]
\newtheorem{proposition}[theorem]{Proposition}
\newtheorem{definition}[theorem]{Definition}
\newtheorem{lemma}[theorem]{Lemma}

\newtheorem{claim}[theorem]{Claim}

\newtheorem{construction}[theorem]{Construction}

\newcommand{\R}{\mathbb{R}}
\newcommand{\E}{\mathbb{E}}
\newcommand{\ctwo}{c_2}
\newcommand{\cF}{c_F}
\newcommand{\MaxSE}{\operatorname{MaxSE}}
\newcommand{\MeanSE}{\operatorname{MeanSE}}
\newcommand{\diag}{\text{diag}}

\title{Improved Error Bounds for Pure Differentially Private Continual Counting via Matrix Factorization}

\author{
Pavel Arkhipov\thanks{Institute of Science and Technology Austria,
\texttt{pavel.arkhipov@ist.ac.at}}
\hspace{30pt}
Nikita P. Kalinin\thanks{Institute of Science and Technology Austria,
\texttt{nikita.kalinin@ist.ac.at}}
}
\date{}

\usepackage[subfigure]{tocloft}

\begin{document}

\maketitle

\begin{abstract}
Continual counting under pure differential privacy is one of the simplest and most well-studied problems in the continual observation model. Nevertheless, an asymptotic gap remains between the best known upper and lower bounds for maximum squared error and mean squared error: the upper bound is $O(\epsilon^{-2}\log^3 n)$, while the lower bound is $\Omega(\epsilon^{-2}\log^2 n)$, for both error metrics. The best known constant in the upper bound is achieved by the $k$-ary tree mechanism with the subtraction trick, due to Andersson, Pagh, Steiner, and Torkamani (FORC 2025).
In this work, we improve the leading constant in the maximum squared error and the mean squared error. Our approach uses a general matrix factorization mechanism, yielding 
an improved bound for pure-DP continual counting that does not rely on a tree-based construction. The mechanism starts from a good-quality low-dimensional factorization, obtained via gradient-based optimization, and gives an explicit matrix construction that lifts this factorization to arbitrarily large dimensions, further improving its error guarantees. We offer an efficient algorithmic implementation of our mechanism. 
On the lower-bound side, we prove an $\Omega(\epsilon^{-2}\log^3 n)$ lower bound for the class of factorizations whose matrices have entries in $\{0,1\}$, matching the upper-bound asymptotics for this class. 
This class includes the binary tree mechanism and $k$-ary tree mechanisms without the subtraction trick. 
Extending this lower bound to arbitrary matrix factorizations, and beyond the matrix mechanism altogether, remains an open problem.
\end{abstract}
\vspace{1cm}

\begin{center}
\small\bfseries Contents
\end{center}

\vspace{-0.8em}

\tableofcontents
\clearpage

\section{Introduction}

Differential privacy \cite{dwork2006calibrating} has become a standard framework for reasoning about privacy, both in theory and in practice. In many applications, however, data arrive sequentially, and statistics must be released continually as the data stream evolves. This motivates the continual observation model of Dwork, Naor, Pitassi, and Rothblum~\cite{dwork2010differential}, in which the privacy guarantee must hold for the entire sequence of outputs rather than for a single final release. One of the most fundamental problems in this model is private continual counting under pure differential privacy~\cite{dwork2006calibrating,dwork2010differential}. Given a binary stream $x_1,\ldots,x_n \in \{0,1\}$ of sensitive updates, the goal is to release, at each time $t$, an approximation to the prefix sum $\sum_{i=1}^t x_i$, while ensuring that the entire output sequence satisfies $\epsilon$-differential privacy. Continual counting and its variants serve as building blocks in a range of private streaming algorithms, including continual mean estimation~\cite{george2024continual}, quantile estimation \cite{imola2026differentially}, continual histograms~\cite{upadhyay2019sublinear,epasto2023differentially,cardoso2022differentially,henzinger2023differentially}, continual clustering~\cite{la2024making}, graph algorithms~\cite{fichtenberger2021differentially,raskhodnikova2024fully, dinitz2025generalized}, and continual distinct-elements counting~\cite{jain2023counting,andersson2026improved}. Thus, improvements to continual counting can benefit a broad range of higher-level private streaming algorithms.

The classical solution to continual counting is the binary tree mechanism~\cite{dwork2010differential,chan2011private}. In this mechanism, the stream is represented by the leaves of a binary tree, and noisy partial sums are stored at internal nodes. Each update affects only logarithmically many nodes, which controls the privacy cost, and each prefix sum can be reconstructed from logarithmically many noisy measurements, which controls the error. Combined with the Laplace mechanism, the binary tree construction gives maximum expected squared error at most $2\log_2^3(n)/\epsilon^2$ for streams of length $n$.
Since no asymptotic improvement over this upper bound is known for pure differential privacy, the leading constant has become an important measure of progress.
The subtraction trick of Honaker~\cite{honaker2015efficient} reduces the error by allowing a prefix sum to be represented using complementary intervals. Further improvements are obtained by replacing the binary tree with a $k$-ary tree~\cite{qardaji_2013,cardoso2022differentially}, thereby optimizing the tradeoff between privacy cost and reconstruction error. Combining these ideas, Andersson, Pagh, Steiner, and Torkamani~\cite{andersson2024count} obtained the best known leading constants for pure-DP continual counting, reducing the error by more than a factor of $8$ compared to the original binary tree mechanism. The construction can be further refined by averaging several estimators; we refer to the resulting method as the averaged $k$-ary tree with subtraction, which increases the improvement over the binary tree mechanism from a factor of $8$ to a factor of $17$.

The tree mechanisms above can all be viewed through the more general matrix mechanism framework~\cite{li2015matrix}. In this framework, the mechanism first takes linear measurements of the stream, adds Laplace noise scaled to their sensitivity, and then uses these noisy measurements to linearly reconstruct the prefix sums. Matrix factorizations have been useful in the approximate DP setting with Gaussian noise \cite{pillutla2025correlated}. For pure DP continual counting, however, they had not previously led to improvements beyond tree constructions. In this work, we show that a general matrix factorization can in fact improve the best-known bounds for pure-DP continual counting, reducing the error by a factor $25$ over the binary tree mechanism.

The best known lower bounds are substantially smaller than the best known upper bounds: for both mean squared error~\cite{henzinger2023almost} and maximum squared error~\cite{dwork2010differential,edmonds2020power}, the strongest known lower bounds are of order $\Omega(\epsilon^{-2}\log^2 n)$, with analogous bounds holding even under approximate differential privacy. Thus, the asymptotic gap between upper and lower bounds remains open. Nevertheless, the lack of asymptotically better mechanisms has motivated the view that the current upper bound may be optimal~\cite{andersson2024count}. In this work, we give partial evidence in this direction: we prove an $\Omega(\epsilon^{-2}\log^3 n)$ lower bound for the class of matrix factorizations whose factors have all entries in $\{0,1\}$. This class is rich enough to include the binary tree mechanism and the $k$-ary tree mechanisms without the subtraction trick. Hence, within this natural class of tree-like factorizations, the known upper bound is asymptotically tight.

\subsection{Problem Statement}
We consider continual counting over a stream
$x=(x_1,\ldots,x_n) \in \{0,1\}^n$.
At each time $t \in [n]$, the goal is to release a private estimate of
$\sum_{j=1}^t x_j$. Equivalently, the target output is $T_n x$, where
$T_n \in \R^{n \times n}$ is the prefix-sum matrix defined by
$(T_n)_{tj} = \mathbf{1}\{j \le t\}$ for $t,j \in [n]$.
Let $\mathcal M$ be a randomized mechanism whose output
$\mathcal M(x) \in \R^n$ contains the estimate $\mathcal M(x)_t$ at time $t$.
We assume that $\mathcal M$ is $\epsilon$-differentially private with respect to
neighboring streams, where $x,x' \in \{0,1\}^n$ are neighbors if
$\|x-x'\|_1 \le 1$.
To measure the utility of the mechanism, we use the maximum squared error and
the mean squared error, defined as follows:
\begin{align}
    &\MaxSE(\mathcal M,n)
    :=
    \max_{x \in \{0,1\}^n}
    \max_{t \in [n]}
    \E_{\mathcal M}
    \left[
        \left(
            \mathcal M(x)_t - (T_nx)_t
        \right)^2
    \right],\\
    &\MeanSE(\mathcal M,n)
    :=
    \max_{x \in \{0,1\}^n}
    \E_{\mathcal M}
    \left[
        \frac{1}{n}
        \left\|
            \mathcal M(x) - T_nx
        \right\|_2^2
    \right].
\end{align}
We formulate the differentially private continual counting problem as finding
an $\epsilon$-DP mechanism $\mathcal{M}$ that minimizes these errors.
In particular, to establish upper bounds on these errors, we consider a
specific family of mechanisms, namely matrix factorization mechanisms.
Consider a factorization $T_n = LR$, where
$L \in \R^{n \times r}$ and $R \in \R^{r \times n}$. To make the product $T_n x$ private, we first multiply $x$ by $R$, add Laplace noise to the resulting vector $Rx$, and then postprocess the noisy vector by multiplying it by $L$. The associated
 matrix mechanism is
\begin{equation}
    \mathcal M_{L,R}(x)
    :=
    L(Rx + z),
\end{equation}
where $z \in \R^r$ has independent Laplace coordinates with scale
$\max_{i \in [n]}\|R_{:,i}\|_{1}/\epsilon$, i.e., the maximum $\ell_1$ norm
of a column of $R$ divided by $\epsilon$. For this mechanism, the errors can be computed as
\begin{equation}
    \MaxSE(\mathcal M_{L,R},n)
    =
    \frac{2}{\epsilon^2}
    \|L\|_{2 \to \infty}^2
    \|R\|_{1 \to 1}^2,
    \qquad
    \MeanSE(\mathcal M_{L,R},n)
    =
    \frac{2}{\epsilon^2}
    \frac{1}{n}
    \|L\|_F^2
    \|R\|_{1 \to 1}^2,
\end{equation}
where $\|\cdot\|_{2 \to \infty}$ denotes the maximum $\ell_2$ norm of the rows,
and $\|\cdot\|_{1 \to 1}$ denotes the maximum $\ell_1$ norm of the columns. To simplify notation, we introduce max-error and mean-error
\emph{factorization costs}\footnote{We call $\ctwo(M)$ and $\cF(M)$
\emph{factorization costs} rather than norms, since we do not claim that they
satisfy a triangle inequality.}
which can be defined for a general matrix $M \in \R^{n \times n}$ as follows:
\begin{align}
    &\ctwo(M)
    :=
    \inf_{\substack{
        r \ge 1,\,
        L \in \R^{n \times r},\,
        R \in \R^{r \times n} \\
        LR = M
    }}
    \|L\|_{2 \to \infty}\,\|R\|_{1 \to 1},\\
    &\cF(M)
    :=
    \inf_{\substack{
        r \ge 1,\,
        L \in \R^{n \times r},\,
        R \in \R^{r \times n} \\
        LR = M
    }}
    \frac{1}{\sqrt n}\,
    \|L\|_F\,\|R\|_{1 \to 1}.
\end{align}

Consequently, optimizing over all factorizations $T_n = LR$ reduces the problem
to bounding the factorization costs $\ctwo(T_n)$ and $\cF(T_n)$.

\begin{table}[t]
\caption{Comparison of our factorization-based mechanism with prior mechanisms for MaxSE and MeanSE. The upper bounds are stated up to a multiplicative factor of $1+o(1)$. For the $k$-ary trees with and without the subtraction trick, we substitute the best value of $k$ for the corresponding metric. The proposed factorization-based mechanism uses two different factorizations for MaxSE and MeanSE, respectively. The averaged $k$-ary tree with subtraction is described in Subsection~\ref{sub:avaraged_k_ary}. }
\centering
\small
\begingroup
\renewcommand{\arraystretch}{2.5}
\begin{tabular}{lcc}
\toprule
Method & MaxSE & MeanSE \\
\midrule
Binary tree \cite{dwork2010differential}
& $\displaystyle 2\frac{\log_2(n)^3}{\epsilon^2}$
& $\displaystyle \frac{\log_2(n)^3}{\epsilon^2}$
\\
Smooth binary tree \cite{andersson2023smooth}
& $\displaystyle 0.25 \frac{\log_2(n)^3}{\epsilon^2}$
& $\displaystyle 0.25 \frac{\log_2(n)^3}{\epsilon^2}$
\\

$k$-ary tree \cite{cardoso2022differentially, qardaji_2013}
& $\displaystyle \underbrace{\frac{32}{\log_2(17)^3}}_{\approx 0.4686}\frac{\log_2(n)^3}{\epsilon^2}$
& $\displaystyle \underbrace{\frac{16}{\log_2(17)^3}}_{\approx 0.2343}\frac{\log_2(n)^3}{\epsilon^2}$
\\

$k$-ary tree + subtraction \cite{andersson2024count}
& $\displaystyle \underbrace{\frac{16}{\log_2(17)^3}}_{\approx 0.2343}\frac{\log_2(n)^3}{\epsilon^2}$
& $\displaystyle \underbrace{\frac{180}{19\log_2(19)^3}}_{\approx 0.1236}\frac{\log_2(n)^3}{\epsilon^2}$
\\
averaged $k$-ary tree + subtraction
& $\displaystyle \underbrace{\frac{8}{ \log_2(17)^3}}_{\approx 0.1171}\frac{\log_2(n)^3}{\epsilon^2}$
& $\displaystyle \underbrace{\frac{8}{ \log_2(17)^3}}_{\approx 0.1171}\frac{\log_2(n)^3}{\epsilon^2}$
\\

Factorization Based (ours) 
& $\displaystyle 0.0778 \cdot\frac{\log_2(n)^3}{\epsilon^2}$
& $\displaystyle 0.0710\cdot\frac{\log_2(n)^3}{\epsilon^2}$
\\

\bottomrule
\end{tabular}
\label{tab:tree-methods}
\endgroup
\end{table}

\subsection{Our contribution}

Our main contribution is a new upper bound for pure-DP continual counting based on general matrix factorization.

Our construction starts from a good-quality factorization
$AB=T_k$ for some fixed $k$, found numerically by gradient-based optimization, and lifts it to
arbitrarily large dimensions by an explicit recursive operation on factorizations.
More precisely, given factorizations $AB = T_n$ and $CD = T_k$, we can construct a factorization $LR = T_{2kn+n+k}$ with $c_2$ and $c_F$ costs satisfying
\begin{equation}
\begin{split}
        \left( \|L\|_{2 \to \infty} \|R\|_{1 \to 1} \right)^{2/3} &\leq
    \left( \|A\|_{2 \to \infty} \|B\|_{1 \to 1} \right)^{2/3} +
    \left( \|C\|_{2 \to \infty} \|D\|_{1 \to 1} \right)^{2/3}, \text{ or} \\
    \left(\frac{1}{\sqrt{2kn+n+k}}\|L\|_{F} \|R\|_{1 \to 1} \right)^{2/3} &\leq
    \left(\frac{1}{\sqrt{n}} \|A\|_{F} \|B\|_{1 \to 1} \right)^{2/3} +
    \left(\frac{1}{\sqrt{k}} \|C\|_{F} \|D\|_{1 \to 1} \right)^{2/3},
\end{split}
\end{equation}
depending on which cost we are interested in. Iterating this operation gives asymptotically good factorizations of $T_n$ for all
$n$.

\begin{theorem}[Main upper bound]
\label{thm:main_theorem}
For every stream length $n$, there exist two $\epsilon$-differentially private continual
counting mechanisms satisfying respectively
\[
    \MaxSE(\mathcal{M},n)
    \leq
    0.0778 
    \frac{\log_2^3 n}{\epsilon^2}(1 + o(1)), \qquad\MeanSE(\mathcal{M},n)
    \leq
    0.0710
    \frac{\log_2^3 n}{\epsilon^2} (1 + o(1)).
\]
\end{theorem}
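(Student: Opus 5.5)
The plan is to prove Theorem~\ref{thm:main_theorem} by iterating the lifting operation announced in the contribution paragraph, starting from a fixed numerically found factorization, and then turning the subadditivity of the $2/3$-powers of the costs into an asymptotic bound. Fix a base size $k$ and a factorization $CD=T_k$, found by gradient descent and checked exactly (for example, by verifying $CD=T_k$ and computing the norms in rational arithmetic), with cost $\gamma:=\|C\|_{2\to\infty}\|D\|_{1\to1}$ for the max-error case, or $\gamma:=k^{-1/2}\|C\|_F\|D\|_{1\to1}$ for the mean-error case. Define the size map $n\mapsto 2kn+n+k$, equivalently $n+1\mapsto (2k+1)(n+1)$, and start from $n_0=k$. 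After $m$ steps we get $n_m+1=(2k+1)^{m}(k+1)$, and repeated application of the composition inequality gives
\[
c(T_{n_m})^{2/3}\le (m+1)\,\gamma^{2/3}.
\]
Hence $c(T_{n_m})^2\le \gamma^2 (m+1)^3$. Since $m\approx \log_2 n_m/\log_2(2k+1)$, this yields
\[
\MaxSE \text{ or } \MeanSE\;\le\; \frac{2}{\epsilon^2}\,\frac{\gamma^2}{\log_2^3(2k+1)}\,\log_2^3 n\,(1+o(1)).
\]
The constants $0.0778$ and $0.0710$ are then the values of $2\gamma^2/\log_2^3(2k+1)$ for the chosen base factorizations.

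Next, I will prove the composition inequality itself, which is the real content. The most natural construction is a block one. View $T_{2kn+n+k}$ as $2k+1$ blocks indexed by the pattern of $T_k$ interleaved with copies of $T_n$: the coarse level is a $T_k$-type structure on block sums, and the fine level uses copies of $T_n$ inside blocks. Set $L=[\,\alpha L_1\;\;\beta L_2\,]$ and $R=\begin{bmatrix}\alpha^{-1}R_1\\ \beta^{-1}R_2\end{bmatrix}$, where $L_1R_1$ handles within-block prefix sums (Kronecker-like copies of $A,B$) and $L_2R_2$ handles coarse block sums (built from $C,D$). Then the row norms add in squares, $\|L\|_{2\to\infty}^2\le\alpha^2a^2+\beta^2c^2$, and the column $\ell_1$ norms add, $\|R\|_{1\to1}\le b/\alpha+d/\beta$. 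Optimizing over the scalings $\alpha,\beta$ gives exactly $\bigl((ab)^{2/3}+(cd)^{2/3}\bigr)^{3/2}$, which is Hölder's inequality with exponents $(3,3/2)$. The Frobenius version goes the same way, with averages over rows replacing the maximum. The odd size $2kn+n+k$ suggests a subtraction-type trick: the blocks alternate between forward and reversed copies, so that each prefix is a coarse sum plus or minus a partial block. I would design the construction so that every row of $L_1$ touches only one copy of $A$ (or its reversal $\tilde A$ realizing suffix sums from the same cost) and every column of $R$ meets one copy of $B$ and one column of $D$.

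Finally, I need to handle general $n$ that is not of the form $n_m$. Pad to the next $n_m\ge n$: $T_n$ is the leading principal submatrix of $T_{n_m}$, and restricting rows of $L$ and columns of $R$ does not increase either cost. The padding loses at most a factor of $2k+1$ in length, which is an additive $O(1)$ in $\log n$ and so is absorbed into $1+o(1)$. For the mean error, restricting rows can increase the average $\|L\|_F^2/n$, so I would instead run the recursion with a mixed-radix choice of base sizes, or bound the row-norm profile uniformly; this is the one place needing care.

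The main obstacle is constructing the lifting so that the reconstruction of every prefix of length $2kn+n+k$ uses exactly one fine term and one coarse term, with each input column charged only once at each level, without doubling costs. I would first try the arrangement with reversed (suffix) copies of $T_n$ interleaved with $k$ coarse singleton positions, and check it on small cases.
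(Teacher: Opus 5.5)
Your outline follows the paper's route. A block-diagonal ``fine'' factorization built from copies of $(A,B)$ is stacked with a Kronecker ``coarse'' factorization $(\alpha C\otimes\mathbf{1}_{2n+1},\ \alpha^{-1}D\otimes\mathbf{1}_{2n+1}^T)$. H\"older balancing of $\alpha$ gives the $2/3$-power subadditivity, which you iterate from a numerically optimized $T_{1024}$ and then truncate. Your constant $2\gamma^2/\log_2^3(2k+1)$ is exactly the paper's $2C_2^2(\log_2 k/\log_2(2k+1))^3$. (A harmless slip: the size invariant is $2n+1\mapsto(2k+1)(2n+1)$, not $n+1\mapsto(2k+1)(n+1)$, so $n_m=((2k+1)^{m+1}-1)/2$.)

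The genuine gap is that the step you call the real content is never carried out. No factorization of $T_{2kn+n+k}$ is exhibited, so the composition inequality is asserted rather than proved. The missing idea is the splitting
\[
T_{2kn+n+k}=\begin{pmatrix}0_{n\times k(2n+1)}&0_{n\times n}\\ T_k\otimes J_{(2n+1)\times(2n+1)}&0_{k(2n+1)\times n}\end{pmatrix}+\diag(T_n,-U_{n+1},T_n,\ldots,-U_{n+1},T_n),
\]
with $2k+1$ diagonal blocks and $U_{n+1}$ the strictly upper-triangular all-ones matrix. The coarse row blocks begin at row $n+1$, while the coarse column blocks begin at column $1$. So within each coarse row block, the coarse sum does one of three things: it overshoots the true prefix by a strict suffix (a row of $U_{n+1}$), it is exact at one singleton row, or it undershoots by a prefix of the next $T_n$ block. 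The suffix corrections cost nothing extra. With $E$ the reversal permutation, $EABE=T_n^T$, hence
\[
U_{n+1}=\begin{pmatrix}EA\\ 0_{1\times r}\end{pmatrix}\begin{pmatrix}0_{r\times 1}& BE\end{pmatrix},
\]
and these factors have the same row $\ell_2$-norms and column $\ell_1$-norms as $A$ and $B$. Only with this in hand does your bookkeeping become a proof: each row of $L$ meets one row of $A$ or $EA$ and one row of $\alpha C$, and each column of $R$ meets one column of $B$ or $BE$ and one column of $\alpha^{-1}D$.

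The mean-error bound for non-round $n$ is also left open. The ``mixed-radix'' option does not work as stated, because exactly reachable sizes satisfy $2n+1=\prod_i(2k_i+1)$; when $2n+1$ is prime, no useful decomposition exists. The paper closes this in two steps. First it shows every balancing parameter satisfies $\alpha_i\le 1$. Then it proves a prefix-discrepancy bound $\max_t\bigl|\sum_{j\le t}(u_j-\mu_p)\bigr|=O(N_{p-1})$ on the squared row norms $u_j$ of $L'$. This gives $\frac1n\|L\|_F^2=\frac1N\|L'\|_F^2+O(1)$, while $\frac1N\|L'\|_F^2$ grows linearly in $p$.

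A shorter fix is also available, since the composition inequality holds for an arbitrary first argument. With $m=\lceil (N-k)/(2k+1)\rceil$ one has $2km+m+k\in[N,N+2k]$, and truncation gives
\[
\cF(T_N)^{2/3}\le\Bigl(1+\tfrac{2k}{N}\Bigr)^{1/3}\bigl(\cF(T_m)^{2/3}+\cF(T_k)^{2/3}\bigr).
\]
Recurse until the size drops below $\log N$. The multiplicative losses form a product of $1+O(k/\log N)$, and the leftover term from the trivial factorization is $O(\log^{1/3}N)$. Both are negligible against $\Theta(\log N)$ copies of the base cost.

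Finally, a gradient-descent output satisfies $CD=T_k$ only approximately, so ``verifying $CD=T_k$'' in rational arithmetic would fail. The paper keeps the rational $R$ and sets $L:=T_kR^{-1}$, which makes the factorization exact. Because exact inversion at $k=1024$ is too expensive, it then bounds $\|L-\tilde L\|_F$ by a Neumann series in $T_k^{-1}(\tilde LR-T_k)$. Some such certification step is needed for $0.0778$ and $0.0710$ to be rigorous.
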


These constants improve on the previous best pure-DP bounds of the $k$-ary tree
mechanism with the subtraction trick.  In particular, the leading constant for
maximum squared error improves by a factor of about $3$ in comparison to \cite{andersson2024count} and about $1.5$ in comparison to the averaged $k$-ary tree mechanism with subtraction. The leading constant
for mean squared error improves by a factor of about $1.6$. The comparisons with other methods are given in Table~\ref{tab:tree-methods}. Some of the errors had not previously been computed in the pure-DP setting, we provide the missing entries of the table in Appendix~\ref{appendix_table_analysis}.

We also show that the resulting mechanism can be implemented efficiently. Although
the construction is described as a matrix factorization, the matrices obtained by the
recursive construction are sparse and structured.  This allows the correlated noise
term $Lz$ to be generated online, yielding an implementation with
$O_{A,B}(\log n)$ space and $O_{A,B}(n\log n)$ total running time for any fixed base
factorization $AB=T_k$.

On the lower-bound side, we prove that the $\Omega(\log^3 n)$ squared-error behavior is
unavoidable for any matrix-factorization-based mechanism with factors being $0/1$ matrices. Specifically, if
$L$ and $R$ have entries in $\{0,1\}$ and satisfy $LR=T_n$, then
\begin{equation}
    \begin{split}
        \|L\|_{2\to\infty}\|R\|_{1\to 1}
    &\geq
    0.290\,\log_2^{3/2}(n-1) = \Omega\left(\log^{3/2}n\right), \text{ and} \\
    \frac{1}{\sqrt n}\|L\|_F\|R\|_{1\to 1}
    &\geq
    \left(\frac{\log 2}{6}\right)^{3/2}
    \sqrt{1-\frac1n}\,\log_2^{3/2}(n-1) = \Omega\left(\log^{3/2}n\right).
    \end{split}
\end{equation}
We prove this by reducing to the ``skew Bollob\'as'' inequalities from extremal combinatorics. The main idea is to interpret the matrices as encodings of collections of sets satisfying certain pairwise intersection properties. Consequently, every pure-DP matrix mechanism arising from such a $0/1$ factorization
has maximum and mean squared error at least $\Omega(\epsilon^{-2}\log^3 n)$.
This class includes the binary tree mechanism and the standard $k$-ary tree
mechanisms without the subtraction trick. We point out that for $\MaxSE$, our bound yields $\MaxSE(\mathcal M_{L,R},n) \geq 0.168 \frac{\log_2^3 n}{\epsilon^2} (1+o(1))$ for any $0/1$ factorization $LR = T_n$, with the constant $0.168$ being greater than the constant in our upper bound. Therefore, to further improve the leading constant for $\MaxSE$, one cannot remain in the class of $0/1$ factorizations. Previously known lower bounds for the factorization costs were of order $\Omega(\log n)$.

Finally, we establish some structural facts about these factorization costs.  In
particular, we show that $c_2$ and $c_F$ are not matrix norms, since they fail the
triangle inequality. This helps motivate why the pure-DP setting is not as well-behaved as the approximate-DP
setting, where the analogous $\ell_2$-sensitivity factorization costs are matrix norms.

Our upper bound theorem depends on specific base factorizations. We provide both base factorizations, as well as code for verifying the claimed factorization norms using exact rational (fraction-based) arithmetic, via the following link. 

\url{https://github.com/npkalinin/Pure_DP_Continual_Counting_with_MF}

\subsection{Related Work}

\paragraph{Upper bounds.}

The first method for continual counting, and also the first use of the continual observation model, was the binary tree mechanism \cite{dwork2010differential, chan2011private}. Omitting lower-order terms, it gives MaxSE $2\epsilon^{-2}\log_2(n)^3$ and MeanSE  $\epsilon^{-2}\log_2(n)^3$. There are several ways to improve it. First, one can generalize the binary tree mechanism to a $k$-ary tree, in which each node has $k$ children~\cite{qardaji_2013, cardoso2022differentially}. Optimizing over the choice of $k$ decreases the error. Honaker \cite{honaker2015efficient} proposed using a subtraction trick, which changes how the prefix sum is decomposed into a sum of subintervals. This work also discussed the idea of using multiple weighted measurements to construct an estimator, which was later generalized to computing an optimal decoding matrix $L$ based on the pseudoinverse of $R$. Andersson and Pagh \cite{andersson2023smooth} proposed a smooth binary tree mechanism that addresses the discrepancy between MaxSE and MeanSE, reducing the MaxSE. Combining the subtraction trick with $k$-ary trees, Andersson, Pagh, Steiner, and Torkamani \cite{andersson2024count} achieved the best known constants for the problem. Finally, their construction was further refined in the implementation code for the paper by Imola, Boninsegna, Keller, Aamand, Chowdhury, and Pagh~\cite{imola2026differentially} on private quantile estimation. In particular, the code exploits the idea of averaging multiple decompositions of the prefix interval, empirically improving the constants. However, this method has not been formally described elsewhere. With the authors' permission, we provide a description and error analysis for this method, which we refer to as the averaged $k$-ary tree with subtraction; see Subsection~\ref{sub:avaraged_k_ary}.

Matrix factorization has been proposed for general continual linear queries, with arbitrary weighted sums of the input stream, by Li, Miklau, Hay, McGregor, and Rastogi \cite{li2015matrix}. The optimal factorization was formulated as a semidefinite program with rank constraints, but without an upper bound on the running time. Numerically, we found it impractical to use, and instead used gradient optimization to find high-quality finite-dimensional factorizations.

We summarize the constants in Table~\ref{tab:tree-methods}.
The proofs of the error terms for the related works are scattered across the literature: the MeanSE for the $k$-ary trees with and without subtraction was computed explicitly in \cite{andersson2024count}, while the maximum squared error for the $k$-ary trees with subtraction was computed by Andersson, Jain, and Sivakumar \cite{andersson2026improved} in Theorem 9.16, taking $D=k=1$ in their notation. For completeness, we present the analysis of the missing entries in Appendix~\ref{appendix_table_analysis}.

\paragraph{Lower bounds.}
The best-known lower bounds for continual counting are of order $\Omega(\log^2(n)/\epsilon^2)$, both for the mean squared error \cite{henzinger2023almost} and for the maximum squared error \cite{dwork2010differential, edmonds2020power}. Similar bounds hold under approximate differential privacy as well. 
In this work, we establish $\Omega(\log^{3}(n)/\epsilon^{3})$ lower bounds in a specific setting  of factorizations with $0/1$ entries, using combinatorial techniques based on Bollob\'as-type inequalities~\cite{skew_bollobas}. To the best of our knowledge, this approach has not previously been used either in the context of matrix factorizations or  generally in the differential privacy literature.

In concurrent work, Bairaktari and Larsen~\cite{bairaktari2026binarytreemechanismoptimal} prove
a lower bound for pure DP continual counting of $\Omega(\epsilon^{-2}\log^3(n))$ for the squared $\ell_{\infty}$ metric. This metric is related to the metrics that we consider, $\MaxSE$ and $\MeanSE$. In general, being squared, the $\ell_{\infty}$ metric upper bounds $\MaxSE$ and $\MeanSE$, and therefore their bounds do not directly transfer to the settings considered here. Nevertheless, their result gives hope that a similar lower bound may eventually be proved for $\MaxSE$ and $\MeanSE$. 

\paragraph{Continual counting with approximate DP.}
The problem under approximate differential privacy has received, perhaps, even more attention in recent years due to its applications to model training via DP-SGD \cite{kairouz2021practical, Denisov, choquette2022multi, kalinin2024, kalinin2025back, kalinin2025learningrate, kalinin2026dplambdacgd, kalinin2026beyond}. For approximate DP, one commonly uses Gaussian noise for matrix factorization, which results in $\ell_2$ sensitivity rather than $\ell_1$ sensitivity. This setting is more theoretically amenable, since the costs corresponding to our $c_2$ and $c_F$, namely $\gamma_2$ and $\gamma_F$, obtained by replacing the $\|.\|_{1\to 1}$ norm with the $\|.\|_{1\to 2}$ norm, are in fact matrix norms. For these norms, the asymptotics are known and tight; moreover, the leading asymptotic term is known with the constant \cite{fichtenberger2022constant}. Several works have focused on refining the next asymptotic term \cite{henzinger2025improved}, with the best bounds achieved by Henzinger, Kalinin, and Upadhyay \cite{henzinger2025normalized}.

The recent intensive progress in the approximate DP setting motivates the study of constants in the pure DP setting, which has proved to be more challenging. We remark that the constructions used to optimize the $\gamma_2$ and $\gamma_F$ norms would give the wrong asymptotics if applied to the pure DP case, thus calling for fundamentally new constructions.

\subsection{Preliminaries}

\paragraph{Notation.}
For a matrix $A \in \R^{m \times n}$ and $p,q \in [1,\infty]$, we write
$\|A\|_{p \to q} := \sup_{\|x\|_p \le 1} \|Ax\|_q$ for the induced operator
norm from $\ell_p^n$ to $\ell_q^m$. In particular,
$\|A\|_{2 \to \infty} = \max_{i \in [m]} \|A_{i,:}\|_2$ and
$\|A\|_{1 \to 1} = \max_{j \in [n]} \|A_{:,j}\|_1$, where $A_{i,:}$ and
$A_{:,j}$ denote the $i$-th row and $j$-th column of $A$, respectively. We also
write $\|A\|_F := (\sum_{i,j} A_{ij}^2)^{1/2}$ for the Frobenius norm. We use $\log(.)$ for the natural logarithm and $\log_2(.)$ for the binary logarithm.
For two matrices \(A \in \mathbb{R}^{m \times n}\) and \(B \in \mathbb{R}^{p \times q}\), the Kronecker product of \(A\) and \(B\), denoted by \(A \otimes B\), is the
matrix in \(\mathbb{R}^{mp \times nq}\) defined by
\begin{equation}
A \otimes B =
\begin{pmatrix}
a_{11}B & a_{12}B & \cdots & a_{1n}B \\
a_{21}B & a_{22}B & \cdots & a_{2n}B \\
\vdots  & \vdots  & \ddots & \vdots  \\
a_{m1}B & a_{m2}B & \cdots & a_{mn}B
\end{pmatrix}.
\end{equation}

For the $d \times d$ identity matrix, we write $I_d$. If the dimension is clear from the context, we may use just $I$. The all-ones $m \times k$ matrix is denoted as $J_{m \times k}$, and the subscript is omitted if it is clear. We use the notation $\mathbf{1}_n = J_{n \times 1}$ and $\mathbf{1}_n^T = J_{1 \times n}$. 
Given a collection of matrices $M_1, \ldots, M_k$ that are not necessarily square, we denote the block-diagonal matrix formed with $M_i$ as $\diag(M_1, \ldots, M_k)$.

\paragraph{Differential privacy.}
We use the standard notion of pure differential privacy and the Laplace mechanism of Dwork, McSherry, Nissim, and Smith~\cite{dwork2006calibrating}. Let $\mathcal{X}$ be a data universe, and say that two datasets $D,D' \in \mathcal{X}^n$ are neighboring, written $D \sim D'$, if they differ in the data of at most one individual. A randomized mechanism $\mathcal{M} : \mathcal{X}^n \to \mathcal{Y}$ is $\epsilon$-differentially private if, for every $D \sim D'$ and every measurable $S \subseteq \mathcal{Y}$,
\begin{equation}
\Pr[\mathcal{M}(D) \in S]
\le
e^\epsilon \Pr[\mathcal{M}(D') \in S].
\end{equation}
For a query $f : \mathcal{X}^n \to \mathbb{R}^k$, its $\ell_1$-sensitivity is $\Delta_1(f) := \sup_{D \sim D'} \|f(D)-f(D')\|_1$. The Laplace mechanism outputs $f(D)+z$, where the coordinates of $z$ are independent $\mathrm{Laplace}(\Delta_1(f)/\epsilon)$ random variables, and the resulting mechanism is $\epsilon$-differentially private.

\section{Upper bound construction}

We start with two inequalities that are going to be fundamental in this paper.
\begin{theorem}
    \label{th_fundamental_inequalities}
    For any positive integers $n$, $k$, we have
    \begin{equation}
        \begin{split}
            &c_2(T_{2kn + n + k})^{2/3} \leq c_2(T_n)^{2/3} + c_2(T_k)^{2/3}, \\
            &c_F(T_{2kn + n + k})^{2/3} \leq c_F(T_n)^{2/3} + c_F(T_k)^{2/3}.
        \end{split}
    \end{equation}
\end{theorem}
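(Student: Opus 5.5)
We build an explicit factorization of $T_N$, $N = 2kn+n+k$, from near-optimal factorizations $AB = T_n$ and $CD = T_k$. The construction is a two-level hierarchical scheme with a subtraction trick at the fine level. We then balance the two levels with scalar weights $\alpha,\beta>0$ and optimize via Hölder's inequality. Since $c_2$ and $c_F$ are infima, it suffices to show, for arbitrary factorizations $AB=T_n$ and $CD=T_k$, that some $LR=T_N$ satisfies
\[
\|L\|_{2\to\infty}\|R\|_{1\to1}\le\bigl((\|A\|_{2\to\infty}\|B\|_{1\to1})^{2/3}+(\|C\|_{2\to\infty}\|D\|_{1\to1})^{2/3}\bigr)^{3/2},
\]
and the analogous bound with the normalized Frobenius quantities.

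\emph{Layout.} Index positions $1,\dots,N$ and place $n$ ``centers'' at $c_i = i(2k+1)$ for $i=1,\dots,n$, with a virtual center $c_0=0$. Then $N = c_n + k$. The non-center positions split into $2n+1$ windows of length $k$:
\begin{itemize}
\item a right window $(c_i, c_i+k]$ for each $i=0,\dots,n$;
\item a left window $(c_i-k-1, c_i-1]$ for each $i=1,\dots,n$.
\end{itemize}
Define coarse sums $y_i=\sum_{c_{i-1}<j\le c_i}x_j$, so that $\sum_{j\le c_i}x_j=(T_ny)_i$. The measurement matrix $R$ stacks two blocks.
\begin{itemize}
\item $\alpha\, B\,P$, where $P\in\{0,1\}^{n\times N}$ maps $x\mapsto y$.
\item $\beta\,\diag(D,\dots,D)$ acting on the windows, one copy per window. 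On a right window, $D$ acts in natural order, so $C$ recovers prefix sums within the window. On a left window, $D$ acts in reversed order, so $C$ recovers suffix sums within the window ($J T_k J = T_k^{T}$, and reversal preserves all norms involved).
\end{itemize}
Decoding at time $t$ proceeds by cases.
\begin{itemize}
\item If $t=c_i$, use the coarse row $\alpha^{-1}A_{i,:}$.
\item If $t=c_i+s$ lies in a right window, $1\le s\le k$, use $\alpha^{-1}A_{i,:}$ (absent for $i=0$) plus $\beta^{-1}C_{s,:}$ on that window's block.
\item If $t=c_i-s$ lies in a left window, $1\le s\le k$, use $\alpha^{-1}A_{i,:}$ minus $\beta^{-1}C_{s,:}$ on that window's reversed block. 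This subtracts the sum of the $s$ positions $(c_i-s-1, c_i-1]$ together with the center; equivalently, include the center $c_i$ as the last element of the left window's suffix. The index bookkeeping ($s$ elements versus $s+1$) is the one detail to double-check, and can be fixed by letting the left window be $[c_i-k, c_i)$ and including $c_i$ in the coarse block only.
\end{itemize}
One then checks $LR=T_N$ row by row, which is exactly the identity ``prefix to the center $\pm$ partial window sum''.

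\emph{Norms.} Each $x_j$ lies in at most one coarse interval and at most one window. Hence each column of $R$ has $\ell_1$ norm at most $\alpha\|B\|_{1\to1}+\beta\|D\|_{1\to1}$. The row and column supports of the two blocks are disjoint, so every row of $L$ satisfies
\[
\|L_{t,:}\|_2^2\le \frac{\|A\|_{2\to\infty}^2}{\alpha^2}+\frac{\|C\|_{2\to\infty}^2}{\beta^2}.
\]
For the Frobenius cost, each row of $A$ is reused at most $2k+1$ times, and each row of $C$ at most $2n+1$ times (once per window). Therefore
\[
\|L\|_F^2\le (2k+1)\frac{\|A\|_F^2}{\alpha^2}+(2n+1)\frac{\|C\|_F^2}{\beta^2}\le N\Bigl(\frac{\|A\|_F^2}{n\alpha^2}+\frac{\|C\|_F^2}{k\beta^2}\Bigr),
\]
using $(2k+1)n\le N$ and $(2n+1)k\le N$.

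\emph{Optimization.} In both cases we must minimize
\[
\Phi(\alpha,\beta)=\sqrt{p^2/\alpha^2+q^2/\beta^2}\,\bigl(\alpha u+\beta v\bigr)
\]
over $\alpha,\beta>0$.
\begin{itemize}
\item For $c_2$: $p=\|A\|_{2\to\infty}$, $q=\|C\|_{2\to\infty}$, $u=\|B\|_{1\to1}$, $v=\|D\|_{1\to1}$.
\item For $c_F$: $p=\|A\|_F/\sqrt n$ and $q=\|C\|_F/\sqrt k$, with $u,v$ as before.
\end{itemize}
$\Phi$ is scale-invariant, and Hölder's inequality with exponents $(3,3/2)$ gives $\min\Phi = \bigl((pu)^{2/3}+(qv)^{2/3}\bigr)^{3/2}$, attained at $\alpha^3\propto p^2/u$ and $\beta^3\propto q^2/v$. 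Taking infima over the base factorizations yields both inequalities.

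I expect the main obstacle to be the exact bookkeeping in the layout. The windows must tile $[N]$ with exactly the count $N=2kn+n+k$, the left-window subtraction must produce the correct offsets, and the row-reuse counts must fit under $N$ in the Frobenius bound. The norm and Hölder steps are routine.
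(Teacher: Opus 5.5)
Your construction is the paper's $\lozenge$ operation with the roles of $n$ and $k$ exchanged. The paper puts the $T_k$ factorization on the coarse level ($k$ centres spaced $2n+1$ apart, via $C\otimes\mathbf{1}_{2n+1}$) and the $T_n$ factorization on $2k+1$ fine blocks; you use $T_n$ coarsely and $T_k$ on $2n+1$ windows. Since $N=2kn+n+k$ and both right-hand sides are symmetric in $(n,k)$, the swap is harmless. The rest of your argument is correct: the accounting (each row of $A$ used $2k+1$ times, each row of $C$ used $2n+1$ times, $(2k+1)n\le N$, $(2n+1)k\le N$, column norms at most $\alpha\|B\|_{1\to1}+\beta\|D\|_{1\to1}$), and the H\"older optimization, which replaces the paper's derivative computation for a single $\alpha$.

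The step that fails as written is the left window, and the repair you commit to does not fix it. The sets $[c_i-k,c_i)$ and $(c_i-k-1,c_i-1]$ are the same set $\{c_i-k,\dots,c_i-1\}$. With that window and the centre in the coarse block only, the row $\alpha^{-1}A_{i,:}-\beta^{-1}C_{s,:}$ for $t=c_i-s$ outputs $\sum_{j\le c_i}x_j-\sum_{j=c_i-s}^{c_i-1}x_j=\sum_{j\le t-1}x_j+x_{c_i}$ rather than $\sum_{j\le t}x_j$. What must be subtracted is $\sum_{j\in(t,c_i]}x_j$, which contains $x_{c_i}$, and in your layout the centre carries no fine measurement. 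Simply appending the centre to the window would instead require a factorization of $T_{k+1}$.

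The correct layout shifts the left window by one. Place the reversed copy of $D$ on $W_i=\{c_i-k+1,\dots,c_i\}$, so the centre lies in both its coarse interval and $W_i$, and position $c_i-k$ lies in no window. Then:
\begin{itemize}
\item the length-$s$ suffix of $W_i$ is exactly $(t,c_i]$ for $s=1,\dots,k$;
\item the windows remain pairwise disjoint, since the right window of $c_{i-1}$ ends at $c_i-k-1$, and there are still $2n+1$ of them;
\item every column still meets at most one coarse interval and at most one window, so all your norm bounds go through unchanged.
\end{itemize}
This is exactly what the paper encodes, in its own labeling where $(A,B)$ is the fine factorization, by $-U_{n+1}=(-\overline{A})\,\overline{B}$ with $\overline{B}=(0_{r\times1}\;BE)$. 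The zero column is the uncovered position, and the last column of $BE$ sits on the centre. Your parenthetical about including $c_i$ as the last element of the left window's suffix points the right way, but it must be realized by this shift, not by the fix you state.
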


\begin{figure}[!h]
    \centering
    \resizebox{1.\textwidth}{!}{
\begin{tikzpicture}[
    line/.style={draw=black, line width=1.1pt},
    mat/.style={line, minimum width=3.2cm, minimum height=3.2cm},
    font=\large
]

\begin{scope}[shift={(0,0)}]
    \fill[gray!25] (0,0) -- (5,0) -- (0,5) -- cycle;
    \draw[line] (0,0) rectangle (5,5);
    \draw[line] (0,5) -- (5,0);

    \node at (1.5,1.5) {$1$};
    \node at (3.5,3.5) {$0$};
\end{scope}

\node at (5.5,2.5) {$=$};

\begin{scope}[shift={(6,0)}]
    \draw[line] (0,0) rectangle (5,5);

    \fill[gray!25] (0,0) rectangle (2,4);
    \fill[gray!25] (0,0) rectangle (4,2);

    \draw[line] (0,4) -- (2,4) -- (2,2) -- (4,2) -- (4,0) -- (0,0) -- (0,5);

    \draw[line, dashed] (0,5) -- (5,0);

    \node at (1.2,1.2) {$1$};
    \node at (3.8,3.8) {$0$};
\end{scope}

\node at (11.5,2.5) {$+$};

\begin{scope}[shift={(12,0)}]
    \fill[gray!25] (0,4) -- (1,4) -- (0,5) -- cycle;
    \fill[blue!25] (2,4) -- (1,4) -- (2,3) -- cycle;
    \fill[gray!25] (2,2) -- (3,2) -- (2,3) -- cycle;
    \fill[blue!25] (4,2) -- (3,2) -- (4,1) -- cycle;
    \fill[gray!25] (4,0) -- (5,0) -- (4,1) -- cycle;
    
    \draw[line] (0,0) rectangle (5,5);

    \draw[line] (0,5) -- (5,0);

    \draw[line] (0,4) -- (2,4) -- (2,2) -- (4,2) -- (4,0);
    
    \node at (0.3,4.3) {\small $1$};
    \node at (1.7,3.7) {\small$ -1$};
    \node at (2.3,2.3) {\small$ 1$};
    \node at (3.7,1.7) {\small$ -1$};
    \node at (4.3,0.3) {\small$ 1$};

    \node at (1.2,1.2) {$0$};
    \node at (3.8,3.8) {$0$};
\end{scope}

\end{tikzpicture}
}
    \caption{Decomposition of $T_{2kn+n+k}$ into two matrices, example for $k=2$.}
    \label{fig_tn_decomposition}
\end{figure}

To establish these inequalities, we will describe a way to construct a factorization of $T_{2kn + n + k}$ with low cost given factorizations of $T_n$ and $T_k$. The definition below is designed with Figure~\ref{fig_tn_decomposition} in mind.

\begin{definition}
    Given a positive parameter $\alpha$, we define an operation $\lozenge_\alpha$: given two pairs of matrices $(A, B)$ and $(C, D)$, where $A \in \R^{n \times r}$, $B \in \R^{r \times n}$, $C \in \R^{k \times s}$, $D \in \R^{s \times k}$, we have $(A, B) \lozenge_\alpha (C, D) = (L, R)$, where $L \in \R^{(2kn+n+k) \times (2kr+r+s)}$ and $R \in \R^{(2kr+r+s) \times (2kn+n+k)}$ are matrices constructed as follows. 

    First, we introduce matrices $\widetilde{A} \in \mathbb{R}^{(2n+1) \times 2r}$ and $\widetilde{B} \in \mathbb{R}^{2r \times (2n+1)}$, defined as 
    \begin{equation}
        \widetilde{A} = \begin{pmatrix}
            -EA & 0_{n \times r} \\ 0_{1 \times r} & 0_{1 \times r} \\ 0_{n \times r} & A
        \end{pmatrix},\quad
        \widetilde{B} = \begin{pmatrix}
            B & 0_{r \times 1} & 0_{r \times n} \\ 0_{r \times n} & 0_{r \times 1} & BE
        \end{pmatrix},
    \end{equation}
    where $E$ is an $n \times n$ matrix with ones on the anti-diagonal: $E_{ij} = \mathbf{1}\{i+j = n+1\}$. Then we define
    \begin{equation}
        L = \begin{pmatrix}
            A & 0_{n \times 2rk} & 0_{n \times s}\\
            0_{k(2n+1) \times r} & I_{k} \otimes \widetilde{A} & \alpha C \otimes \mathbf{1}_{2n+1}
        \end{pmatrix},
        \quad
        R = \begin{pmatrix}
            I_k \otimes \widetilde{B} & 0_{2rk \times n} \\
            0_{r \times k(2n+1)} & B \\
            \frac{1}{\alpha} D \otimes \mathbf{1}_{2n+1}^T & 0_{s \times n}
        \end{pmatrix}.
    \end{equation}
\end{definition}

Notice that the matrix $EA$ is $A$ with rows in inverse order, and $BE$ is $B$ with columns in inverse order. The definition above may look somewhat complicated, but it makes sense in the context of matrix factorization, since it allows us to factorize $T_{2kn+n+k}$ given factorizations for $T_n$ and $T_k$, as shown in the following lemma.
\begin{lemma}
    Let $AB = T_n$ and $CD = T_k$, and let $(L, R) := (A, B) \lozenge_\alpha (C, D)$ for any positive $\alpha$. Then $LR = T_{2kn+n+k}$.
\end{lemma}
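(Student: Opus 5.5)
The plan is to multiply the block matrices directly and compare the result with the decomposition of $T_{2kn+n+k}$ in Figure~\ref{fig_tn_decomposition}. Index the $N = 2kn+n+k$ rows and columns as follows. The first $k(2n+1)$ coordinates form $k$ consecutive blocks of size $2n+1$. The last $n$ coordinates form a final block. Note that the row partition of $L$ (first $n$ rows, then $k(2n+1)$ rows) is the reverse of the column partition of $R$ (first $k(2n+1)$ columns, then $n$ columns). So I will read $T_N$ in the ordering where the stream is split into $k$ blocks of length $2n+1$ followed by a tail of length $n$. I must double-check this orientation against the figure. If it fails, I will use the convention for which the top-left block $AB = T_n$ sits on the diagonal. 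Fixing this index bookkeeping is the main obstacle; everything else is blockwise multiplication.

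\textbf{Step 1: block product.} Multiplying gives the following blocks of $LR$.
\begin{itemize}
\item The rows coming from $A$ pick up only the row block $(0, B)$ of $R$, contributing $AB = T_n$ in the corresponding diagonal block.
\item The remaining rows give
\[
(I_k\otimes \widetilde A)(I_k\otimes\widetilde B) + (C\otimes \mathbf{1}_{2n+1})(D\otimes \mathbf{1}_{2n+1}^T),
\]
where the factor $\alpha$ cancels against $1/\alpha$.
\item By the mixed-product rule this equals $I_k\otimes(\widetilde A\widetilde B) + (CD)\otimes J_{2n+1} = I_k\otimes(\widetilde A\widetilde B) + T_k\otimes J_{2n+1}$.
\item The off-diagonal blocks vanish because of the zero blocks.
\end{itemize}

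\textbf{Step 2: the local matrix.} Compute
\[
\widetilde A\widetilde B = \diag(-EAB,\,0,\,ABE) = \diag(-ET_n,\,0,\,T_nE).
\]
Here $ET_n$ is the upper-left triangle including the antidiagonal, and $T_nE$ is the lower-right triangle including the antidiagonal. Therefore $J_{2n+1}+\widetilde A\widetilde B$ has the following form:
\begin{itemize}
\item the top $n$ rows are zero on and right of position $n+1-i$ within the first $n$ columns;
\item the middle row is all ones up to column $n+1$;
\item the bottom block has extra ones.
\end{itemize}
These corrections are exactly the $\pm1$ triangles of Figure~\ref{fig_tn_decomposition}: the staircase "$T_k\otimes J$" plus the triangle corrections reconstructs the lower-triangular pattern. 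Within each diagonal $(2n+1)$-block, the sum $J - ET_n$ on the first $n\times n$ part must be checked to equal $T_n$ restricted appropriately, after a careful reordering. Since $ET_n$ has ones where $i+j\ge n+1$... it should instead be compared elementwise with the anti-triangle description. I will verify entrywise that, on each diagonal block, $J_{2n+1} + \diag(-ET_n,0,T_nE)$ equals $T_{2n+1}$ in the chosen orientation (perhaps after exchanging the roles of the top-left and bottom-right pieces). On off-diagonal blocks below the diagonal, the value $1$ comes from $T_k\otimes J$.

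\textbf{Step 3: assembly.} Combining the block-diagonal $T_{2n+1}$ pieces, the all-ones sub-diagonal blocks, and the $T_n$ tail block gives $T_N$. This uses the identity that $T_N$ decomposes as a block lower-triangular matrix with diagonal blocks $T_{2n+1}$ (and $T_n$) and all-ones strict lower blocks. The one remaining subtlety is which blocks couple with the $n$-sized tail. In the product, the tail block has no coupling with the others, so the $n$-block must be positioned so that its all-ones interactions are absorbed by the triangles $T_nE$ and $-ET_n$ of the adjacent blocks. Confirming this placement entrywise, using $(ET_n)_{ij} = \mathbf 1\{i+j\ge n+1\}$ and $(T_nE)_{ij}=\mathbf 1\{i+j\ge n+1\}$, is the step I expect to require the most care.
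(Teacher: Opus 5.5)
Your proposal has a genuine gap: Step 1 multiplies $L$ and $R$ blockwise as if the column partition of $L$, namely $(r,\,2rk,\,s)$, matched the row partition of $R$, namely $(2rk,\,r,\,s)$. It does not, so the block products you list are not the blocks of $LR$.

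Taken literally, pairing $A$ with the row block $(0,\,B)$ would put $AB$ in rows $1,\dots,n$ and columns $k(2n+1)+1,\dots,N$, where $N=2kn+n+k$. That is the top-right corner, which is why you end up wanting to ``reorder'' $T_N$. But the lemma asserts the exact identity $LR=T_N$, so no reindexing is available.

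The product actually pairs blocks with a shift of $r$:
\begin{itemize}
\item The first $r$ columns of $L$ (the block $A$) meet the first $r$ rows of $R$, which are the $B$ inside the first copy of $\widetilde B$.
\item The $-EA$ block of the $j$-th copy of $\widetilde A$ meets the $BE$ block of the $j$-th copy of $\widetilde B$.
\item The $A$ block of the $j$-th $\widetilde A$ meets the $B$ block of the $(j+1)$-st $\widetilde B$, or the lone $B$ when $j=k$.
\end{itemize}
This re-pairing is the missing idea. The paper encodes it by rewriting these parts of $L$ and $R$ as $\diag(A,-\overline A,A,\dots,A)$ and $\diag(B,\overline B,B,\dots,B)$. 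Their product is $\diag(T_n,-U_{n+1},T_n,\dots,T_n)$, where $U_m=J_{m\times m}-T_m$, because $\overline A\,\overline B$ is $EABE=ET_nE=T_n^T$ padded to $U_{n+1}$.

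Consequently $\widetilde A\widetilde B$ never occurs in $LR$, and the identity you plan to verify in Step 2 is false. The matrix $J_{2n+1}+\diag(-ET_n,0,T_nE)$ has the entry $2$ at position $(2n+1,2n+1)$, since $(T_nE)_{nn}=1$. It also has ones strictly above the diagonal in its top-right block. So it is not $T_{2n+1}$ in any orientation. Note also that $(ET_n)_{ij}=\mathbf 1\{i+j\le n+1\}$, not $\ge$ as written in your Step 3.

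Your mixed-product computation $(C\otimes\mathbf 1_{2n+1})(D\otimes\mathbf 1_{2n+1}^T)=T_k\otimes J_{2n+1}$ is correct, but its placement is not. This term occupies rows $n+1,\dots,N$ and columns $1,\dots,k(2n+1)$. So the staircase is shifted down by $n$ rows relative to the main diagonal and contains ones on and above it, e.g.\ at $(n+1,2n+1)$. These are exactly the entries the $-U_{n+1}$ blocks cancel.

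The identity you actually need is $T_N=\begin{pmatrix}0&0\\ T_k\otimes J_{2n+1}&0\end{pmatrix}+\diag(T_n,-U_{n+1},T_n,\dots,T_n)$, with diagonal block sizes $n,n+1,n,\dots,n$. It is not a block-lower-triangular split into aligned $(2n+1)$-blocks plus an $n$-tail.
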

\begin{proof}
    Let us denote $U_m = J_{m \times m} - T_m$, so $(U_m)_{ij} = \mathbf{1}\{i < j\}$.
    Observe that 
    \begin{equation}
        \label{eq_t_decomposition}
        T_{2kn + n + k} = 
        \begin{pmatrix}
            0_{n \times k(2n+1)} & 0_{n \times n} \\
            T_k \otimes J_{(2n+1) \times (2n+1)} & 0_{k(2n+1) \times n}
        \end{pmatrix} +
        \diag(\underbrace{T_n, -U_{n+1}, T_n, -U_{n+1}, \ldots, T_n}_{2k+1 \text{ blocks}}).
    \end{equation}
    This equality is illustrated in Figure~\ref{fig_tn_decomposition}.

    The first term on the right-hand side of \eqref{eq_t_decomposition} equals $\begin{pmatrix}
        0_{n \times s} \\
        \alpha C \otimes \mathbf{1}_{2n+1}
    \end{pmatrix}
    \begin{pmatrix}
        \frac{1}{\alpha} D \otimes \mathbf{1}_{2n+1}^T & 0_{s \times n}
    \end{pmatrix}$ for any non-zero real $\alpha$, since $(C \otimes \mathbf{1}_{2n+1}) (D \otimes \mathbf{1}_{2n+1}^T) = T_k \otimes J_{(2n+1) \times (2n+1)}$.
    
    Consider the second term on the right-hand side of \eqref{eq_t_decomposition}. Let us introduce the matrices $\overline{A} \in \mathbb{R}^{(n+1) \times r}$ and $\overline{B} \in \mathbb{R}^{r \times (n+1)}$ as $\overline{A} = \begin{pmatrix} EA \\ 0_{1 \times r} \end{pmatrix}$ and $\overline{B} = \begin{pmatrix} 0_{r \times 1} & BE \end{pmatrix}$. In this notation, we have $\widetilde{A} = \diag(-\overline{A}, A)$, $\widetilde{B} = \diag(B, \overline{B})$. Moreover, 
    \begin{equation}
    \begin{split}
        \begin{pmatrix}
            A & 0_{n \times 2kr}\\
            0_{k(2n+1) \times r} & I_{k} \otimes \widetilde{A}
        \end{pmatrix} &= 
        \diag(\underbrace{A, -\overline{A}, A, -\overline{A}, \ldots, A}_{2k+1 \text{ blocks}}), \\
        \begin{pmatrix}
            I_k \otimes \widetilde{B} & 0_{2kr \times n} \\
            0_{r \times k(2n+1)} & B
        \end{pmatrix} &=
        \diag(\underbrace{B, \overline{B}, B, \overline{B}, \ldots, B}_{2k+1 \text{ blocks}}).
    \end{split}
    \end{equation}

    We claim $\overline{A} \; \overline{B} = U_{n+1}$. Indeed, we have $(EABE)_{ij} = (AB)_{n-i+1, n-j + 1} = (T_n)_{n-i+1, n-j + 1} = (T_n^T)_{ij}$. Together with zero paddings, we have $\overline{A} \; \overline{B} = \begin{pmatrix} 0_{n \times 1} & T_n^T \\ 0_{1 \times 1} & 0_{1 \times n} \end{pmatrix} = U_{n+1}$.
    Therefore, for our second term, we have 
    \begin{equation}
        \diag(\underbrace{T_n, -U_{n+1}, T_n, -U_{n+1}, \ldots, T_n}_{2k+1 \text{ blocks}}) = \diag(\underbrace{A, -\overline{A}, A, -\overline{A}, \ldots, A}_{2k+1 \text{ blocks}}) \times \diag(\underbrace{B, \overline{B}, B, \overline{B}, \ldots, B}_{2k+1 \text{ blocks}}).
    \end{equation}

    Finally, use the stacking trick: if we have any two matrix factorizations $M_1 = L_1 R_1$ and $M_2 = L_2 R_2$, then $M_1 + M_2 = \begin{pmatrix}
        L_1 & L_2
    \end{pmatrix}
    \begin{pmatrix}
        R_1 \\
        R_2
    \end{pmatrix}$.
    Thus, with $L$ obtained by horizontally stacking $\diag(\underbrace{A, -\overline{A}, \ldots, A}_{2k+1 \text{ blocks}})$ and $\begin{pmatrix}
        0_{n \times s} \\
        \alpha C \otimes \mathbf{1}_{2n+1}
    \end{pmatrix}$, and $R$ obtained by vertically stacking $\diag(\underbrace{B, \overline{B}, \ldots, B}_{2k+1 \text{ blocks}})$ and $\begin{pmatrix}
        \frac{1}{\alpha} D \otimes \mathbf{1}_{2n+1}^T & 0_{s \times n}
    \end{pmatrix}$, we get a factorization of $T_{2kn+n+k}$, concluding the proof.
\end{proof}

\paragraph{Proof of Theorem~\ref{th_fundamental_inequalities}.}
Let us analyze the quality of the factorization produced by the new $\lozenge_\alpha$ factorization. Let $AB = T_n$ and $CD = T_k$, and let $(L, R) := (A, B) \lozenge_\alpha (C, D)$. We have shown that $LR = T_{2kn + n + k}$. Clearly $\|A\|_{2 \to \infty} = \|EA\|_{2 \to \infty}$, $\|A\|_{F} = \|EA\|_{F}$, and $\|B\|_{1 \to 1} = \|BE\|_{1 \to 1}$. We observe the following:
\begin{equation}
    \begin{split}
        \|L\|_{2 \to \infty} &= \sqrt{\|A\|_{2 \to \infty}^2 + \alpha^2 \|C\|_{2 \to \infty}^2}, \\
        \frac{1}{\sqrt{2kn+n+k}} \|L\|_F &= \sqrt{\frac{(2k+1) \|A\|_F^2 + (2n+1) \alpha^2 \|C\|_F^2}{2kn+n+k}} < \sqrt{\frac{\|A\|_F^2}{n} + \frac{\alpha^2 \|C\|_F^2}{k}}, \\
        \|R\|_{1 \to 1} &= \|B\|_{1 \to 1} + \frac{1}{\alpha} \|D\|_{1 \to 1}.
    \end{split}
\end{equation}
Then we can compute the $LR$ factorization costs:
\begin{equation}
    \label{eq_quality_with_alpha}
    \begin{split}
        \|L\|_{2 \to \infty} \|R\|_{1 \to 1} &= \sqrt{\|A\|_{2 \to \infty}^2 + \alpha^2 \|C\|_{2 \to \infty}^2} \left( \|B\|_{1 \to 1} + \frac{1}{\alpha} \|D\|_{1 \to 1} \right), \\
        \frac{1}{\sqrt{2kn + n + k}} \|L\|_F \|R\|_{1 \to 1} &< \sqrt{\frac{\|A\|_F^2}{n} + \frac{\alpha^2 \|C\|_F^2}{k}} \left( \|B\|_{1 \to 1} + \frac{1}{\alpha} \|D\|_{1 \to 1} \right).
    \end{split}
\end{equation}
We are free to choose the parameter $\alpha$ as we wish. In particular, we can choose values that minimize the right-hand sides of the bounds in \eqref{eq_quality_with_alpha}. Setting the derivative to zero and carrying out a straightforward calculation yields the optimal values
\begin{equation}
    \label{eq_optimal_alpha}
    \begin{split}
        \alpha &= \left( \frac{\|A\|_{2 \to \infty}^2 \|D\|_{1 \to 1}}{\|C\|_{2 \to \infty}^2 \|B\|_{1 \to 1}} \right)^{1/3} \text{ for the $c_2$-cost, and} \\ 
        \alpha &= \left( \frac{\frac{1}{n}\|A\|_{F}^2 \|D\|_{1 \to 1}}{\frac{1}{k}\|C\|_{F}^2 \|B\|_{1 \to 1}} \right)^{1/3} \text{ for the $c_F$-cost}.
    \end{split}
\end{equation}
Plugging in these values into inequalities \eqref{eq_quality_with_alpha}, we get
\begin{equation}
    \label{eq_lozenge_cost_bounds}
    \begin{split}
        \|L\|_{2 \to \infty} \|R\|_{1 \to 1} &\leq \left( \left( \|A\|_{2 \to \infty} \|B\|_{1 \to 1} \right)^{2/3} + \left( \|C\|_{2 \to \infty} \|D\|_{1 \to 1} \right)^{2/3} \right)^{3/2},\\
        \frac{1}{\sqrt{2kn+n+k}} \|L\|_F \|R\|_{1 \to 1} &\leq \left( \left( \frac{1}{\sqrt{n}}\|A\|_{F} \|B\|_{1 \to 1} \right)^{2/3} + \left( \frac{1}{\sqrt{k}}\|C\|_{F} \|D\|_{1 \to 1} \right)^{2/3} \right)^{3/2}.
    \end{split}
\end{equation}
From now on, we will write $(A, B) \lozenge(C, D)$ without $\alpha$ in the subscript, meaning that $\alpha$ is chosen optimally for either $c_2$-cost or $c_F$-cost, when the cost is clear from the context. 

Theorem~\ref{th_fundamental_inequalities} is almost immediately implied. Suppose $AB$ and $CD$ are factorizations of $T_n$ and $T_k$ respectively that are arbitrarily close to the corresponding infimums, so
$(\|A\|_{2 \to \infty} \|B\|_{1 \to 1})^{2/3} \leq c_2(T_n)^{2/3} + \varepsilon$, and $(\|C\|_{2 \to \infty} \|D\|_{1 \to 1})^{2/3} \leq c_2(T_k)^{2/3} + \varepsilon$ for a positive $\varepsilon$. Then, for $(L, R) = (A, B) \lozenge (C, D)$, we have $(\|L\|_{2 \to \infty} \|R\|_{1 \to 1})^{2/3} \leq c_2(T_n)^{2/3} + c_2(T_k)^{2/3} + 2 \varepsilon$. Taking $\varepsilon$ arbitrarily small, we get $c_2(T_{2kn + n + k})^{2/3} \leq c_2(T_n)^{2/3} + c_2(T_k)^{2/3}$, as claimed in Theorem~\ref{th_fundamental_inequalities}. The $c_F$ cost is similar.

Now we have a tool to construct factorizations with good asymptotic behavior.
\begin{construction}
    \label{construction}
    Given a factorization $AB = T_k$ for fixed $k$, we construct a factorization $LR = T_n$ for arbitrary $n$ as follows:
    Let $(L',R') = \underbrace{(A,B) \lozenge \ldots \lozenge (A, B)}_{p \text{ times}}$, where the operations are performed left to right, and $p$ is the smallest number that yields the size of $L'R'$ greater than or equal to $n$. Remove rows of $L'$ after the $n$-th row to get $L$, and remove columns of $R'$ after the $n$-th column to get $R$.
\end{construction}

\begin{theorem}
    \label{th_costs_for_nonround_n}
    If there exists a factorization $AB = T_k$ for fixed $k$, such that 
    \begin{equation}
    \begin{split}
        \|A\|_{2 \to \infty} \|B\|_{1 \to 1} = C_2 (\log_2 k)^{3/2}, \\
        \frac{1}{\sqrt{k}}\|A\|_{F} \|B\|_{1 \to 1} = C_F (\log_2 k)^{3/2},
    \end{split}
    \end{equation}
    then for any $n \in \mathbb{N}$, Construction~\ref{construction} yields factorizations $LR = T_n$ such that
    \begin{equation}
        \begin{split}
        \|L\|_{2 \to \infty} \|R\|_{1 \to 1} &\leq C_2 \left( \frac{\log_2 k}{\log_2 (2k+1)} \right)^{3/2} (\log_2 n)^{3/2}(1+o(1)),\\
        \frac{1}{\sqrt{n}}\|L\|_{F} \|R\|_{1 \to 1} &\leq C_F \left( \frac{\log_2 k}{\log_2 (2k+1)} \right)^{3/2} (\log_2 n)^{3/2}(1+o(1)).
    \end{split}
    \end{equation}
\end{theorem}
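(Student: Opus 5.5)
We track the sizes and costs along the iteration of Construction~\ref{construction}. Let $(L_p,R_p)$ be the $p$-fold product, of size $n_p\times n_p$, with $n_1=k$ and, since the operation is applied as $(L_{p-1},R_{p-1})\lozenge(A,B)$ with the accumulated factorization in the role of $(A,B)$ (size $n$) and the base in the role of $(C,D)$ (size $k$), we get $n_p = 2k n_{p-1}+n_{p-1}+k=(2k+1)n_{p-1}+k$. Solving the recursion gives $n_p+\tfrac12=(2k+1)^{p-1}(k+\tfrac12)$, so $n_p=\Theta((2k+1)^p)$ and $\log_2 n_p = p\log_2(2k+1)+O(1)$.

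Next, write $a_p = (\|L_p\|_{2\to\infty}\|R_p\|_{1\to1})^{2/3}$. By the bound \eqref{eq_lozenge_cost_bounds} with the optimal $\alpha$, $a_p\le a_{p-1}+a_1$, hence $a_p\le p\,a_1 = p\,C_2^{2/3}\log_2 k$. The same argument with the normalized Frobenius quantity $b_p=(n_p^{-1/2}\|L_p\|_F\|R_p\|_{1\to1})^{2/3}$ gives $b_p\le p\,C_F^{2/3}\log_2 k$. Therefore $\|L_p\|_{2\to\infty}\|R_p\|_{1\to1}\le C_2 (\log_2 k)^{3/2}p^{3/2}$, and similarly for the Frobenius cost.

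Now take general $n$ and let $p$ be minimal with $n_p\ge n$. Then $n_{p-1}<n$, so $p-1< \log_2 n/\log_2(2k+1)+O(1)$, i.e.\ $p\le \frac{\log_2 n}{\log_2(2k+1)}(1+o(1))$. Truncation to the first $n$ rows of $L_p$ and columns of $R_p$ gives $LR=T_n$, because $T_{n_p}$ is lower triangular: the leading $n\times n$ block of $L_pR_p$ equals the first $n$ rows of $L_p$ times the first $n$ columns of $R_p$, which is exactly $T_n$. Deleting rows cannot increase $\|\cdot\|_{2\to\infty}$ and deleting columns cannot increase $\|\cdot\|_{1\to1}$, so the $c_2$ bound follows immediately by substituting the bound on $p$.

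The main obstacle is the Frobenius case, because truncation keeps only $n$ of the $n_p$ rows while we normalize by $\sqrt n$ rather than $\sqrt{n_p}$, and $n_p/n$ could be as large as about $2k+1$. The crude bound $\|L\|_F^2\le\|L_p\|_F^2$ loses a constant factor, which is not acceptable. To handle this, I would exploit the block structure: the first $n_{p-1}$ rows of $L_p$ are $[L_{p-1},0,0]$, and the remaining rows come in $2k+1$ consecutive blocks, each carrying a copy of $L_{p-1}$ (or its reversal) plus $\alpha$ times a row of $A$. Therefore the squared norm of any prefix of $m$ rows is bounded by roughly $\frac{m}{n_p}\|L_p\|_F^2$ up to lower-order error, by induction on $p$, controlling at each level a partially included block via the same prefix statement one level down. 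If this uniform prefix-averaging is too delicate, an alternative is to use the recursion one level lower: choose $p$ so that $n_p\ge n$ and apply the prefix argument only at the outermost level, where full blocks contribute exactly their average and the single partial block contributes $O(n_{p-1})$ rows, which is an $O(1/k)$ fraction, and then bound it through the $c_2$-type row norms, which are of the same order. Either way the result is $\frac1n\|L\|_F^2\le\frac{1}{n_p}\|L_p\|_F^2(1+o(1))$, which yields the claimed $c_F$ bound.
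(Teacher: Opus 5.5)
Your size recursion, the $c_2$ bound via truncation, and the reduction of the $c_F$ case to $\frac1n\|L\|_F^2\le\frac1{n_p}\|L_p\|_F^2(1+o(1))$ all match the paper. Your first route for that inequality, a prefix-discrepancy induction through the block structure of $L_p$, is also the paper's strategy. The gap is that ``bounded by roughly $\frac{m}{n_p}\|L_p\|_F^2$ up to lower-order error'' is exactly what has to be proved, and it rests on two facts you never establish.

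\textbf{First fact: a uniform bound on $\alpha_q$.} At level $q+1$ the copies of the squared row norms $u^{(q)}$ of $L_q$ are shifted by $\alpha_q^2\|A_{\ell,:}\|_2^2$, where $\alpha_q^3=\mu_q\|B\|_{1\to1}/(\frac1k\|A\|_F^2\|R_q\|_{1\to1})$ and $\mu_q=\frac1{n_q}\|L_q\|_F^2$ depend on the entire history. Without a uniform bound on $\alpha_q$ you control neither the shift carried by a partially included block nor the drift $\mu_{q+1}-\mu_q$. The paper proves $\alpha_q\le1$ by a joint induction with $\|L_q\|_F^2\le q\frac{n_q}{k}\|A\|_F^2$ and $\|R_q\|_{1\to1}\ge q\|B\|_{1\to1}$. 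Only then does it get $F_{q+1}\le F_q+C_1n_q$ for $F_q=\max_t|\sum_{j\le t}(u^{(q)}_j-\mu_q)|$, with suffixes of reversed copies handled because the total discrepancy is zero. Hence $F_p=O(n_{p-1})=O(n)$.

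\textbf{Second fact: $\mu_p\to\infty$.} The above yields only an additive error, $\frac1n\|L\|_F^2=\mu_p+O(1)$. Converting it into a factor $1+o(1)$ requires $\mu_p\to\infty$, which is not automatic. The increments of $\mu_q$ are about $\alpha_q^2\cdot\frac1k\|A\|_F^2$, so this also needs control of the $\alpha_q$ from below, e.g.\ $\sum_q\alpha_q^2=\infty$.

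Your fallback route does not close the gap. For $n$ in the lower part of $(n_{p-1},n_p]$, e.g.\ $n=2n_{p-1}-1$, the partially included block has $n_{p-1}-1$ rows. That is about half the prefix, not an $O(1/k)$ fraction. Even an $O(1/k)$ fraction would give only a factor $1+O(1/k)$ with $k$ fixed, not $1+o(1)$. The reason is that bounding those rows by $\|L_p\|_{2\to\infty}^2$ loses a constant factor: maximal and average squared row norms are of the same order, but their ratio does not tend to $1$.

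Also, complete blocks do not contribute ``exactly their average.'' The leading copy of $L_{p-1}$ carries no shift, while the others carry shifts $\alpha_{p-1}^2\|A_{\ell,:}\|_2^2$ that vary with $\ell$. So a prefix of complete blocks already deviates from $m\mu_p$ by an amount of order $n_{p-1}\alpha_{p-1}^2\max_\ell\|A_{\ell,:}\|_2^2$. That deviation is harmless only once both facts above are in hand.
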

In other words, if we have a good-quality factorization of $T_k$ in some fixed dimension $k$, then we can generate factorizations of $T_n$ in arbitrarily high dimensions, with a strictly better constant in front of $(\log_2 n)^{3/2}$ in the asymptotics of the costs. For large $k$, the factor $\left( \frac{\log_2 k}{\log_2 (2k+1)} \right)^{3/2}$ is approximately $\left( 1 - \frac{3/2}{\log_2 k} \right)$. 
In the next paragraph, we give intuition for Theorem~\ref{th_costs_for_nonround_n}  and defer the full proof to Appendix~\ref{appendix_main_proof}.

Consider the case when the matrices $(L', R') = \underbrace{(A,B) \lozenge \ldots \lozenge (A, B)}_{p \text{ times}}$ have size of $L'R'$ equal to $n$. In this case, $L = L'$ and $R = R'$. Notice that $n = \Theta\left( (2k+1)^p \right)$, implying $p = \frac{\log_2 n}{\log_2 (2k+1)} (1+o(1))$. Then we can apply equation \eqref{eq_lozenge_cost_bounds} $p-1$ times and get
    \begin{equation}
        \begin{split}
            (\|L'\|_{2 \to \infty} \|R'\|_{1 \to 1})^{2/3} \leq \left( C_2 (\log_2 k)^{3/2} \right)^{2/3} \cdot p = C_2^{2/3} \frac{\log_2 k}{\log_2 (2k+1)} (\log_2 n) (1+o(1)),
        \end{split}
    \end{equation}
    and similarly
    \begin{equation}
        \left( \frac{1}{\sqrt{N}} \|L'\|_{F} \|R'\|_{1 \to 1} \right)^{2/3} \leq C_F^{2/3} \frac{\log_2 k}{\log_2 (2k+1)} (\log_2 n) (1+o(1)).
    \end{equation}
    These give us the claim of the theorem. What is left to analyze is the case where $n$ is not exactly the size of $L'R'$. One would expect the same asymptotic bounds for these ``non-round'' $n$. There, the $c_2$ cost follows easily, but the $c_F$ needs a more careful argument. We bound the parameters $\alpha$ and exploit the recursive structure of $L'$. A complete analysis is given in Appendix~\ref{appendix_main_proof}.

\paragraph{Remark on the construction.}
The idea of constructing high-dimensional factorizations from lower-dimensional building blocks has appeared previously in the literature. Dvijotham, McMahan, Pillutla, Steinke, and Thakurta~\cite{dvijotham2024efficient} introduce the ``Recursive Matrix Factorization'' for the problem of memory-efficient approximate DP counting. The construction in~\cite{dvijotham2024efficient} takes factorizations of $T_n$ and $T_k$ and produces a factorization of $T_{kn}$. Thus, the logarithms of the dimensions are added: $\log n + \log k = \log(kn)$. In the resulting factorization, the quantities $\|.\|_{2 \to \infty}^2$ and $\|.\|_{1 \to 1}$ would also be added. In this way, one could start from a single good factorization and lift it to arbitrarily high dimensions while \emph{preserving} the constant in the asymptotics. We improve upon this approach by constructing a factorization of $T_{2kn+n+k}$, obtaining $\log n + \log k + \log 2 \approx \log(2kn+n+k)$. This additional $\log 2$ term allows us to achieve a better constant in our bounds.

\begin{table}[t]
\caption{We numerically optimize $c_2(T_k)$ and $c_F(T_k)$, obtaining constants $C_2$ and $C_F$; see the formulation of Theorem~\ref{th_costs_for_nonround_n}. We further present the resulting constants for MaxSE and MeanSE, adjusted for finite $k$.
}
\centering
\begin{tabular}{c|c|c|c|c|c|}
     $k \times s$ &  $2C^2_2$ & $2C^2_F$ & $(\frac{\log_2(k)}{\log_2(2k + 1)})^3$ & $2C^2_2(\frac{\log_2(k)}{\log_2(2k + 1)})^3$ &$2C^2_F(\frac{\log_2(k)}{\log_2(2k + 1)})^3$\\
     $1024 \times 1024$ & $0.1032$ & $0.0944$ & $0.7512$ & $0.0775$ & $0.0709$\\
     $512 \times 512$ & $0.1094$ & $0.1009$ & $0.7287$ & $0.0797$ & $0.0735$\\
     $256 \times 256$ & $0.1191$ & $0.1099$ & $0.7017$ & $0.0836$ & $0.0771$\\
     $128 \times 128$ & $0.1322$ & $0.1207$ & $0.6685$ & $0.0884$ & $0.0807$\\
     $64 \times 64$ & $0.1527$ & $0.1370$ & $0.6267$ & $0.0957$ & $0.0859$\\
     $32 \times 32$ & $0.1938$ & $0.1609$ & $0.5723$ & $0.1109$ & $0.0921$\\
     $16 \times 16$ & $0.2455$ & $0.2093$ & $0.4986$ & $0.1224$ & $0.1044$\\
     $8 \times 8$ & $0.3831$ & $0.3288$ & $0.3954$ & $0.1515$ & $0.1300$\\
     $4 \times 4$ & $0.8293$ & $0.6259$ & $0.2512$ & $0.2083$ & $0.1572$\\
\end{tabular}
\label{tab:numerical_optimization}
\end{table}

\subsection{Explicit factorizations and numerical results}

We perform numerical optimization to minimize the costs $c_2(T_k)$ and $c_F(T_k)$ over finite-dimensional matrices $L \in \mathbb{R}^{k \times s}$ and $R \in \mathbb{R}^{s \times k}$. 
In particular, we run $5\times 10^{4}$ iterations of the Adam optimizer~\cite{kingma2014adam}, followed by $300$ steps of the L-BFGS optimizer~\cite{liu1989limited}.
In Table~\ref{tab:numerical_optimization}, we report the constants for MaxSE and MeanSE. Specifically, we compute $C_2$ and $C_F$ from Theorem~\ref{th_costs_for_nonround_n} and report $2C_2^2$ and $2C_F^2$, which correspond to the constants in Table~\ref{tab:tree-methods}. Theorem~\ref{th_costs_for_nonround_n} shows that, by recursively lifting the dimension, the cost can be further improved by a factor of $\left(\frac{\log_2(k)}{\log_2(2k+1)}\right)^{3/2}$. This yields a larger improvement for smaller values of $k$. Taking the square of this improvement factor, we report the final MaxSE and MeanSE constants in the last two columns. The best performance was achieved by matrices of size $1024\times 1024$.

The matrices are, however, computed in finite precision and, by themselves, do not constitute a rigorous upper bound. In fact, strictly speaking, we do not obtain an exact factorization, since the product $LR$ is only approximately equal to $T_k$. To make the resulting bound rigorous, we provide the following reasoning.

\begin{proposition}
\label{prop:floating_point_bound}
    There exists a factorization $LR = T_k$ for $k = 1024$ with $\|L\|_{2 \to \infty} \le 1.8730$, $\|R\|_{1\to 1} \le 3.8417$, and therefore 
    \begin{equation}
        2C_2^2 = \frac{2\|L\|_{2 \to \infty}^2\|R\|_{1\to 1}^2}{\log_2(k)^3} \le 0.1036 \quad \text{and}\quad 2C_2^2\left(\frac{\log_2(k)}{\log_2(2k + 1)}\right)^3 \le 0.0778.
    \end{equation}
\end{proposition}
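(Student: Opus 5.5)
The plan is to turn the numerically optimized pair $(\hat L,\hat R)$ of size $1024\times 1024$, which only satisfies $\hat L\hat R\approx T_k$ in floating point, into an exact factorization with rational entries. We then bound its two norms by exact rational arithmetic, which is what the accompanying code does.

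First, round the entries of $\hat L$ and $\hat R$ to rationals $L_0,R_0$, say dyadic rationals with a fixed number of bits. Compute the residual $\Delta := T_k - L_0R_0$ exactly. Its entries will be tiny, of the order of the optimizer's accuracy plus the rounding error.

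Second, repair the factorization exactly. Keep $L_0$ unchanged if it has full column rank $k$, and choose a rational correction $\Delta_R$ with $L_0\Delta_R=\Delta$. One choice is $\Delta_R := L_0^{-1}\Delta$ when $L_0$ is square and invertible, computed exactly over $\mathbb{Q}$ or via a rational left inverse. Then $L:=L_0$ and $R:=R_0+\Delta_R$ satisfy $LR=T_k$ exactly.

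An alternative that avoids inverting a $1024\times1024$ matrix is the stacking trick from the proof of the lemma above: $T_k = L_0R_0 + I\cdot\Delta$, so $L=(L_0\;\; \beta I)$ and $R=\begin{pmatrix}R_0\\ \beta^{-1}\Delta\end{pmatrix}$. With this choice, $\|L\|_{2\to\infty}^2\le \|L_0\|_{2\to\infty}^2+\beta^2$ and $\|R\|_{1\to1}\le\|R_0\|_{1\to1}+\beta^{-1}\|\Delta\|_{1\to1}$. Choosing $\beta$ balances these two perturbations so that both are negligible once $\|\Delta\|_{1\to1}$ is tiny. I would try this version first, since it needs only exact multiplication, not inversion, and the resulting norm bounds follow directly from the two displayed inequalities.

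Third, certify the numbers. Compute $\|L\|_{2\to\infty}^2$ exactly as a rational and check that it is at most $1.8730^2$. Compute $\|R\|_{1\to1}$ exactly as a maximum of column sums of absolute values and check that it is at most $3.8417$. The stated constants then follow by arithmetic: $2(1.8730\cdot3.8417)^2/10^3\le 0.1036$, and multiplying by $(10/\log_2 2049)^3$ gives at most $0.0778$. For this last step I would use a certified rational upper bound on $1/\log_2 2049$, for example from $2049>2^{11}$.

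The main obstacle is ensuring that the residual is small enough that the repair does not eat the margin in the fourth digit. That margin is roughly $10^{-4}$ relative. Exact arithmetic on $1024\times1024$ rational matrices with large denominators is also costly. If $\|\Delta\|_{1\to1}$ is not small enough after rounding, I would first polish $\hat R$ in higher precision, for instance by a few least-squares refinement steps $R\leftarrow R+\hat L^{+}(T_k-\hat LR)$, before rounding and repairing.
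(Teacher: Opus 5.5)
Your proposal is correct, and your preferred repair is genuinely different from the paper's. The paper keeps the optimized $R$ (converted to a rational matrix) and inner dimension $k$. It takes the exact left factor to be $L=T_kR^{-1}$, which is never written down, and bounds its distance to the rounded float64 matrix $\tilde L$ using the exact residual $\tilde\Delta=\tilde LR-T_k$ and a Neumann series: $\|\Delta L\|_F\le\|T_k\|_F\|T_k^{-1}\tilde L\|_F\,q/(1-q)$ with $q=\|T_k^{-1}\tilde\Delta\|_F<1$. This avoids the exact $1024\times1024$ inversion that your first option needs, which the paper reports was too large to carry out symbolically. The price is that the crude Frobenius bounds cost $0.0039$ in $\|L\|_{2\to\infty}$ (from $1.8691$ to $1.8730$).

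Your stacking repair produces an explicit rational factorization of inner dimension $2k$. This is allowed, since neither $c_2$, $\lozenge$, nor Theorem~\ref{th_costs_for_nonround_n} fixes the inner dimension. It is certified by exact multiplication and exact norm evaluation alone, with no invertibility or convergence check.

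Optimizing $\beta$ in your two inequalities is exactly the $\lozenge$ computation with $(C,D)=(I,\Delta)$: $\|L\|_{2\to\infty}\|R\|_{1\to1}\le\bigl((\|L_0\|_{2\to\infty}\|R_0\|_{1\to1})^{2/3}+\|\Delta\|_{1\to1}^{2/3}\bigr)^{3/2}$, and any rational $\beta$ near the optimum suffices. So your loss scales like $\|\Delta\|_{1\to1}^{2/3}$ rather than linearly. This means you really do need the polishing step; the paper's recomputation of $\tilde L=T_kR^{-1}$ in float64 is one such step. With rounding-level residuals, e.g.\ $\|\Delta\|_{1\to1}\lesssim10^{-5}$, the loss is about $2\cdot10^{-4}$ relative. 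That is well inside the roughly $0.2\%$ slack between $1.8691\cdot3.8417$ and $1.8730\cdot3.8417$, so your worry about a $10^{-4}$ margin is overly pessimistic. Also, since $(cL,R/c)$ is again a factorization, only the product $\|L\|_{2\to\infty}\|R\|_{1\to1}\le 1.8730\cdot3.8417$ needs certifying; the two separate bounds then follow by rescaling.

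Two small points. First, in your first option, solvability of $L_0\Delta_R=\Delta$ requires full \emph{row} rank, not full column rank; the two coincide for square $L_0$. Second, the final arithmetic is tight. With $\log_2 2049>11$ you must multiply the unrounded value: $2(1.8730\cdot3.8417)^2/10^3\cdot(10/11)^3\approx0.077799\le0.0778$. Chaining through the rounded bound instead gives $0.1036\cdot(10/11)^3\approx0.07784>0.0778$.
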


\begin{proof}

In the GitHub repository,\footnote{\url{https://github.com/npkalinin/Pure_DP_Continual_Counting_with_MF}} we provide an optimized float32 matrix $R$ of size $1024\times 1024$ for MaxSE. We compute the corresponding matrix $\tilde L$ numerically as $T_kR^{-1}$, using float64 arithmetic for better accuracy. We then convert the matrices $R$ and $\tilde L$ into rational matrices with denominator $2^{32}$, obtaining an approximate rational factorization. After this conversion, we no longer have that $\tilde L R = T_k$. Instead, we compute the residual $\tilde\Delta := \tilde L R - T_k$ exactly using rational arithmetic.

For the true matrix $L := T_kR^{-1} = \tilde{L} + \Delta L$, we can bound the $\|\cdot\|_{2\to \infty}$ norm as
    \begin{equation}
        \|L\|_{2\to \infty}
        \le
        \|\tilde{L}\|_{2\to \infty}
        +
        \|\Delta L\|_{2\to \infty}
        \le
        \|\tilde{L}\|_{2\to \infty}
        +
        \|\Delta L\|_{F}.
    \end{equation}

    The first norm can be bounded precisely. Namely, we use symbolic arithmetic to compute $\|\tilde{L}\|_{2\to \infty}^2$ as a fraction. We then compute a rational approximation to the square root with a denominator $2^{100}$, introducing an error of at most $2^{-100}$, and verify that its square is larger than the squared maximum $\ell_2$ row norm. In the remainder of the proof, we omit the details of the square-root computation; it should be understood in this sense.

    We further compute symbolically the expression $\tilde{\Delta} = \tilde{L}R - T_k$, from which we can express $R^{-1}$ as follows:\footnote{We found the matrices too large to compute $R^{-1}$ symbolically, and therefore rely on bounding the errors introduced by floating-point arithmetic.}
    \begin{equation}
        R^{-1}
        =
        (T_k + \tilde{\Delta})^{-1}\tilde{L}
        =
        \bigl(T_k(I_k + T_k^{-1}\tilde{\Delta})\bigr)^{-1}\tilde{L}
        =
        (I_k + T_k^{-1}\tilde{\Delta})^{-1}T_k^{-1}\tilde{L}.
    \end{equation}

    This allows us to express $\Delta L$:
    \begin{equation}
        \begin{aligned}
            \Delta L
            &=
            L - \tilde{L}
            =
            T_kR^{-1} - \tilde{L} =
            T_k(I_k + T_k^{-1}\tilde{\Delta})^{-1}T_k^{-1}\tilde{L}
            -
            \tilde{L} \\
            &=
            T_k
            \bigl[(I_k + T_k^{-1}\tilde{\Delta})^{-1} - I_k\bigr]
            T_k^{-1}\tilde{L} =
            T_k
            \left[
            \sum_{j = 1}^{\infty}
            (-1)^j (T_k^{-1}\tilde{\Delta})^{j}
            \right]
            T_k^{-1}\tilde{L}.
        \end{aligned}
    \end{equation}
    The last equation also requires verifying that $\|T_k^{-1}\tilde{\Delta}\|_F < 1$, which we do using exact rational arithmetic. We upper-bound the Frobenius norm of $\Delta L$ as
    \begin{equation}
        \|\Delta L\|_F
        \le
        \|T_k\|_F
        \|T_k^{-1}\tilde{L}\|_F
        \sum_{j = 1}^{\infty}
        \|T_k^{-1}\tilde{\Delta}\|_F^j
        =
        \frac{
        \|T_k\|_F
        \|T_k^{-1}\tilde{L}\|_F
        \|T_k^{-1}\tilde{\Delta}\|_F
        }{
        1 - \|T_k^{-1}\tilde{\Delta}\|_F
        }.
    \end{equation}

    All the quantities in this expression can be upper-bounded using precise symbolic computation, since the matrices $T_k$ and $T_k^{-1}$ are integer matrices. Carrying out this computation, we obtain the upper bound $\|\Delta L\|_F \le 0.0039$. 

    Combining this with the bound $\|\tilde{L}\|_{2\to \infty} \le 1.8691$ gives
    \begin{equation}
        \|L\|_{2 \to \infty} \le 1.8691 +0.0039 \le 1.8730.
    \end{equation}
    Computing the $\|R\|_{1\to 1}$ symbolically proves the bound in the statement, the final bounds for $2C_2^2$ and $2C_2^2\left(\frac{\log_2(k)}{\log_2(2k + 1)}\right)^3 $ follow directly by substituting the previous bounds and $k=1024$, concluding the proof.
\end{proof}

For MeanSE, we provide the corresponding floating-point matrix $R$ at the same repository.\footnote{\url{https://github.com/npkalinin/Pure_DP_Continual_Counting_with_MF}} The adjustments for floating-point errors are the same as in Proposition~\ref{prop:floating_point_bound}, with $\|L\|_F$ computed instead of $\|L\|_{2\to \infty}$. Repeating the same steps, we obtain
\begin{equation}
    \begin{aligned}
        &\|R\|_{1\to 1} \le 14.3098,
        \qquad
        \|\tilde{L}\|_{F} \le 15.3624,
        \qquad
        \|\Delta L\|_F \le 0.0037,
        \qquad
        \|L\|_F \le 15.3661, \\
        &2C_F^2
        =
        \frac{2\|L\|_F^2\|R\|_{1\to 1}^2}{k\log_2(k)^3}
        \le 0.0945,
        \qquad
        2C_F^2
        \left(\frac{\log_2(k)}{\log_2(2k + 1)}\right)^3
        \le 0.0710.
    \end{aligned}
\end{equation}
This gives the constants in Theorem~\ref{thm:main_theorem}. At the same link, we provide code for verifying the stated bounds using the SymPy library~\cite{sympy2017} for symbolic computation.

\section{Memory efficient noise correlation}
In this section, we prove the space and time complexity of the matrix factorization mechanism, with the corresponding factorization given by Construction~\ref{construction}. Since the resulting matrices are sparse and recursively constructed, the factorization naturally yields a recursively described algorithm that uses $O(\log n)$ memory and runs in $O(n \log n)$ time, matching the complexity of previous tree-based
algorithms.

Recall that the output of the matrix mechanism is
\begin{equation}
    \mathcal M_{L,R}(x)
    :=
    L(Rx + z) = T_n x + Lz,
\end{equation}
where $z$ is the noise vector with coordinates sampled independently from the Laplace distribution with scale $\|R\|_{1 \to 1}/\epsilon$. 

In this section, we fix a base factorization $AB = T_k$ for Construction~\ref{construction}, with $A \in \mathbb{R}^{k \times r}$, $B \in \mathbb{R}^{r \times k}$. If $p$ is the number of terms in the product $(A,B) \lozenge \ldots \lozenge (A,B) = (L', R')$, then the matrix $L'$ has size $\frac{(2k+1)^p - 1}{2} =: N_p$ times $\left( r + \frac{k+r}{2k} \right)(2k+1)^{p-1} - \frac{k+r}{2k} =: M_p$, which can be verified inductively. We denote $L_1 = A$, $R_1 = B$, and $(L_{i+1}, R_{i+1}) = (L_i, R_i) \lozenge (A, B)$. In this notation, $L' = L_p$, $R' = R_p$.
The main ingredient is an algorithm that streams the correlated noise $Lz$. Notice that the stream $Lz$ is just a prefix of a stream $L'z$. The noise-streaming algorithm is presented as Algorithm~\ref{alg_noise}.

\begin{algorithm}[!h]
    \DontPrintSemicolon
    \SetNoFillComment
    \tcc{Returns a stream of $L'z$ values of length $N_p$}

    \If{p = 1}{
        sample $z \sim Laplace(0,b)^r$ \;
        output $Az$ \;
    } \Else {
        sample $w \sim Laplace(0,b)^r$ \;
        $g \gets \alpha_{p-1} A w \in \mathbb{R}^k$ \;
        output {\tt Noise}($p-1$, $(A,B)$, $b$) \;
        \For{$i = 1, \ldots, k$}{
            output $g_i - ${\tt NoiseReversed}($p-1$, $(A,B)$, $b$) \;
            output $g_i$ \hspace{2cm}\tcc{the extra zero row in $\widetilde{A}$}
            output $g_i + ${\tt Noise}($p-1$, $(A,B)$, $b$) \;
        }
    }
    
    \caption{{\tt Noise}($p \in \mathbb{N}$, base factorization $(A,B)$, noise magnitude $b$)}
    \label{alg_noise}
\end{algorithm}
\begin{algorithm}[!h]
    \DontPrintSemicolon
    \SetNoFillComment
    \tcc{Returns a reversed stream of $L'z$ values of length $N_p$}

    \If{p = 1}{
        sample $z \sim Laplace(0,b)^r$ \;
        output reversed $Az$ \;
    } \Else {
        sample $w \sim Laplace(0,b)^r$ \;
        $g \gets \alpha_{p-1} A w \in \mathbb{R}^k$ \;
        \For{$i = k, k-1, \ldots, 1$}{
            output $g_i + ${\tt NoiseReversed}($p-1$, $(A,B)$, $b$) \;
            output $g_i$ \hspace{2cm}\tcc{the extra zero row in $\widetilde{A}$}
            output $g_i - ${\tt Noise}($p-1$, $(A,B)$, $b$) \;
        }
        output {\tt NoiseReversed}($p-1$, $(A,B)$, $b$) \;
    }
    
    \caption{{\tt NoiseReversed}($p \in \mathbb{N}$, base factorization $(A,B)$, noise magnitude $b$)}
    \label{alg_noise_reversed}
\end{algorithm}
In Algorithm~\ref{alg_noise}, {\tt NoiseReversed} streams the same correlated noise, but in the reversed order, the procedure is implemented in Algorithm~\ref{alg_noise_reversed} in a similar recursive way. $\alpha_{p-1}$ is the optimal parameter in the operation $(\underbrace{(A,B) \lozenge \ldots \lozenge (A,B)}_{p-1 \text{ times}}) \lozenge_{\alpha_{p-1}} (A, B) = (L', R')$. 
We prove that {\tt Noise} indeed streams the values of $L'z$, by induction: when $p=1$ we output $Az$ by construction, and when $p > 1$ the output is $\begin{pmatrix}
            L_{p-1} & 0_{N_{p-1} \times k(2M_{p-1}+1)} & 0_{N_{p-1} \times s}\\
            0_{k(2N_{p-1}+1) \times M_{p-1}} & I_{k} \otimes \widetilde{L}_{p-1} & \alpha_{p-1} A \otimes \mathbf{1}_{2N_{p-1}+1}
        \end{pmatrix} z$, which agrees with our recursive Construction~\ref{construction}. Similarly, {\tt NoiseReversed} steams the values of $L'z$ in reverse. 
The final mechanism is now straightforward: compute the prefix sums and add the values from the {\tt Noise} stream.
\begin{algorithm}[!h]
    \DontPrintSemicolon
    \SetNoFillComment
    \tcc{Returns a stream of $\epsilon$-differentially private prefix sums of $x$}

    $p \gets$ minimal number such that $N_p \geq n$ \;
    $\Delta \gets \|R'\|_{1 \to 1}$ \;
    initialize stream $G = ${\tt Noise}($p$, $(A,B)$, $\Delta / \epsilon$) \;
    $S \gets 0$ \;
    \For{$t = 1, \ldots, n$}{
        receive $x_t$ \;
        $S \gets S + x_t$ \;
        $\eta = \text{next}(G)$ \;
        output $S + \eta$ \;
    }
    
    \caption{{\tt PureDPContinualCounting}(input stream $x$, $n$, $\epsilon$, base factorization $(A,B)$)}
    \label{alg_main}
\end{algorithm}

In the algorithms, we use the values $\alpha_{p-1}$, $N_p$, $\|R'\|_{1 \to 1}$. Notice that these values can be computed recursively in $O(p)$ arithmetic operations.

\begin{theorem}
    Algorithm~\ref{alg_main} uses $O\left(\frac{r}{\log k} \log n \right) = O_{A,B}(\log n)$ space and $O \left( n \left( \frac{\log n}{\log k} + r \right) \right) = O_{A,B}(n \log n)$ time.
\end{theorem}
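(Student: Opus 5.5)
The plan is to analyze the recursion tree of {\tt Noise}/{\tt NoiseReversed} directly. The algorithm is a lazy (generator-style) evaluation. I would first fix $p$ as in Algorithm~\ref{alg_main}: the minimal $p$ with $N_p \ge n$. Since $N_p = \frac{(2k+1)^p-1}{2}$ and $N_{p-1} < n$, we get $(2k+1)^{p-1} < 2n+1$. Hence $p \le 1 + \log(2n+1)/\log(2k+1) = O(\log n/\log k + 1)$. The two procedures are mutually recursive, but each invocation at level $q>1$ spawns its sub-calls one at a time. Each sub-call is consumed completely before the next one starts. So, viewing the active calls as a stack of generators, at any moment at most one active call exists per level $q = p, p-1, \ldots, 1$. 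The stack depth is therefore at most $p$.

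\textbf{Space.} A level-$q$ call with $q>1$ stores the $r$-dimensional Laplace vector $w$ and the $k$-dimensional vector $g = \alpha_{q-1}Aw$. It also keeps a loop index and a pointer to its current child. A level-$1$ call stores $z \in \R^r$ and $Az \in \R^k$. In addition, we precompute and store the scalars $\alpha_1,\dots,\alpha_{p-1}$, the sizes $N_q$, and $\|R'\|_{1\to1}$. These use $O(p)$ numbers and are computable by the $O(p)$-step recursion the paper mentions. The matrix $A$ is a fixed input of size $kr$, counted as a constant depending on $(A,B)$. The working memory is thus $O(p(r+k))$. To get the stated $O(\frac{r}{\log k}\log n)$, I would avoid storing $g$ and instead compute $g_i = \alpha_{q-1}\langle A_{i,:}, w\rangle$ on demand, once per loop iteration $i$. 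This brings the per-level storage down to $O(r)$ and the total to $O(rp) = O(\frac{r}{\log k}\log n)$, using $r \ge 1$ and absorbing the additive $O(r)$ into the constant. I expect this bookkeeping, and making sure the stated bound is really $r$ rather than $r+k$, to be the only slightly delicate point. Otherwise $k\le$ const$_{A,B}$ can simply be absorbed into $O_{A,B}(\log n)$.

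\textbf{Time.} Each output value produced by a level-$q$ call is either $g_i$, or $g_i$ plus or minus a value passed up from a child, or a value passed up directly from a child. Following a single emitted entry of $L'z$ back down the stack, it touches at most one call per level. The bookkeeping per level is $O(1)$ additions, plus one inner product $\langle A_{i,:},w\rangle$ costing $O(r)$ per loop iteration $i$. I would charge costs as follows. Each level-$q$ call with $q>1$ does $O(r)$ work to sample $w$ and $O(kr)$ total work for its $g_i$'s. It emits $N_q \ge (2k+1)^{q-1}$ values, so its own cost amortizes to $O(kr/(2k+1)^{q-1})$ per value. This sums over $q$ to $O(r)$ per value, dominated by $q=1,2$. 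Each level-$1$ call costs $O(kr)$ and emits $k$ values, so it contributes $O(r)$ per value. The pass-through cost is $O(p)$ per value. Summing over the $n$ consumed values gives $O(n(p + r)) = O(n(\frac{\log n}{\log k} + r))$. The trailing unconsumed part of the last-level calls costs at most one extra call per level, $O(pkr)$, which is negligible for fixed $(A,B)$. The precomputation takes $O(p)$ time, and Algorithm~\ref{alg_main} adds $O(1)$ per step. Correctness of the stream was already established by induction before the statement, so only these resource counts remain.
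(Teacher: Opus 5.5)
Your argument is correct, and for space it matches the paper's. Both count at most one active call per recursion level, each holding $O(r)$ numbers; the paper writes this as the recurrence $S_p = S_{p-1} + r$.

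The $r$ versus $r+k$ issue you flag needs no on-demand trick. Since $AB = T_k$ and $T_k$ has rank $k$, necessarily $r \ge k$. So storing $g \in \R^k$ (or $Az$) is already $O(r)$.

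For time, the two routes differ. The paper bounds the cost of running \texttt{Noise}$(p)$ to completion via $\mathcal{T}_p = (2k+1)\mathcal{T}_{p-1} + O(N_p)$. It unrolls this to $O(N_p(p+r))$ and then replaces $N_p$ by $n$. You instead charge each emitted value of the lazy stream three things: $O(p)$ pass-through additions; each frame's $O(kr)$ setup spread over its $N_q$ outputs, which sums geometrically to $O(r)$; and the cost of the partially consumed frames.

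The recurrence is more compact. Your accounting, however, pays only for the $n$ values actually consumed. By contrast, $N_p$ can exceed $n$ by a factor of roughly $2k+1$, since $n$ may lie just above $N_{p-1}$. That is harmless for $O_{A,B}(n\log n)$ but matters for the explicit form $O(n(\frac{\log n}{\log k}+r))$.

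You also need not appeal to fixed $(A,B)$ for your trailing term $O(pkr)$. Whenever $p \ge 2$, we have $n > N_{p-1} \ge k(2k+1)^{p-2}$, so $pkr = O(nr)$.
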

\begin{proof}
    First, we have $p = \frac{\log(n)}{\log(2k+1)} (1+o(1))$. It is enough to analyze the space and time complexity of {\tt Noise} and {\tt NoiseReversed}, because keeping track of the prefix sum $S$ (lines 4 and 7 of Algorithm~\ref{alg_main}) takes constant space and linear time.

    We analyze {\tt Noise} and {\tt NoiseReversed} by induction on $p$. Let $S_p$ and $\mathcal{T}_p$ be the space and time complexities of ${\tt Noise}(p, (A,B), b)$, for given $A, B$. The complexities for {\tt NoiseReversed} are the same.
    
    For $p = 1$, we need to store $S_1 = O(r)$ numbers to remember $z$ and $Az$ in both procedures. The calculation of $Az$ takes $\mathcal{T}_1 = O(kr)$ arithmetic operations (multiplications and additions).

    For higher $p$, in both procedures we sample the next $r$ noise values for $z$, and at each time have at most one instance of ${\tt Noise}(p-1, (A,B), b)$ or ${\tt NoiseReversed}(p-1, (A,B), b)$. Thus, we have $S_p = S_{p-1} + r$ and $\mathcal{T}_p = (2k+1) \mathcal{T}_{p-1} + O(N_p)$, where $O(.)$ hides the same constant for all $p$. 

    Space: the recurrence is $S_p = S_{p-1} + r$, implying $S_p = O(pr) = O\left(\frac{r}{\log k} \log n\right) = O_{A,B}(\log n)$.

    Time: the recurrence is $\mathcal{T}_p = (2k+1) \mathcal{T}_{p-1} + O(N_p)$, where $\mathcal{T}_1 = O(kr)$, and $N_p = \Theta((2k+1)^p)$. If we expand the recurrence, we get 
    \begin{equation}
        \mathcal{T}_p = O\left((2k+1)^{p-1} \mathcal{T}_1 + (2k+1)^{p-2} N_2 + (2k+1)^{p-3} N_3 + \ldots + (2k+1) N_{p-1} + N_p \right).
    \end{equation}
    Since $(2k+1)^{p-a} N_{a} = \Theta((2k+1)^p)$ with $\Theta$ hiding constants in a fixed bounded range for all $a \in \{2, \ldots, p\}$, we get $\mathcal{T}_p = O\left( (2k+1)^p \cdot r + p (2k+1)^p \right) = O(N_p (p+r)) = O \left( n \left( \frac{\log n}{\log k} + r \right) \right)$. For fixed $r$ and $k$, it gets us $O_{A,B}(n \log n)$.
\end{proof}

\section{Lower bounds for 0/1 factorizations}

In this section, we discuss lower bounds on $c_2(T_n)$ and $c_F(T_n)$. All known constructions for these factorization costs show the behavior $\Theta \left( \log n \right)^{3/2}$; however, the lower bounds remain $\Omega(\log n)$. In the next two theorems, we show that if one considers the factorizations $LR = T_n$ such that the entries of $L$ and $R$ are in $\{0, 1\}$, then both factorization costs must be $\Omega(\log^{3/2} n)$. Notice that the cases of binary tree factorizations and $k$-ary tree factorizations fall into this category of $0/1$ factorizations. Indeed, computing a sum of a segment that corresponds to a node in the tree yields a row of the form $(0, \ldots, 0, 1, \ldots, 1, 0, \ldots, 0)$ in $R$. Adding several such sums to get a prefix sum means a row with zeros and ones in $L$.

It would be interesting to generalize these two theorems to obtain a true lower bound of $\Omega(\log^{3/2} n)$ on $c_2(T_n)$ and $c_F(T_n)$, without the restriction to 0/1 factorizations. However, such generalization remains elusive.

\begin{theorem}
    For all matrices $L$ and $R$ with entries in $\{0,1\}$ satisfying $T_n = LR$, we have 
    \begin{equation}
        \|L\|_{2 \to \infty} \|R\|_{1 \to 1} \geq 0.290 \log_2^{3/2}(n-1) = \Omega(\log^{3/2} n).
    \end{equation}
\end{theorem}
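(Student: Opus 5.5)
The plan is to encode the $0/1$ factorization as a set system and apply the uniform skew Bollob\'as inequality. For each row $t$ of $L$ let $P_t=\{j : L_{tj}=1\}$, and for each column $i$ of $R$ let $Q_i=\{j : R_{ji}=1\}$, all subsets of $[r]$. Because all entries are in $\{0,1\}$, we have $(LR)_{ti}=|P_t\cap Q_i|$. Hence $T_n=LR$ says exactly that $|P_t\cap Q_i|=1$ for $i\le t$ and $P_t\cap Q_i=\emptyset$ for $i>t$. Moreover $\|L\|_{2\to\infty}^2=\max_t|P_t|=:a$ and $\|R\|_{1\to1}=\max_i|Q_i|=:b$, so the target quantity is $\sqrt a\, b$.

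\textbf{Step 1: a skew cross-intersecting family.} For $i=1,\ldots,n-1$ set $A_i=P_i$ and $B_i=Q_{i+1}$. Then $A_i\cap B_i=P_i\cap Q_{i+1}=\emptyset$, since $i+1>i$. For $j>i$ we have $A_j\cap B_i=P_j\cap Q_{i+1}\neq\emptyset$, since $i+1\le j$. This is the skew Bollob\'as condition (with the indices reversed if needed): $n-1$ pairs, disjoint on the diagonal and intersecting on one side of it.

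\textbf{Step 2: uniformize.} The skew version of Bollob\'as' theorem (Frankl, Lov\'asz) holds only for uniform sizes, so I pad the sets. Enlarge each $A_i$ to size exactly $a$ and each $B_i$ to size exactly $b$ using fresh ground elements, distinct for every set. Fresh elements never create new intersections, so $A_i\cap B_i=\emptyset$ is preserved, and nonempty intersections stay nonempty. The skew Bollob\'as inequality then gives
\[
n-1\le\binom{a+b}{a}, \qquad\text{so}\qquad \log_2(n-1)\le (a+b)\,H\!\left(\tfrac{a}{a+b}\right),
\]
where $H$ is the binary entropy, using the standard bound $\binom{m}{a}\le 2^{mH(a/m)}$.

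\textbf{Step 3: optimization.} Write $\ell=\log_2(n-1)$ and substitute $a=\alpha\ell$, $b=\beta\ell$. The constraint $(\alpha+\beta)H(\alpha/(\alpha+\beta))\ge 1$ is homogeneous of degree one, so it becomes scale-free in $\ell$. The cost becomes $\sqrt{a}\,b=\sqrt{\alpha}\,\beta\,\ell^{3/2}$. It therefore remains to show
\[
\min\bigl\{\sqrt{\alpha}\,\beta : (\alpha+\beta)H(\alpha/(\alpha+\beta))\ge 1\bigr\}\ge 0.290 .
\]
By homogeneity this reduces to a one-variable problem. Parametrize $\alpha=\lambda x$, $\beta=\lambda(1-x)$ with $x\in(0,1)$. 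The constraint forces $\lambda\ge 1/H(x)$, so we must minimize
\[
f(x)=\frac{\sqrt{x}\,(1-x)}{H(x)^{3/2}} \quad\text{over } x\in(0,1).
\]
The endpoints are harmless: as $x\to0$ we have $H(x)\approx x\log_2(1/x)$, so $f\to\infty$; as $x\to1$, $f\to 0$ only because $\beta\to0$, but then the entropy bound is lossy, and in any case $b\ge1$. I would handle the $x\to1$ regime separately using $b\ge1$ together with the trivial bound $n-1\le\binom{a+b}{a}$, which is roughly $(a+1)^b$ when $b$ is small and forces $a$ to be huge. This yields a bound far above $0.290\,\ell^{3/2}$ in that regime.

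\textbf{Main obstacle.} The combinatorial reduction is routine. The delicate part is Step 3: establishing the numerical constant $0.290$ rigorously. This requires bounding $f$ from below on a compact interior interval, for instance by monotonicity or derivative bounds, and dealing cleanly with the degenerate regime where $b$ is small, where the entropy relaxation is too weak. If the entropy relaxation turns out to be too lossy to reach $0.290$, I would instead work directly with $\log_2\binom{a+b}{a}$ via Stirling with explicit error terms.
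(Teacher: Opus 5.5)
Your proposal is correct and is essentially the paper's proof: the same set-system encoding with the column shift $i\mapsto i+1$, the skew Bollob\'as bound $n-1\le\binom{a+b}{a}$ (the roles of $a,b$ are swapped relative to the paper), the entropy relaxation, and a one-variable minimization. Your $x$ corresponds to the paper's $r=x/(1-x)$, and the minimum near $x\approx 0.84$ (the paper's $r_0\approx 5.236$) gives $0.503(\log 2)^{3/2}\approx 0.2903$. One correction: $f$ does not tend to $0$ as $x\to1$. Since $H(x)\sim(1-x)\log_2\tfrac{1}{1-x}$ there, $f(x)\sim(1-x)^{-1/2}\bigl(\log_2\tfrac{1}{1-x}\bigr)^{-3/2}\to\infty$, so no separate small-$b$ regime or Stirling refinement is needed, and the interior minimum already yields $0.290$.
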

\begin{proof}
    If both matrices $L$ and $R$ consist of zeros and ones, we can associate them with collections of sets. Define two families of sets $A_i, B_i \subseteq [m]$, where $m$ is the inner dimension for the factorization $T_n = LR$. Let $A_i = \{k \; | \; R_{k,i+1} = 1\}$ and $B_i = \{k \; | \; L_{i,k} = 1\}$ for $1 \leq i \leq n-1$.

    Recall the ``skew Bollob\'as'' inequality \cite{skew_bollobas}: suppose we have two families of $N$ sets $A_1, \ldots, A_N$ and $B_1, \ldots, B_N$ satisfying $A_i \cap B_i = \varnothing$ for all $i$, and $A_i \cap B_j \neq \varnothing$ whenever $i < j$. If $|A_i| \leq a$ and $|B_i| \leq b$ for all $i$, we have the inequality
    \begin{equation}
        N \leq {a+b \choose a}.
    \end{equation}

    We wish to apply the skew Bollob\'as inequality to our families $A_i$, $B_i$. Since $\sum\limits_k L_{ik} R_{kj} = (T_n)_{ij}$, we have $|A_i \cap B_j| = (T_n)_{j, i+1} = 1_{[i < j]}$. Thus, we indeed have $A_i \cap B_i = \varnothing$ and $A_i \cap B_j \neq \varnothing$ for $i < j$. Denote $a := \max\limits_i |A_i|$ and $b := \max\limits_i |B_i|$. By the skew Bollob\'as inequality, we have 
    \begin{equation}
        \label{eq_bollobas_applied}
        n - 1 \leq {a+b \choose a}.
    \end{equation}
    Observe that if the row of $L$ corresponding to some $B_i$ witnesses $|B_i| = b$, then this row has $2$-norm at least $\sqrt{|B_i|} = \sqrt{b}$. Similarly, if the column of $R$ corresponding to some $A_i$ witnesses $|A_i| = a$, then this column has $1$-norm at least $|A_i| = a$. Therefore, we have $\|L\|_{2 \to \infty} \|R\|_{1 \to 1} \geq a \sqrt{b}$. It is now left to show that under the constraint \eqref{eq_bollobas_applied}, $a \sqrt{b} = \Omega(\log^{3/2} n)$.
    Let $r := \frac{b}{a}$. Taking logarithms in Eq.~\eqref{eq_bollobas_applied} and using the entropy bound on the binomial coefficient, we get
    \begin{equation}
        \log (n - 1) \leq \log {a+b \choose a} \leq a \log\left( \frac{a+b}{a} \right) + b \log\left( \frac{a+b}{b} \right) = a \left( \log(1 + r) + r \log(1 + r^{-1}) \right).
    \end{equation}
    Now for the quality of the factorization, we have
    \begin{equation}
        \|L\|_{2 \to \infty} \|R\|_{1 \to 1} \geq a \sqrt{b} = a^{3/2} \sqrt{r} \geq \frac{\sqrt{r}}{\left( \log(1 + r) + r \log(1 + r^{-1}) \right)^{3/2}} \cdot \log^{3/2}(n-1).
    \end{equation}
    The function $f(r) = \frac{\sqrt{r}}{\left( \log(1 + r) + r \log(1 + r^{-1}) \right)^{3/2}}$ is bounded away from zero for $r > 0$. It has a unique minimum at $r_0 \approx 5.236$ with the value $f(r_0) \approx 0.503002$. Thus, we have
    \begin{equation}
        \|L\|_{2 \to \infty} \|R\|_{1 \to 1} \geq 0.503 \log^{3/2}(n - 1) \geq 0.290 \log_2^{3/2}(n-1) = \Omega(\log^{3/2} n),
    \end{equation}
    as claimed.
\end{proof}

\begin{theorem}
    For all matrices $L$ and $R$ with entries in $\{0,1\}$ satisfying $T_n = LR$, we have 
    \begin{equation}
        \frac{1}{\sqrt{n}} \|L\|_F \|R\|_{1 \to 1} \geq \left( \frac{\log 2}{6} \right)^{3/2} \sqrt{1 - \frac{1}{n}} \log_2^{3/2}(n-1) = \Omega(\log^{3/2} n)
    \end{equation}
\end{theorem}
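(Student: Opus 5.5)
The plan is to reduce to the skew Bollob\'as inequality exactly as in the proof of the previous theorem, but with a Markov-type averaging step. That step lets us replace the maximum row size of $L$ by its average, since the Frobenius norm only controls the average. As before, set $A_i = \{k : R_{k,i+1}=1\}$ and $B_i=\{k : L_{i,k}=1\}$ for $1\le i\le n-1$. Then $A_i\cap B_i=\varnothing$ and $A_i\cap B_j\neq\varnothing$ for $i<j$. Let $a:=\max_i|A_i|\le \|R\|_{1\to1}$. Since $L$ is a $0/1$ matrix,
\begin{equation*}
\|L\|_F^2 \ \ge\ \sum_{i=1}^{n-1}|B_i| \ =:\ (n-1)\,\bar b ,
\end{equation*}
so $\frac{1}{\sqrt n}\|L\|_F \ge \sqrt{1-\frac1n}\,\sqrt{\bar b}$. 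This produces the factor $\sqrt{1-1/n}$ in the statement, and it remains to show $a\sqrt{\bar b}\gtrsim \log^{3/2}(n-1)$.

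The key observation is that the skew Bollob\'as hypotheses are hereditary. If $I\subseteq[n-1]$ is any set of indices, listed in increasing order, then the subfamilies $(A_i)_{i\in I}$ and $(B_i)_{i\in I}$ still satisfy both conditions. I will take $I=\{i : |B_i|\le \lambda \bar b\}$ for a parameter $\lambda>1$. By Markov's inequality, $|I|\ge (1-1/\lambda)(n-1)$. Applying the skew Bollob\'as inequality to this subfamily gives
\begin{equation*}
\Bigl(1-\tfrac1\lambda\Bigr)(n-1)\ \le\ \binom{a+\lambda\bar b}{a}.
\end{equation*}
From here I proceed as in the previous proof. Write $b=\lambda\bar b$ and $r=b/a$, and use the entropy bound
\begin{equation*}
\log\binom{a+b}{a}\le a\bigl(\log(1+r)+r\log(1+r^{-1})\bigr).
\end{equation*}
This yields $a\sqrt{\bar b} = \lambda^{-1/2}a^{3/2}\sqrt r \ge \lambda^{-1/2} f(r)\,\log^{3/2}\bigl((1-\tfrac1\lambda)(n-1)\bigr)$, with the same function $f$ as before. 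Since $f$ is bounded below by $0.503$, this already gives an $\Omega(\log^{3/2} n)$ bound.

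The main obstacle is obtaining the exact clean constant $(\log 2/6)^{3/2}$ valid for all $n$, rather than only asymptotically. The additive loss $\log(1-1/\lambda)$ inside the logarithm must be absorbed, and the choice of $\lambda$ must be balanced against the factor $\lambda^{-1/2}$. To handle this, I would first try $\lambda=2$. I would then bound $\log\frac{n-1}{2}$ from below by a constant multiple of $\log(n-1)$, or treat small $n$ separately; when $n-1\le 2$ the right-hand side is tiny and the claim is trivial. I would then check that the resulting numerical constant dominates $(\log 2/6)^{3/2}\log_2^{3/2}$. If $\lambda=2$ turns out to be too lossy, the fallback is a cruder but uniform estimate of $\binom{a+b}{a}$, which would be the likely source of the $6$ in the constant. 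Alternatively, one can split into the regimes $b\le a$ and $b>a$ and bound the binomial by $(e(a+b)/a)^a$ or $(e(a+b)/b)^b$ respectively. In either case the final step is a one-variable minimization similar to that of $f(r)$.
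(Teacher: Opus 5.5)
Your proof is correct, and it takes a genuinely different route from the paper's.

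\textbf{The paper's route.} The paper does not pass to a subfamily. It uses the weighted (LYM-type) skew Bollob\'as inequality $\sum_i \bigl((1+|A_i|+|B_i|)\binom{|A_i|+|B_i|}{|A_i|}\bigr)^{-1}\le 1$. It then proves the uniform estimate $(1+a+b_i)\binom{a+b_i}{a}\le e^{6a^{2/3}b_i^{1/3}}$ over all positive integers; this is where the $6$ comes from. Finally, it applies Jensen's inequality to the convex map $x\mapsto e^{-6a^{2/3}x^{1/3}}$ to replace the $b_i$ by their mean. This gives $a^{2/3}\bar b^{1/3}\ge \log(n-1)/6$ for every $n$ in one stroke.

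\textbf{Your route.} Your Markov-plus-heredity step needs only the unweighted inequality and the minimization of $f$ already done for the max-norm theorem. The price is a factor $\lambda^{-1/2}$, an additive $\log(1-1/\lambda)$, and a small-$n$ case.

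\textbf{$\lambda=2$ already works, so your fallbacks are unnecessary.} You get $a\sqrt{\bar b}\ge 2^{-1/2}\cdot 0.503\,\log^{3/2}\frac{n-1}{2}$.
For $n-1\ge 4$ we have $\log\frac{n-1}{2}\ge\frac12\log(n-1)$, so this is at least $0.125\log^{3/2}(n-1)$. That is well above the required $6^{-3/2}\log^{3/2}(n-1)\approx 0.068\log^{3/2}(n-1)$.
For $n-1\le 3$, the right-hand side is at most $0.08\sqrt{1-1/n}$. The left-hand side is at least $1$, because no row of $L$ is zero and $\|R\|_{1\to1}\ge 1$.

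\textbf{One detail to tidy.} $\lambda\bar b$ need not be an integer. Apply the Bollob\'as bound with $b'=\max_{i\in I}|B_i|\le\lambda\bar b$. Then use that $a\log(1+b/a)+b\log(1+a/b)$ is increasing in $b$, since its $b$-derivative is $\log(1+a/b)>0$.

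\textbf{What each route buys.} Yours gives a far better constant. With $\lambda=1+1/\log n$, the loss $\log(1-1/\lambda)$ is only $O(\log\log n)$ and $\lambda^{-1/2}\to 1$. So $a\sqrt{\bar b}\ge 0.503\,\log^{3/2}n\,(1-o(1))\approx 0.290\log_2^{3/2}n$. This matches the max-norm bound and beats the paper's sharpened constant $0.168$ from its footnote, which is capped by the lossy homogenization $e^{Ca^{2/3}b^{1/3}}$ including the $(1+a+b)$ factor. In MeanSE terms it is about $0.168\log_2^3 n/\epsilon^2$, which exceeds the paper's upper bound $0.0710$. The paper's route, in exchange, gives a single closed-form bound valid for all $n$ with no case split.
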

\begin{proof}
    As in the proof of the previous theorem, define the sets $A_i$ and $B_i$ for $1 \leq i \leq n-1$ as $A_i = \{k \; | \; R_{k,i+1} = 1\}$ and $B_i = \{k \; | \; L_{i,k} = 1\}$ for $1 \leq i \leq n-1$. As before, we have $A_i \cap B_i = \varnothing$ and $A_i \cap B_j \neq \varnothing$ for $i < j$, so $\{(A_i, B_i)\}$ is a skew Bollob\'as system.

    Let $\max\limits_i|A_i| =: a$ and $|B_i| =: b_i$. Notice that $a$ and $b_i$ are strictly positive. Denote the average the $b_i$'s by $b := \frac{1}{n-1} \sum\limits_{i = 1}^{n-1} b_i$. Observe that $\frac{1}{\sqrt{n}} \|L\|_F \|R\|_{1 \to 1} \geq \sqrt{\frac{n-1}{n}} \sqrt{b} \cdot a$. So, it is enough to bound the value of $a \sqrt{b}$ from below.

    We will use the weighted version of the skew Bollob\'as inequality from \cite{bollobas_type_inequalities}: if $\{(A_i, B_i) \; | \; i \in [N]\}$ is a skew Bollob\'as system, then
    \begin{equation}
        \sum_{i = 1}^N \left( (1 + |A_i| + |B_i|) {|A_i| + |B_i| \choose |A_i|} \right)^{-1} \leq 1.
    \end{equation}
    In our case, this implies 
    \begin{equation}
        \sum_{i = 1}^{n-1} \left( (1 + a + b_i) {a + b_i \choose a} \right)^{-1} \leq 1.
    \end{equation}

    \underline{Step 1.} Let us first show the upper bound 
    \begin{equation}
        \label{eq_inequality_binomial}
        (1 + a + b_i) {a + b_i \choose a} \leq e^{C a^{2/3} b_i^{1/3}}
    \end{equation}
    for some positive constant $C$ and any strictly positive integers $a$, $b_i$. Let $r = b_i / a$. The right-hand side of inequality \eqref{eq_inequality_binomial} becomes $e^{C a r^{1/3}}$. Take the logarithm of the left-hand side of inequality \eqref{eq_inequality_binomial}:
    \begin{equation}
        \log \left( (1 + a + b_i) {a + b_i \choose a} \right) = \log (1 + a + b_i) + \log {a + b_i \choose a}.
    \end{equation}
    To prove \eqref{eq_inequality_binomial}, it is enough to show $\log (1 + a + b_i) \leq C_1 a r^{1/3}$ and $\log {a + b_i \choose a} \leq C_2 a r^{1/3}$ for some constants $C_1$ and $C_2$.

    First, we will use a rough estimate $1 + a + b_i \leq 3 a^2 b_i = 3 a^3 r$, which holds if we have $a \geq 1$, $b_i \geq 1$. Then $\log (1 + a + b_i) \leq \log (3 a^3 r) = \log \left( (3^{1/3} a r^{1/3})^3 \right) = 3 \log \left(3^{1/3} a r^{1/3} \right) \leq 3^{4/3} a r^{1/3}$. Thus, taking $C_1 = 3^{4/3}$ is enough.

    Second, we will again use the entropy bound on $\log {a + b_i \choose a} $:
    \begin{equation}
        \log {a+b_i \choose a} \leq a \log\left( \frac{a+b_i}{a} \right) + b_i \log\left( \frac{a+b_i}{b_i} \right) = a \left( \log(1 + r) + r \log(1 + r^{-1}) \right).
    \end{equation}
    The function $r^{-1/3} \left( \log(1 + r) + r \log(1 + r^{-1}) \right)$ is bounded from above, with the maximum of approximately $1.5811$ at $r_0 \approx 5.236$. Thus, the inequality $a \left( \log(1 + r) + r \log(1 + r^{-1}) \right) \leq C_2 a r^{1/3}$ holds with $C_2 = 1.6$. This concludes the proof of inequality \eqref{eq_inequality_binomial}. Since $3^{4/3} + 1.6 \approx 5.93$, setting $C = 6$ is enough. This value is not fine-tuned.\footnote{If the reader is interested in a sharper value for $C$, numerical analysis shows that the function $\log \left( (1 + a + b_i) {a + b_i \choose a} \right) / (a^{2/3} b_i^{1/3})$ over positive integers $a$, $b_i$ has a global maximum at $a = 1$, $b_i = 13$ with value $2.27407$. Then we can take $C = 2.275$. This yields the final bound of $\frac{1}{\sqrt{n}} \|L\|_F \|R\|_{1 \to 1} \geq 0.168 \sqrt{1 - \frac{1}{n}} \log_2^{3/2}(n-1)$.}

    \underline{Step 2.} 
    With inequality \eqref{eq_inequality_binomial} at hand, we can return to analyzing the skew Bollob\'as bound:
    \begin{equation}
        1 \geq \sum_{i = 1}^{n-1} \left( (1 + a + b_i) {a + b_i \choose a} \right)^{-1} \geq \sum_{i = 1}^{n-1} e^{- C a^{2/3} b_i^{1/3}} \geq (n - 1) e^{- C a^{2/3} b^{1/3}}.
    \end{equation}
    The last inequality above is Jensen's inequality, which applies since the function $f(x) = e^{- C a^{2/3} x^{1/3}}$ is convex for $x > 0$. This gives us
    \begin{equation}
    \begin{split}
        a^{2/3} b^{1/3} \geq \frac{1}{C} \log(n-1), \\
        a \sqrt{b} \geq \left( \frac{\log(n-1)}{C} \right)^{3/2} = \Omega(\log^{3/2} n).
    \end{split}
    \end{equation}
    Thus, we have $\frac{1}{\sqrt{n}} \|L\|_F \|R\|_{1 \to 1} \geq \sqrt{1 - \frac{1}{n}} \frac{\log(n-1)^{3/2}}{C^{3/2}} = \left(\frac{\log 2}{6}\right)^{3/2}\sqrt{1 - \frac{1}{n}} \log_2^{3/2}(n-1) = \Omega(\log^{3/2} n)$.
\end{proof}

\section{Additional results on factorization costs}
\label{sec_costs_not_norms}

Our factorization costs $c_2$ and $c_F$ are defined as $c_2(M) = \inf \{\|L\|_{2 \to \infty} \|R\|_{1 \to 1} \; | \; M=LR\}$ and $c_F(M) = \inf \{\frac{1}{\sqrt{n}}\|L\|_{F} \|R\|_{1 \to 1} \; | \; M=LR\}$ for an $n \times n$ matrix $M$. One would have a reasonable hope that $c_2$ and $c_F$ could be factorization norms. For example, their well-behaved and well-studied counterparts, the $\gamma_2$ and $\gamma_F$ norms, look very similar:
\begin{equation}
\begin{aligned}
    \gamma_2(M) = \inf \{\|L\|_{2 \to \infty} \|R\|_{1 \to 2} \; | \; M=LR\},\\
    \gamma_F(M) = \inf \{\tfrac{1}{\sqrt{n}}\|L\|_{F} \|R\|_{1 \to 2} \; | \; M=LR\}.
\end{aligned}
\end{equation}
In this section, we prove that neither $c_2$ nor $c_F$ is a norm, because they fail the triangle inequality. Our counterexample is the matrix decomposition $2T_2 = \begin{pmatrix}
    2 & 0 \\ 2 & 2
\end{pmatrix} = \begin{pmatrix}
    1 & 1\\1 & 1
\end{pmatrix}+\begin{pmatrix}
    1 & -1 \\ 1 & 1
\end{pmatrix}$. We will show that 
\begin{equation}
\label{eq_counterexample}
\begin{split}
    c_2 \begin{pmatrix}2 & 0\\2 & 2\end{pmatrix} &> c_2 \begin{pmatrix}1 & 1\\1 & 1\end{pmatrix} + c_2 \begin{pmatrix}1 & -1\\1 & 1\end{pmatrix} \text{ and} \\
    c_F \begin{pmatrix}2 & 0\\2 & 2\end{pmatrix} &> c_F \begin{pmatrix}1 & 1\\1 & 1\end{pmatrix} + c_F \begin{pmatrix}1 & -1\\1 & 1\end{pmatrix},
\end{split}
\end{equation}
which disproves the triangle inequality for $c_2$ and $c_F$.
The two matrices on the right-hand side can be factorized as $\begin{pmatrix}1 & 1\\1 & 1\end{pmatrix} = \begin{pmatrix}1 \\ 1\end{pmatrix} \begin{pmatrix}1 & 1\end{pmatrix}$ and $\begin{pmatrix}1 & -1\\1 & 1\end{pmatrix} = \begin{pmatrix}1 & -1\\1 & 1\end{pmatrix} \begin{pmatrix}1 & 0\\0 & 1\end{pmatrix}$. This implies
\begin{equation}
    \begin{split}
        c_2 \begin{pmatrix}1 & 1\\1 & 1\end{pmatrix} + c_2 \begin{pmatrix}1 & -1\\1 & 1\end{pmatrix} \leq 1 + \sqrt{2}, \\
        c_F \begin{pmatrix}1 & 1\\1 & 1\end{pmatrix} + c_F \begin{pmatrix}1 & -1\\1 & 1\end{pmatrix} \leq 1 + \sqrt{2}.
    \end{split}
\end{equation}
Thus, it is enough to prove $c_2(T_2) > \frac{1 + \sqrt{2}}{2}$ and $c_F(T_2) > \frac{1 + \sqrt{2}}{2}$. The rest of this section studies $c_2(T_2)$ and $c_F(T_2)$ to prove these inequalities.

First, we bound the inner dimension. We have the infimums taken over factorizations $T_2 = LR$ with $L \in \mathbb{R}^{2 \times r}$, $R \in \mathbb{R}^{r \times 2}$. We show in the next lemma that without loss of generality we can fix $r = 2$.

The proofs in this section are technical and moved to Appendix~\ref{appendix_proofs_not_norms}.
\begin{lemma}
    \label{lem_t2_inner_dim_bound}
    For any two matrices $L \in \mathbb{R}^{2 \times r}$, $R \in \mathbb{R}^{r \times 2}$, there exist matrices $L' \in \mathbb{R}^{2 \times 2}$, $R' \in \mathbb{R}^{2 \times 2}$, such that $L'R' = LR$ and 
    \begin{equation}
        \begin{split}
            \|L'\|_{2 \to \infty} \|R'\|_{1 \to 1} &\leq \|L\|_{2 \to \infty} \|R\|_{1 \to 1}, \\
            \|L'\|_{F} \|R'\|_{1 \to 1} &\leq \|L\|_{F} \|R\|_{1 \to 1}.
        \end{split}
    \end{equation}
\end{lemma}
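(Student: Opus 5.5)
The plan is to shrink the inner dimension $r$ step by step. Each step replaces $(L,R)$ by a pair with the same product and with no larger $\|L\|_{2\to\infty}$, $\|L\|_F$ and $\|R\|_{1\to1}$, until $r\le 2$; if $r<2$ at the end we pad with a zero column/row. Write $LR=\sum_k \ell_k \rho_k^T$, where $\ell_k\in\R^2$ is the $k$-th column of $L$ and $\rho_k\in\R^2$ is the $k$-th row of $R$. Indices $k$ with $\rho_k=0$ or $\ell_k=0$ can be deleted at no cost, so we assume none occur.

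\textbf{Step 1 (merging parallel rows of $R$).} Suppose $\rho_l=c\rho_k$ with $c\neq 0$. Replace the two terms by the single term $\ell'\rho'^T$, where $\rho'=(1+|c|)\rho_k$ and $\ell'=(\ell_k+c\ell_l)/(1+|c|)$. The product is unchanged. Each column $\ell_1$-norm of $R$ is unchanged, because $|R_{kj}|+|R_{lj}|=(1+|c|)|R_{kj}|$. For every row $i$, Cauchy--Schwarz gives
\[
\frac{(L_{ik}+cL_{il})^2}{(1+|c|)^2}\le \frac{L_{ik}^2+|c|L_{il}^2}{1+|c|}\le L_{ik}^2+L_{il}^2,
\]
so every row norm of $L$, and hence also $\|L\|_F$, does not increase. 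Consequently we may always assume that no two rows of $R$ are parallel.

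\textbf{Step 2 (sparsifying $R$ with $L$ fixed).} The product is preserved under $R\mapsto R+U$ whenever $LU=0$. For each column $j$ separately, minimize $\|R_{:,j}\|_1$ over the affine set $R_{:,j}+\ker L$. This is a linear program with two equality constraints ($L x=M_{:,j}$), so it has an optimal basic solution with at most $2$ nonzero entries. The new $R$ has $\|R\|_{1\to1}$ no larger, and $L$ is untouched. Let $S_1,S_2$ be the supports of the two new columns, each of size at most $2$. Rows outside $S_1\cup S_2$ are zero and get deleted, so $r\le 4$. Every row indexed by $S_1\setminus S_2$ is parallel to $e_1^T$, and every row indexed by $S_2\setminus S_1$ is parallel to $e_2^T$.

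\textbf{Step 3 (case analysis).} If $|S_1\cup S_2|=4$, then the rows in $S_1$ are both parallel to $e_1^T$, and Step 1 merges them. If $|S_1\cup S_2|=3$ with $S_1\cap S_2=\varnothing$, then again two rows are parallel, to $e_1^T$ or to $e_2^T$, and Step 1 applies. Each merge lowers $r$, and we repeat Steps 1--2 until $r\le 2$ or we reach the one remaining configuration: $S_1=\{a,b\}$, $S_2=\{b,c\}$, with $\rho_a\parallel e_1$, $\rho_c\parallel e_2$, and $\rho_b=(x,y)$ having $xy\neq 0$.

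\textbf{Main obstacle.} I expect the configuration just described to be the crux. In it, $L$ is a $2\times3$ matrix whose kernel is spanned by a vector $u$ with, generically, all entries nonzero. The move $R\mapsto R+u s^T$ lets each column shift its mass along $u$. The first approach I would try is to exploit the freedom in the LP of Step 2: choose the optimal vertices of the two column problems \emph{jointly}, preferring supports with $S_1=S_2$ or $S_1\cap S_2=\varnothing$. Because the objective is piecewise linear in $s_j$, one can slide $s_j$ along an optimal face until a second coordinate vanishes. If this fails, I would fall back on a dual move. Fix $R$ and replace each row of $L$ by the minimum-norm solution of $\ell^T R=m_i^T$; this does not increase row norms or the Frobenius norm. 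Then invoke Step 2 again, and show that the combined alternating procedure reaches a configuration in which either $\rho_b$ becomes parallel to $\rho_a$ or to $\rho_c$ (so Step 1 applies), or some column of $L$ vanishes.
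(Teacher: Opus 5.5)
There is a genuine gap. Your Steps 1 and 2 are correct: the Cauchy--Schwarz merge keeps column $\ell_1$ norms and does not increase any row norm of $L$, and the LP basic-solution argument is fine. The problem is that the proof stops at the configuration that carries all of the content of the lemma. Moreover, the fallbacks you sketch cannot close it, because that configuration can be a fixed point of every move you allow. Take
\[
R=\begin{pmatrix}1&0\\1&1\\0&1\end{pmatrix},\qquad L=\frac13\begin{pmatrix}2&1&-1\\1&2&1\end{pmatrix},\qquad LR=T_2 .
\]
Here $\ker L=\ker R^{T}=\mathrm{span}\{(1,-1,1)^{T}\}$.

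\begin{itemize}
\item Along $R_{:,1}+\ker L$ the $\ell_1$ norm is $|1+s|+|1-s|+|s|$. This equals $2+|s|$ for $|s|\le 1$ and $3|s|$ otherwise, so the current column is the unique minimizer. The same holds for $R_{:,2}$. Hence there is no freedom in choosing optimal vertices jointly.
\item The rows of $L$ are orthogonal to $\ker R^{T}$, so they are already the minimum-norm solutions of $\ell^{T}R=m_i^{T}$.
\item No two rows of $R$ are parallel, and no column of $L$ vanishes.
\end{itemize}

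So alternating your one-sided re-optimizations leaves $(L,R)$ unchanged with $r=3$, in exactly your residual pattern $S_1=\{1,2\}$, $S_2=\{2,3\}$. Every move you use either fixes $L$, fixes $R$, or merges parallel rows of $R$. What is needed is a move that changes both factors at once.

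The missing idea is the substitution $(L,R)\mapsto(LQ^{T},QR)$ with $Q$ orthogonal on the inner space. It preserves the product and every row norm of $L$ exactly, so $\|L\|_{2\to\infty}$ and $\|L\|_F$ are untouched and only the $\ell_1$ norms of the two columns $c,d$ of $R$ matter. Suppose you find an orthonormal basis $u,v$ of $W=\mathrm{span}(c,d)$ with $|\langle x,u\rangle|+|\langle x,v\rangle|\le\|x\|_1$ for $x\in\{c,d\}$. Then choosing $Q$ with $Qu=e_1$, $Qv=e_2$ and truncating to the first two coordinates proves the lemma for every $r$ at once, so Steps 1--3 become unnecessary. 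This is the paper's route, and the existence of $u,v$ is the real work:

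\begin{itemize}
\item Let $\beta_x\in[0,\pi/4]$ be maximal with $\cos\beta_x+\sin\beta_x\le\|x\|_1/\|x\|_2$. Any frame line within angle $\beta_x$ of $\mathbb{R}x$ is admissible for $x$.
\item If $\beta_x<\pi/4$, then the largest coordinate of $x$ exceeds $\|x\|_2/\sqrt2$, and $\|x\|_1\ge\|x\|_2(\cos\varphi+\sin\varphi)$ for the angle $\varphi\le\pi/4$ to that axis. So $\mathbb{R}x$ lies within $\beta_x$ of a standard coordinate axis.
\item Hence the angle $\alpha$ between $\mathbb{R}c$ and $\mathbb{R}d$ satisfies $\min\{\alpha,\pi/2-\alpha\}\le\beta_c+\beta_d$. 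This lets you place $\mathbb{R}u$ either near both lines, or near $\mathbb{R}c$ and near the orthogonal complement of $\mathbb{R}d$ in $W$.
\end{itemize}

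In the example above, $\|c\|_1/\|c\|_2=\|d\|_1/\|d\|_2=\sqrt2$, so every orthonormal basis of $W$ works. Your moves nonetheless cannot reach such a $2\times 2$ factorization.
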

The main idea of the proof is to act on the rows of $L$ and columns of $R$ with a suitable rotation. This preserves the dot products of the rows of $L$ and the columns of $R$, so the factorization remains valid. The task is to choose the rotation such that it does not stretch the columns of $R$ in terms of the $\ell_1$ norm, and also maps them into the span of the first two basis vectors. The complete proof is in Appendix~\ref{appendix_proofs_not_norms}.

Now we can find $c_2(T_2)$ and $c_F(T_2)$, limiting ourselves to $2 \times 2$ factors.
\begin{lemma}
    \label{lem_c2_t2}
    $c_2(T_2) = \sqrt{2}$.
\end{lemma}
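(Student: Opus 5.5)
The upper bound is explicit and the lower bound is a short duality argument. For the upper bound I will use the factorization $T_2 = L R$ with $L = T_2$ and $R = I_2$. The rows of $L$ are $(1,0)$ and $(1,1)$, so $\|L\|_{2\to\infty} = \sqrt{2}$. The columns of $R$ are unit vectors, so $\|R\|_{1\to 1} = 1$. This gives $c_2(T_2) \le \sqrt{2}$.

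For the lower bound, Lemma~\ref{lem_t2_inner_dim_bound} lets me assume $L, R \in \mathbb{R}^{2\times 2}$. Both are then invertible, since $\det L \det R = \det T_2 = 1$. Both the cost and the constraint $LR = T_2$ are invariant under $(L,R) \mapsto (\lambda L, \lambda^{-1} R)$ with $\lambda > 0$, so I normalize to $\|R\|_{1\to 1} = 1$. It then remains to show $\max(\|l_1\|_2, \|l_2\|_2) \ge \sqrt{2}$, where $l_1, l_2$ are the rows of $L$ and $r_1, r_2$ are the columns of $R$, with $\|r_j\|_1 \le 1$. The constraints read
\[
l_1\cdot r_1 = 1,\qquad l_1 \cdot r_2 = 0, \qquad l_2 \cdot r_1 = 1,\qquad l_2\cdot r_2 = 1 .
\]
Put $d := l_2 - l_1$. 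Then $d\cdot r_1 = 0$ and $d \cdot r_2 = 1$, so $\{l_1, d\}$ is the dual basis of $\{r_1, r_2\}$, i.e.\ the rows of $R^{-1}$.

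The key step is to parametrize $r_1 = (a,b)$ and $r_2 = (c,e)$, and let $\Delta = ae - bc$. Then $l_1 = (e,-c)/\Delta$ and $d = (-b, a)/\Delta$, hence $l_2 = (e-b,\, a-c)/\Delta$. The goal becomes showing that, whenever $|a|+|b| \le 1$ and $|c|+|e|\le 1$,
\[
\max\bigl(c^2+e^2,\ (e-b)^2+(a-c)^2\bigr) \ge 2\Delta^2 .
\]
Equivalently, $\max(\|r_2\|_2^2, \|r_2 - \rho(r_1)\|_2^2) \ge 2\Delta^2$, where $\rho$ is the $90^\circ$ rotation $\rho(a,b) = (b,-a)$, and $\Delta = r_1 \cdot \rho^{-1}\ldots$ is the cross product of $r_1$ and $r_2$.

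To prove this I will bound $|\Delta|$ via the Cauchy--Schwarz-type inequality $|\Delta| \le \|r_1\|_2\,\|r_2\|_2\,|\sin\theta|$. Then I will try to show that at least one of $\|r_2\|_2^2$ and $\|r_2 - \rho r_1\|_2^2$ dominates $2\Delta^2$, using $\|r_j\|_2 \le \|r_j\|_1 \le 1$. The natural first attempt is averaging: a convex combination of the two squared norms is at least $2\Delta^2$. In the equality case $r_1 = (1,0)$, $r_2 = (0,1)$, we have $\Delta = 1$, $\|r_2\|^2 = 1$ and $\|r_2 - \rho r_1\|^2 = \|(0,2)\|^2 = 4$. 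So plain averaging with equal weights is too crude, and I expect to need a case split on the sign pattern and orientation of $r_1$ and $r_2$ relative to the $\ell_1$ ball.

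I expect this final two-variable optimization to be the main obstacle: maximizing $\Delta^2 / \max(\cdot,\cdot)$ over the product of two $\ell_1$ balls. Since the objective is homogeneous, the extremal $r_j$ can be taken on the boundary $\|r_j\|_1 = 1$, which reduces the problem to a finite number of cases, one per pair of facets. On each pair of facets, $\Delta$ is bilinear and the norms are quadratic, so each case is an elementary calculus check. The target is to confirm that the optimum is attained at the tree-like configuration $R = I$, $L = T_2$.
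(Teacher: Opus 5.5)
Your reduction is correct up to the displayed inequality $\max\{c^2+e^2,\,(e-b)^2+(a-c)^2\}\ge 2\Delta^2$. It mirrors the paper's setup: the paper parametrizes the rows of $L$ and solves for $R$, while you parametrize the columns of $R$ and solve for $L$. However, there is a genuine gap. The proposal stops exactly where the content of the lemma lies: that inequality is never proved, and the plan around it contains three errors.

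First, the ``equivalent'' form with $\rho$ is wrong. Since $\Delta\, l_1=\rho(r_2)$ and $\Delta\, l_2=\rho(r_2-r_1)$, the second term is $\|r_2-r_1\|_2^2=(c-a)^2+(e-b)^2$, not $\|r_2-\rho(r_1)\|_2^2$. The $\rho$-version is in fact false: $r_1=(0,1)$, $r_2=(1,0)$ gives $\Delta=-1$, $\|r_2\|_2^2=1$ and $\|r_2-\rho(r_1)\|_2^2=0$. At $R=I$ the correct second term is $\|(-1,1)\|_2^2=2=2\Delta^2$, not $4$. So $R=I$ is tight in the second term, and any weighted average with positive weight on $\|r_2\|_2^2$ fails there. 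The slack you thought you saw does not exist.

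Second, homogeneity only permits the joint scaling $(r_1,r_2)\mapsto\lambda(r_1,r_2)$. Scaling one column alone changes $\Delta$ linearly but $\|r_2-r_1\|_2$ non-homogeneously. So you cannot put both columns on the unit $\ell_1$-sphere for free, and the facet-by-facet reduction is unjustified.

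Third, Cauchy--Schwarz gives $|\Delta|\le\|r_1\|_2\|r_2\|_2\le\|r_2\|_2$. This yields only $2\Delta^2\le 2\|r_2\|_2^2$, a factor $2$ short, so it cannot close the argument on its own.

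The missing idea is to turn the $\ell_1$-balls into boxes via $S=\begin{pmatrix}1&1\\1&-1\end{pmatrix}$. This map satisfies $\|Sx\|_\infty=\|x\|_1$, $\|Sx\|_2=\sqrt2\,\|x\|_2$ and $|\det(Sx,Sy)|=2|\det(x,y)|$. Set $p=Sr_2$ and $q=-Sr_1$. Then $p,q\in[-1,1]^2$, $p+q=S(r_2-r_1)$ and $|\det(p,q)|=2|\Delta|$. Your inequality becomes $|\det(p,q)|\le\max\{\|p\|_2,\|p+q\|_2\}$, which is the paper's Claim.

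That Claim has a three-case proof. By flipping coordinate signs and, if needed, swapping coordinates, assume $p_1,p_2\ge 0$ and $D:=p_1q_2-p_2q_1\ge 0$.
If $q_1\ge 0$, then $D\le p_1q_2\le p_1\le\|p\|_2$.
If $q_2\le 0$, then $D\le -p_2q_1\le p_2\le\|p\|_2$.
Otherwise $q_1<0<q_2$, and $D\le q_2+p_2\le\|p+q\|_2$, because $p_2+q_2$ is a positive coordinate of $p+q$.

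This gives $\sqrt2\,|\Delta|\le\max\{\|r_2\|_2,\|r_2-r_1\|_2\}$, which is exactly what you need. Without this, or a fully executed case analysis that also covers interior points, the lower bound $c_2(T_2)\ge\sqrt2$ is not established.
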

\begin{lemma}
    \label{lem_cf_t2}
    $c_F(T_2) = \sqrt{3/2}$.
\end{lemma}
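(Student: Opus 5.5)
The plan is to match an explicit factorization with a lower bound derived from a closed formula for the optimal $L$ given $R$. By Lemma~\ref{lem_t2_inner_dim_bound} I may restrict to $L,R\in\R^{2\times 2}$ throughout. Recall $c_F(T_2)=\inf \frac{1}{\sqrt2}\|L\|_F\|R\|_{1\to1}$.

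\emph{Upper bound.} Take $L=T_2$ and $R=I_2$. Then $\|L\|_F=\sqrt3$ and $\|R\|_{1\to1}=1$, so $c_F(T_2)\le\sqrt{3/2}$.

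\emph{Lower bound.} It suffices to show $\|L\|_F^2\|R\|_{1\to1}^2\ge 3$ for every factorization $LR=T_2$. Since $T_2$ is invertible, $R$ is invertible and $L=T_2R^{-1}$ is determined by $R$. Both sides are invariant under $(L,R)\mapsto(tL,R/t)$, so I normalize $\|R\|_{1\to1}=1$. Then the columns $r_1,r_2$ of $R$ lie in the unit $\ell_1$ ball of $\R^2$.

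Let $G=R^TR$ be the Gram matrix of the columns, with entries $p=|r_1|^2$, $q=r_1\cdot r_2$, $s=|r_2|^2$. Then
\[
\|L\|_F^2=\operatorname{tr}(T_2G^{-1}T_2^T)=\frac{s+(p-2q+s)}{\det G}=\frac{|r_2|^2+|r_1-r_2|^2}{\det(R)^2}.
\]
The inequality to prove becomes
\[
|r_2|^2+|r_1-r_2|^2\;\ge\;3\det(r_1,r_2)^2\qquad\text{for } \|r_1\|_1,\|r_2\|_1\le 1 .
\]
Equality holds at $r_1=e_1$, $r_2=e_2$, consistent with the upper bound.

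Two easy facts are available. First, $|\det(r_1,r_2)|\le \|r_1\|_1\|r_2\|_1\le 1$, since $|x_1y_2-x_2y_1|\le(|x_1|+|y_1|)(|x_2|+|y_2|)$. Second, $|\det(r_1,r_2)|=|\det(r_1-r_2,r_2)|\le |r_2|\,|r_1-r_2|$. Write $x=|r_2|$ and $y=|r_1-r_2|$. If $xy\le 2/3$, then by AM--GM
\[
x^2+y^2\ge 2xy\ge 3x^2y^2\ge 3\det(r_1,r_2)^2,
\]
so only the regime $xy>2/3$ remains.

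\emph{Main obstacle: the regime $xy>2/3$.} Here the two crude bounds alone are not enough, and I must use the $\ell_1$ geometry jointly with the Euclidean quantities. My first attempt would be a reduction to the boundary followed by explicit computation.

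\emph{Step 1: reduce to the sphere.} For fixed $r_2$, $\det(r_1,r_2)^2$ is a convex quadratic in $r_1$ vanishing along $\R r_2$. I would argue that the ratio $(|r_2|^2+|r_1-r_2|^2)/\det^2$ is minimized with $\|r_1\|_1=1$, and symmetrically $\|r_2\|_1=1$. To justify this, I would check that scaling $r_1$ outward along the direction keeping $r_1-r_2$ controlled does not increase the ratio. Failing that, I would apply Lagrange/KKT conditions directly.

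\emph{Step 2: parametrize the $\ell_1$ sphere.} Write $r_1,r_2$ as points of $\{|u|+|v|=1\}$. Using the symmetries of the square and a sign flip of $r_2$, I would place $r_2$ on a single edge. This leaves a small number of piecewise-polynomial two-parameter inequalities, which I would verify by elementary calculus or by a sum-of-squares certificate.

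I expect the minimum to occur only at vertex configurations such as $(e_1,e_2)$, giving exactly $3$. Hence $c_F(T_2)=\sqrt{3/2}$.
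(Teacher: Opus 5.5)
\textbf{There is a genuine gap: the lower bound is only planned, not proved.}

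Your reduction is correct. With $\|R\|_{1\to1}=1$, the identity $\|L\|_F^2=(|r_2|^2+|r_1-r_2|^2)/\det(R)^2$ holds. The target $|r_2|^2+|r_1-r_2|^2\ge 3\det(r_1,r_2)^2$ on the product of $\ell_1$ unit balls is exactly the paper's inequality $b^2+d^2+(a-b)^2+(c-d)^2\ge\frac{3}{2}(ad-bc)^2$ on $[-1,1]^4$, read before the change of variables $\widehat R=SR$ with $S=\begin{pmatrix}1&1\\1&-1\end{pmatrix}$. Your AM--GM treatment of the case $|r_2|\,|r_1-r_2|\le 2/3$ is also valid.

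However, that case removes none of the difficulty. Every configuration where the inequality is tight lies in the complementary regime; for instance $r_1=e_1$, $r_2=e_2$ gives $|r_2|\,|r_1-r_2|=\sqrt2$. For that regime you only describe what you would try (``I would argue'', ``failing that, KKT'', ``verify by calculus or an SOS certificate'', ``I expect the minimum at vertices''). So the bound $c_F(T_2)\ge\sqrt{3/2}$, which is the whole content of the lemma, is not established.

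Your Step 1 can be completed cleanly. Write $r_1=\lambda r_2+tw$ with $w\perp r_2$ a unit vector. Then $(|r_2|^2+|r_1-r_2|^2)/\det(r_1,r_2)^2=\frac{1+(\lambda-1)^2}{t^2}+\frac{1}{|r_2|^2}$, which is strictly decreasing in $|t|$. So $r_1$ can be pushed perpendicular to $r_2$ onto the $\ell_1$ sphere. Symmetrically, $r_2=\mu r_1+tw$ with $w\perp r_1$ gives $\frac{\mu^2+(1-\mu)^2}{t^2}+\frac{2}{|r_1|^2}$. This would replace the paper's interior critical-point analysis, where it finds $\Psi=2/3$.

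Step 2 is where the real verification lies, and it is not done. With $r_2$ on a fixed edge, you must treat $r_1$ on the same edge, the opposite edge, and an adjacent edge.
\begin{itemize}
\item The same-edge and opposite-edge cases are easy. In the paper's coordinates the expression is concave in the parameter of $r_1$, so only endpoints matter.
\item The adjacent-edge case is not easy. The paper must minimize the convex quadratic $(2-\frac{3}{2}s^2)t^2+(3s-2)t+s^2-2s+\frac{3}{2}$ in $t$. This gives the vertex value $\frac{(1-s)(3s^3-3s^2+2s+4)}{4-3s^2}$, and one must check that the cubic is positive for $s\ge(3-\sqrt{33})/6$.
\end{itemize}
Some argument of this kind is what your proof is missing.

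A smaller point: a sign flip of $r_2$ alone is not a symmetry of your inequality. It turns $|r_1-r_2|$ into $|r_1+r_2|$; with $r_1=e_1$ and $r_2=(\frac{1}{2},\frac{1}{2})$ the left side changes from $1$ to $3$. Only signed permutations applied to $r_1$ and $r_2$ simultaneously are symmetries, and these already suffice to put $r_2$ on a fixed edge.
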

For both lemmas above, we show that the trivial factorization $T_2 = T_2 I_2$ is optimal. A useful trick for the proofs is that in two dimensions, the rotation by $45^\circ$ turns $\ell_1$ norms into $\ell_\infty$ norms, scaled by $\sqrt{2}$, which are easier to analyze. Both proofs essentially boil down to case studies. The proofs can be found in Appendix~\ref{appendix_proofs_not_norms}.

Both $\sqrt{2}$ and $\sqrt{3/2}$ are greater than $\frac{1 + \sqrt{2}}{2}$, so inequalities~\eqref{eq_counterexample} hold.

\subsection{General upper bound for matrix norms}
We have shown that $c_2$ and $c_F$ are not norms. Philosophically, this is part of the reason why we could not get $c_2(T_n)$ and $c_F(T_n)$ down to $O(\log n)$. In this section we show that if $c_2$ and $c_F$ \emph{were} norms, then $O(\log n)$ would be the right answer. We make the following general observation.

\begin{theorem}
    Let $\gamma$ be any matrix norm with the property $\gamma(st^T) = 1$ for all $s, t \in \{-1, 1\}^n$. Then $\gamma(T_n) \leq \lceil \log_2 n \rceil + 1$.
\end{theorem}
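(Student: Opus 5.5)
The plan is to write $T_n$ explicitly as a linear combination $\sum_i c_i\, s_i t_i^T$ of sign rank-one matrices $s_i,t_i\in\{-1,1\}^n$ with $\sum_i |c_i| \le \lceil \log_2 n\rceil + 1$. Then the triangle inequality and homogeneity of $\gamma$, together with $\gamma(s_it_i^T)=1$, give the claim. For such a decomposition of a matrix $X$, I call $\sum_i|c_i|$ its \emph{cost}.

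\textbf{Reduction to $\pm1$ matrices and powers of two.} First I would write $T_n = \tfrac12(J + M_n)$, where $M_n := 2T_n - J$ has entries $+1$ on and below the diagonal and $-1$ above it. Since $J = \mathbf{1}_n\mathbf{1}_n^T$ has cost $1$, it suffices to bound the cost of $M_n$.

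Next I would handle general $n$ by embedding. Let $N = 2^m$ with $m = \lceil \log_2 n\rceil$. Then $M_n$ is the leading $n\times n$ principal submatrix of $M_N$. Restricting every vector $s_i,t_i\in\{-1,1\}^N$ in a decomposition of $M_N$ to its first $n$ coordinates gives a decomposition of $M_n$ with the same coefficients, and the restricted vectors still lie in $\{-1,1\}^n$. So it is enough to decompose $M_N$.

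\textbf{Recursive decomposition.} The key identity is the block structure
\[
M_N = \begin{pmatrix} M_{N/2} & -J \\ J & M_{N/2}\end{pmatrix} = \diag(M_{N/2}, M_{N/2}) + \begin{pmatrix} 0 & -J\\ J & 0\end{pmatrix}.
\]
For the off-diagonal part, let $e=\mathbf{1}_N$ and $f = (\mathbf{1}_{N/2}; -\mathbf{1}_{N/2})$. A direct check gives $\tfrac12(ef^T - fe^T) = \begin{pmatrix}0&-J\\J&0\end{pmatrix}$, so this part has cost $1$.

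For the block-diagonal part, I use that for $s,t\in\{-1,1\}^{N/2}$,
\[
\diag(st^T, st^T) = \tfrac12\bigl[(s;s)(t;t)^T + (s;-s)(t;-t)^T\bigr].
\]
Hence any decomposition of $M_{N/2}$ with cost $c$ lifts to a decomposition of $\diag(M_{N/2},M_{N/2})$ with the same cost $c$.

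Writing $\mathrm{cost}(X)$ for the smallest cost of a decomposition of $X$ obtained this way, the two pieces give $\mathrm{cost}(M_N)\le \mathrm{cost}(M_{N/2})+1$. With the base case $M_1 = (1)$ of cost $1$, induction yields $\mathrm{cost}(M_{2^m}) \le m+1$.

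\textbf{Conclusion and main obstacle.} Combining the steps,
\[
\gamma(T_n)\le \tfrac12\gamma(J)+\tfrac12\gamma(M_n)\le \tfrac12 + \tfrac{m+1}{2} \le \lceil\log_2 n\rceil + 1.
\]
The only real obstacle is finding the right recursive splitting of $M_N$. The two sign-vector identities above, for the antisymmetric off-diagonal block and for duplicating a block along the diagonal, are short explicit checks. The truncation from $N$ to $n$ is harmless because restricting sign vectors keeps them sign vectors.
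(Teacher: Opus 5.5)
Your proof is correct, and it follows a genuinely different route from the paper's.

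\textbf{The paper's route.} The paper first proves a general lemma. Any $0/1$ matrix that is a union of rectangles $J_i\times K_i$, with pairwise disjoint row sets and pairwise disjoint column sets, satisfies $\gamma(M)\le 1$. The argument is probabilistic: random sign vectors $s,t$ conditioned on $s_j=t_k$ on those rectangles give $\mathbb{E}[st^T]=M$, and convexity of $\gamma$ (Jensen) finishes. The paper then quotes the standard dyadic decomposition of $T_n$ into $\lceil\log_2 n\rceil+1$ such matrices and applies the triangle inequality.

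\textbf{Your route.} You center $T_n=\tfrac12(J+M_n)$ and pad to a power of two by restricting sign vectors. You then build an explicit signed decomposition of $M_N$ from two identities: the antisymmetric identity $\tfrac12(ef^T-fe^T)$ and the duplication identity for $\diag(st^T,st^T)$. All of these identities check out.

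\textbf{What each buys.} The paper's route is more modular, since its lemma applies to any matrix that splits into few such block patterns. Yours is fully explicit, self-contained, and sharper. The middle quantity in your last display is $\tfrac12+\tfrac{m+1}{2}=\tfrac12\lceil\log_2 n\rceil+1$. So you have in fact shown $\gamma(T_n)\le\tfrac12\lceil\log_2 n\rceil+1$, about half the stated bound. The gain comes from centering: each dyadic level costs $1$ in $M_N$, hence only $\tfrac12$ in $T_n$, whereas the paper spends $1$ per level.

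For $n=2$ your bound $3/2$ is attained by the largest admissible norm, the one whose unit ball is the convex hull of all $st^T$. Indeed, $\langle T_2,M_2\rangle=3$ while $|s^TM_2t|\le 2$ for all sign vectors $s,t$. So any representation $T_2=\sum_i c_i s_it_i^T$ has $\sum_i|c_i|\ge 3/2$.
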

\begin{proof}
    First, we show that for any block matrix $M$ such that every block is filled either with ones or zeros, and every row and column contains at most one all-ones block, we have $\gamma(M) \leq 1$. Specifically, let $J_1, \ldots, J_p, K_1, \ldots, K_p \subseteq [n]$ be such that $J_i \cap J_j = \varnothing$ and $K_i \cap K_j = \varnothing$ for $i \neq j$, and let $M \in \{0, 1\}^{n \times n}$ be a matrix with $M_{jk} = 1$ if and only if $(j,k) \in (J_1 \times K_1) \sqcup \ldots \sqcup (J_p \times K_p)$. Then we will prove that $\gamma(M) \leq 1$.
    
    We use the probabilistic method. 
    
    Let $s$ and $t$ be random elements of $\{-1, 1\}^n$, conditioned on the equalities $s_j = t_k$ for all pairs of indices $(j,k) \in (J_1 \times K_1) \sqcup \ldots \sqcup (J_p \times K_p)$. We claim that $\mathbb{E} [s t^T] = M$. Indeed, for any $(j,k) \in (J_1 \times K_1) \sqcup \ldots \sqcup (J_p \times K_p)$, we have $(s t^T)_{jk} = s_j t_k = 1$, since $s_j$ and $t_k$ are either both $1$ or both $-1$. For any other $(j, k)$, we have $\mathbb{E}[(s t^T)_{jk}] = \mathbb{E}[- (s t^T)_{jk}] = 0$ out of symmetry. Now that we have $\mathbb{E} [s t^T] = M$, we can use Jensen's inequality:
    \begin{equation}
        \gamma(M) = \gamma(\mathbb{E} [s t^T]) \leq \mathbb{E} [\gamma(s t^T)] = 1,
    \end{equation}
    which holds since every norm is a convex function. Thus, we indeed have $\gamma(M) \leq 1$.

    Finally, with a standard trick, $T_n$ can be represented as a sum of $\lceil \log_2 n \rceil + 1 =: r$ such block matrices $T_n = M_1 + \ldots + M_r$ (for example, see Section 4.2 in \cite{matousek_gamma_2} and an illustration there). Using the triangle inequality, we have
    \begin{equation}
        \gamma(T_n) \leq \gamma(M_1) + \ldots + \gamma(M_r) \leq r = \lceil \log_2 n \rceil + 1,
    \end{equation}
    as claimed.
\end{proof}

We do not claim much novelty in this statement. For example, it can be deduced from Proposition 1.1 in \cite{Kwapien1970}. However, we find our proof above to be more elegant.

\paragraph{AI Declaration.} The authors used AI tools to assist in drafting the subsection ``Our contribution''
as a summary of the other sections; in drafting proofs of Lemmas
\ref{lem_t2_inner_dim_bound}, \ref{lem_c2_t2}, and \ref{lem_cf_t2};
in drafting technical parts of the proof of Theorem~\ref{th_costs_for_nonround_n};
and in preparing the code accompanying Lemma~\ref{prop:floating_point_bound}.

The authors have reviewed, edited, proofread, and verified all AI-assisted text, proofs, and code,
and take full responsibility for the content of the manuscript and accompanying materials.

\paragraph{Acknowledgment.}
We thank Rasmus Pagh and Joel Andersson for their valuable feedback on an earlier version of the draft, as well as for referring us to some related work. We additionally thank Rasmus for referring us to the code for averaged k-ary trees with subtraction.

Nikita Kalinin is supported in part by the Austrian Science Fund (FWF) [10.55776/COE12].

\bibliographystyle{plain}
\bibliography{bibliography}

\appendix

\section{Proof of the upper bound (Theorem~\ref{th_costs_for_nonround_n})}
\label{appendix_main_proof}

\begin{proof}[Proof of Theorem~\ref{th_costs_for_nonround_n}]
    In the construction, we first get the matrices $(L', R') = \underbrace{(A,B) \lozenge \ldots \lozenge (A, B)}_{p \text{ times}}$ with the size of $L'R'$ at least $n$. Let $N$ be such that $L'R' = T_N$, $N \geq n$. We have $N = \frac{(2k+1)^p - 1}{2}$, which can be verified by induction on $p$.

    Since $p$ is chosen as the smallest integer such that $n \leq \frac{(2k+1)^p - 1}{2}$, it is positive. We have $\frac{(2k+1)^{p-1} - 1}{2} < n \leq \frac{(2k+1)^p - 1}{2} = N$, implying $\log_2 N = (1 + o(1)) \log_2 n$.

    First, let us analyze the factorization costs of $T_N = L'R'$. Since $N = \frac{(2k+1)^p - 1}{2}$, we have $p = \frac{\log_2(2N+1)}{\log_2(2k+1)}$. Using this and equation \eqref{eq_lozenge_cost_bounds}, for the $c_2$-cost we get
    \begin{equation}
        \begin{split}
            (\|L'\|_{2 \to \infty} \|R'\|_{1 \to 1})^{2/3} &\leq \left( C_2 (\log_2 k)^{3/2} \right)^{2/3} \cdot p = C_2^{2/3} \log_2 k \cdot \frac{\log_2(2N+1)}{\log_2(2k+1)}\\
            &= C_2^{2/3} \frac{\log_2 k}{\log_2 (2k+1)} (\log_2 N) (1+o(1)),
        \end{split}
    \end{equation}
    \begin{equation}
        \|L'\|_{2 \to \infty} \|R'\|_{1 \to 1} \leq C_2 \left( \frac{\log_2 k}{\log_2 (2k+1)} \right)^{3/2} (\log_2 N)^{3/2} (1+o(1)).
    \end{equation}
    For the $c_F$-cost, with the same calculation we have
    \begin{equation}
    \begin{split}
        \left(\frac{1}{\sqrt{N}} \|L'\|_{F} \|R'\|_{1 \to 1}\right)^{2/3} &\leq \left( C_F (\log_2 k)^{3/2} \right)^{2/3} \cdot p = C_F^{2/3} \log_2 k \cdot \frac{\log_2(2N+1)}{\log_2(2k+1)}\\
        &= C_F^{2/3} \frac{\log_2 k}{\log_2 (2k+1)} (\log_2 N) (1+o(1)),
    \end{split}
    \end{equation}
    \begin{equation}
        \frac{1}{\sqrt{N}} \|L'\|_{F} \|R'\|_{1 \to 1} \leq C_F \left( \frac{\log_2 k}{\log_2 (2k+1)} \right)^{3/2} (\log_2 N)^{3/2} (1+o(1)).
    \end{equation}

    The two formulas above essentially say that if $n$ happens to be ``round'' in the sense $n = \frac{(2k+1)^p - 1}{2}$ for some $p$, then the theorem holds. The remainder of the proof is a careful analysis of what happens for $n$ that falls in between ``round'' values. To do so, we obtain $L$ and $R$ by truncating $L'$ after its $n$th row and $R'$ after its $n$th column. We now analyze the costs of the resulting factorization $T_n = LR$.
    \underline{$c_2$-cost.}
    Observe that $\|L\|_{2 \to \infty} \leq \|L'\|_{2 \to \infty}$ and $\|R\|_{1 \to 1} \leq \|R'\|_{1 \to 1}$. Thus, for the $c_2$-cost we have
    \begin{equation}
        \begin{split}
        c_2(T_n) \leq \|L'\|_{2 \to \infty} \|R'\|_{1 \to 1} &\leq C_2 \left( \frac{\log_2 k}{\log_2 (2k+1)} \right)^{3/2} (\log_2 N)^{3/2} (1+o(1))\\
        &= C_2 \left( \frac{\log_2 k}{\log_2 (2k+1)} \right)^{3/2} (\log_2 n)^{3/2}(1+o(1)),
    \end{split}
    \end{equation}
    as claimed.

    \underline{$c_F$-cost.}
    The $c_F$-cost is a bit trickier. It suffices to establish the inequality 
    \begin{equation}
        \label{eq_trimmed_L_frobenius}
        \frac{1}{\sqrt{n}} \|L\|_F \leq \frac{1}{\sqrt{N}} \|L'\|_F (1 + o(1)),
    \end{equation}
    which is not automatically implied by the fact that $\|L\|_F \leq \|L'\|_F$. If we had this inequality, then we could conclude that
    \begin{equation}
        \frac{1}{\sqrt{n}} \|L\|_{F} \|R\|_{1 \to 1} \leq (1+o(1)) \frac{1}{\sqrt{N}} \|L'\|_{F} \|R'\|_{1 \to 1} \leq C_F \left( \frac{\log_2 k}{\log_2 (2k+1)} \right)^{3/2} (\log_2 n)^{3/2}(1+o(1)),    
    \end{equation}
    and the proof would be done. Thus, we are going to bound $\frac{1}{\sqrt{n}} \|L\|_F$ in terms of $\frac{1}{\sqrt{N}} \|L'\|_F$.

    Let us analyze the sequence of pairs of matrices $(L_1, R_1), \ldots, (L_p, R_p)$, where $(L_1, R_1) = (A, B)$, $(L_{i+1}, R_{i+1}) = (L_i, R_i) \lozenge (A, B)$. Thus, $(L_p, R_p) = (L', R')$. Notice that $L_i$ has $N_i := \frac{(2k+1)^i - 1}{2}$ rows, and $R_i$ has $N_i := \frac{(2k+1)^i - 1}{2}$ columns.

    Recall that in our notation, $(L_i, R_i) \lozenge (A, B) = (L_i, R_i) \lozenge_\alpha (A, B)$ with $\alpha$ chosen optimally. Let $\alpha_i$ be the optimal parameter for the operation $(L_i, R_i) \lozenge (A, B)$. 
    \begin{claim}
        For all $i$, we have $\alpha_i \leq 1$.
    \end{claim}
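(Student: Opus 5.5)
The plan is to prove by induction on $i$ that the ratio appearing inside the cube root in \eqref{eq_optimal_alpha} never exceeds its value at $i=1$, where that value makes $\alpha$ exactly $1$. Since we are in the $c_F$ part of the proof, $\alpha_i$ is the optimal $c_F$ parameter for $(L_i,R_i)\lozenge(A,B)$. By \eqref{eq_optimal_alpha},
\begin{equation*}
\alpha_i^3 = \frac{x_i\, b}{a\, y_i},
\end{equation*}
where we write
\begin{equation*}
x_i := \tfrac{1}{N_i}\|L_i\|_F^2,\qquad y_i := \|R_i\|_{1\to 1},\qquad a := \tfrac{1}{k}\|A\|_F^2,\qquad b := \|B\|_{1\to 1}.
\end{equation*}
Thus $\alpha_i\le 1$ is equivalent to the inequality $x_i/y_i \le a/b$. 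I will show that the sequence $x_i/y_i$ is nonincreasing in $i$.

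\textbf{Base case.} For $i=1$ we have $(L_1,R_1)=(A,B)$ and $N_1=k$. Hence $x_1=a$ and $y_1=b$, so $x_1/y_1=a/b$ and $\alpha_1=1$.

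\textbf{Inductive step.} I use the recursions obtained in the proof of Theorem~\ref{th_fundamental_inequalities}, applied with $(A,B)\to(L_i,R_i)$, $(C,D)\to(A,B)$ and $\alpha=\alpha_i$. The Frobenius computation there gives
\begin{equation*}
x_{i+1} \le x_i + \alpha_i^2 a .
\end{equation*}
Strictly, the weighted average $\frac{(2k+1)\|L_i\|_F^2+(2N_i+1)\alpha_i^2\|A\|_F^2}{2kN_i+N_i+k}$ is bounded by this sum. The column-norm computation there gives the exact identity
\begin{equation*}
y_{i+1} = y_i + b/\alpha_i .
\end{equation*}
Combining the two,
\begin{equation*}
\frac{x_{i+1}}{y_{i+1}} \le \frac{x_i+\alpha_i^2 a}{y_i + b/\alpha_i}.
\end{equation*}
The right-hand side is a mediant of the fractions $x_i/y_i$ and $\alpha_i^2 a/(b/\alpha_i)=\alpha_i^3 a/b$. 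By the defining formula for $\alpha_i$, we have $\alpha_i^3 a/b = x_i/y_i$, so the two fractions coincide. Their mediant therefore equals $x_i/y_i$, which gives $x_{i+1}/y_{i+1}\le x_i/y_i$. By induction, $x_i/y_i\le a/b$ for all $i$, and hence $\alpha_i^3\le 1$.

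\textbf{Where care is needed.} The only delicate point is that $x_i$ is defined via the normalized Frobenius norm of the actual matrix $L_i$, not via the upper bound on it. The inequality $x_{i+1}\le x_i+\alpha_i^2 a$ holds for the actual matrices, and all quantities involved are positive. Consequently the monotonicity argument goes through without needing equality at each step.

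The same argument, with equalities throughout, also shows $\alpha_i\le 1$ for the $c_2$-optimal parameters. Thus the claim holds whichever cost is meant.
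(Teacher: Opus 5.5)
Your proof is correct, but it takes a different route from the paper's.

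\textbf{The paper's route.} The paper strengthens the induction hypothesis to three statements: $\alpha_i\le 1$, $\|L_i\|_F^2\le i\,\frac{(2k+1)^i-1}{2k}\|A\|_F^2$, and $\|R_i\|_{1\to 1}\ge i\|B\|_{1\to 1}$. It uses $\alpha_i\le 1$ to carry these explicit growth bounds forward: as $\alpha_i^2\le 1$ in the Frobenius recursion, and as $1/\alpha_i\ge 1$ in the column-norm recursion. Dividing the two bounds then forces $\alpha_{i+1}\le 1$.

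\textbf{Your route.} You exploit the defining property of the optimal parameter. With your notation, $\alpha_i$ makes the increment ratio $\alpha_i^2a/(b/\alpha_i)=\alpha_i^3a/b$ equal to the current ratio $x_i/y_i$. The updated ratio $x_{i+1}/y_{i+1}$ is therefore bounded by a mediant of two equal fractions, hence by $x_i/y_i$.

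\textbf{Comparison.} Your argument needs no auxiliary growth bounds and never uses $\alpha_i\le 1$ inside the inductive step. It also proves more than the claim: since $\alpha_i^3=\frac{b}{a}\cdot\frac{x_i}{y_i}$, the sequence $\alpha_i$ is nonincreasing in $i$. In the $c_2$ case both recursions are exact equalities, so there $\alpha_i=1$ for every $i$. The paper's route yields explicit quantitative information as a by-product, namely $\frac{1}{N_i}\|L_i\|_F^2\le \frac{i}{k}\|A\|_F^2$ and $\|R_i\|_{1\to 1}\ge i\|B\|_{1\to 1}$. However, the rest of the appendix only uses $\alpha_i\le 1$ itself, so your shorter argument would serve equally well there.
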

    \begin{proof}
        We will use induction on $i$.
        
        \underline{Hypothesis:} 
        \begin{equation}
                \alpha_i \leq 1, \qquad
                \|L_i\|_F^2 \leq i \frac{(2k+1)^{i} - 1}{2k} \|A\|_F^2, \qquad
                \|R_i\|_{1 \to 1} \geq i \|B\|_{1 \to 1}
        \end{equation}
        for all $i \geq 1$.

        \underline{Base case:} for $i = 1$, we have $\alpha_1 = \left( \frac{\frac{1}{k}\|A\|_{F}^2 \|B\|_{1 \to 1}}{\frac{1}{k}\|A\|_{F}^2 \|B\|_{1 \to 1}} \right)^{1/3} = 1$. We also have $\|L_1\|_F^2 = \|A\|_F^2 = \frac{(2k+1)^{1} - 1}{2k} \|A\|_F^2$ and $\|R_1\|_{1 \to 1} = 1 \cdot \|B\|_{1 \to 1}$.

        \underline{Induction step:} Suppose the hypothesis holds for $i$, and consider $\|L_{i+1}\|_F$, $\|R_{i+1}\|_{1 \to 1}$ and $\alpha_{i+1}$:
        \begin{equation}
        \begin{split}
            \|L_{i+1}\|_F^2 &= (2k + 1) \|L_{i}\|_F^2 + \alpha_i^2 (2N_i + 1) \|A\|_F^2 \\
            &\leq \frac{\|A\|_F^2}{2k} \left( i(2k+1)^{i+1} + 2k(2k+1)^{i} - i(2k+1) \right)\\
            &\leq (i+1) \frac{(2k+1)^{i+1} - 1}{2k} \|A\|_F^2,
        \end{split}
        \end{equation}
        \begin{equation}
            \|R_{i+1}\|_{1 \to 1} = \|R_{i}\|_{1 \to 1} + \frac{1}{\alpha_i} \|B\|_{1 \to 1} \geq (i+1) \|B\|_{1 \to 1},
        \end{equation}
        \begin{equation}
            \alpha_{i+1} = \left( \frac{\frac{2}{(2k+1)^{i+1} - 1}\|L_{i+1}\|_{F}^2 \|B\|_{1 \to 1}}{\frac{1}{k}\|A\|_{F}^2 \|R_{i+1}\|_{1 \to 1}} \right)^{1/3} \leq 1,
        \end{equation}
        which concludes the induction step and the proof of the claim.
    \end{proof}

    Consider $\frac{1}{n} \|L\|_F^2 = \frac{1}{n} \sum_{j = 1}^n \|L'_{j,:}\|_2^2$ -- this is the average squared 2-norm of the first $n$ rows of $L'$. We need to show that it is close to the average of all rows of $L'$.

    First, we explain the intuition behind the discussed bound, and later we proceed with the formal proof. The value $\frac{1}{n} \|L\|_F^2$ is the expected value of the 2-norm of a random row of $L$. For any $j \in [n]$, from the structure of $L'$, we have that $\|L_{j,:}\|_2^2$ is a sum of at most $p$ values $\|A_{i,:}\|_2^2$ scaled by $\alpha^2$. We know that alphas are at most 1. Intuitively, the index $i$ in $\|A_{i,:}\|_2^2$ is distributed almost uniformly for most of the terms. Then one might expect $\left| \frac{1}{n} \|L\|_F^2 - \frac{1}{N} \|L'\|_F^2 \right| = O(1)$.
    
    Let us continue with the formal proof. For an integer $q\geq 1$, we write $u^{(q)}_j := \|(L_q)_{j,:}\|_2^2$ and $\mu_q := \frac{1}{N_q}\sum_{j=1}^{N_q} u^{(q)}_j = \frac{1}{N_q}\|L_q\|_F^2$,
where $N_q=((2k+1)^q-1)/2$. We wish to bound the value $\left| \frac{1}{n}\|L\|_F^2-\frac{1}{N}\|L'\|_F^2 \right| = \left| \frac{1}{n}\sum_{j=1}^n u^{(p)}_j-\mu_p \right|$ uniformly over $n$. For this reason, we also define $D_q(t):=\sum_{j=1}^t (u^{(q)}_j-\mu_q)$ and $F_q:=\max\limits_{1\leq t\leq N_q}|D_q(t)|$. Then it is enough to bound $F_q / n$.

We now prove the bound $F_q=O(N_{q-1})$ by induction. Let
$a_\ell:=\|A_{\ell,:}\|_2^2$ and $M_A:=\max_{\ell\in[k]} a_\ell$.
We also write $\operatorname{rev}(u^{(q)})$ for the sequence $u^{(q)}$
in reverse order.

From the recursive definition of $L_{q+1}$, the sequence of squared row
norms of $L_{q+1}$ is
\begin{equation}
\begin{split}
u^{(q+1)}
=
\big(&u^{(q)},\;
\operatorname{rev}(u^{(q)})+\alpha_q^2 a_1,\;
\alpha_q^2 a_1,\;
u^{(q)}+\alpha_q^2 a_1,\; \ldots, \\
&\operatorname{rev}(u^{(q)})+\alpha_q^2 a_k,\;
\alpha_q^2 a_k,\;
u^{(q)}+\alpha_q^2 a_k
\big),
\end{split}
\label{eq_row_norm_sequence_recurrence}
\end{equation}
where adding a scalar to a sequence means adding it to every coordinate.
Since $\alpha_q\leq 1$, all shifts $\alpha_q^2 a_\ell$ are bounded by
$M_A$. Also, every coordinate of $u^{(q)}$ is at most $qM_A$, and hence
$\mu_q\leq qM_A$.

We show that the averages $\mu_q$ change by at most a constant
at each step. Indeed, using \eqref{eq_row_norm_sequence_recurrence},
\begin{equation}
    \mu_{q+1}
    =
    \frac{(2k+1) N_q\mu_q+\alpha_q^2(2N_q+1)\sum_{\ell=1}^k a_\ell}
         {N_{q+1}}.
\end{equation}
Since $N_{q+1}=(2k+1) N_q+k$, $\alpha_q\leq 1$, and $\mu_q\leq qM_A$,
we get $|\mu_{q+1}-\mu_q|\leq C_0$ for some constant $C_0$ depending
only on the fixed base factorization and on $k$.

We claim that there is a constant $K$ such that $F_q\leq K N_{q-1}$ for all $q \geq 2$.

\underline{Base case.}
For $q=2$, the quantity $F_2$ is finite and $N_1=k>0$. Therefore
$F_q\leq K N_{q-1}$ holds for $q=2$ after choosing
$K\geq F_2/N_1$.

\underline{Induction step.}
Assume that $F_q\leq K N_{q-1}$ holds for some $q\geq 2$.
We prove it for $q+1$.

Consider an arbitrary prefix of the sequence $u^{(q+1)}$. By equation
\eqref{eq_row_norm_sequence_recurrence}, this prefix consists of
$O_k(1)$ complete pieces and at most one incomplete piece. The incomplete
piece is either a prefix of $u^{(q)}+c$ or a prefix of
$\operatorname{rev}(u^{(q)})+c$, where $0\leq c\leq M_A$.

For a prefix of $u^{(q)}+c$, centered around $\mu_{q+1}$, the discrepancy
is at most $F_q + N_q(M_A+C_0)$.
The same bound holds for a prefix of $\operatorname{rev}(u^{(q)})+c$,
because a prefix of the reversed sequence is a suffix of $u^{(q)}$, and
suffix discrepancies are similarly bounded by $F_q$. Each complete shifted
copy of $u^{(q)}$ or $\operatorname{rev}(u^{(q)})$ contributes $O(N_q)$,
and the singleton pieces contribute $O_k(q)=O(N_q)$. Hence there is a
constant $C_1$, depending only on the fixed base factorization and on
$k$, such that
\begin{equation}
    F_{q+1}\leq F_q+C_1N_q.
    \label{eq_Bq_recursion}
\end{equation}

Using the induction hypothesis and the fact that $N_{q-1}/N_q<1/(2k+1)$,
we get
\begin{equation}
    F_{q+1}
    \leq K N_{q-1}+C_1N_q
    \leq \left(\frac{K}{(2k+1)}+C_1\right)N_q.
\end{equation}
Choosing $K\geq C_1/(1-1/(2k+1))$ makes the last expression at most
$K N_q$. This proves the induction step. Thus $F_q=O(N_{q-1})$.

We now apply this estimate with $q=p$. Recall that
$L=L_p[1:n,:]$, $L'=L_p$, $N=N_p$, and by minimality of $p$ we have
$N_{p-1}<n\leq N_p$. Therefore
\begin{equation}
    \left|
    \frac{1}{n}\|L\|_F^2-\frac{1}{N}\|L'\|_F^2
\right|
=
\left|
    \frac{1}{n}\sum_{j=1}^n u^{(p)}_j-\mu_p
\right| 
=
\frac{|D_p(n)|}{n}
\leq
\frac{F_p}{n}
\leq
O\!\left(\frac{N_{p-1}}{n}\right)
=
O(1).
\end{equation}
Thus $\frac{1}{n}\|L\|_F^2 = \frac{1}{N}\|L'\|_F^2+O(1)$.
We have $\frac{1}{N}\|L'\|_F^2=\mu_p\to\infty$ as $p\to\infty$, since $\mu_p$ grows linearly with $p$. This implies
\begin{equation}
    \frac{1}{\sqrt n}\|L\|_F
    \leq
    \frac{1}{\sqrt N}\|L'\|_F(1+o(1)).
\end{equation}
Together with $\|R\|_{1\to 1}\leq \|R'\|_{1\to 1}$, this proves the
desired $c_F$-cost bound.
\end{proof}

\section{Proofs of the additional results}
\label{appendix_proofs_not_norms}

\begin{proof}[Proof of Lemma~\ref{lem_t2_inner_dim_bound}]
\mbox{}\\
    Let $a, b \in \mathbb{R}^r$ be the rows of $L$ and $c, d \in \mathbb{R}^r$ be the columns of $R$. Let $e_1, \ldots, e_r$ be the standard basis in $\mathbb{R}^r$.
    
    It is enough to find a rotation $S \in \text{SO}(r)$ that satisfies the following: $Sc, Sd \in \text{span}(e_1, e_2)$, $\|Sc\|_1 \leq \|c\|_1$, $\|Sd\|_1 \leq \|d\|_1$. Indeed, if we have such a rotation, then we have $T_2 = LR = (LS^T)(SR)$. Moreover, $\|SR\|_{1 \to 1} \leq \|R\|_{1 \to 1}$, and both columns of $SR$ are supported on the first two coordinates. Rotations preserve 2-norms of vectors, so we have $\|LS^T\|_{2 \to \infty} = \|L\|_{2 \to \infty}$ and $\|LS^T\|_{F} = \|L\|_{F}$. Then we can construct $L' = (LS^T)[1:2, 1:2]$, $R' = (SR)[1:2, 1:2]$ without loss in the costs of the factorization. In the remainder of the proof, we show that such a rotation exists.

Consider the subspace \(W:=\operatorname{span}(c,d)\). The case $\dim W = 1$ is obvious, so we will assume that $W$ is two-dimensional. Our goal is to choose an orthonormal basis \(u,v\) of \(W\) such that
\begin{equation}
    \label{eq_basis_properties_dim_reduction}
    |\langle x,u\rangle|+|\langle x,v\rangle|\leq \|x\|_1, \quad x\in\{c,d\}.
\end{equation}

Indeed, once such \(u,v\) are found, we may extend them to a (positively oriented) orthonormal basis $u,v,u_3,\ldots,u_r$
of \(\mathbb{R}^r\), and define \(S\in \operatorname{SO}(r)\) by $Su=e_1$, $Sv=e_2$, $Su_k=e_k$ for $(3\leq k\leq r)$. Then \(Sc,Sd\in \operatorname{span}(e_1,e_2)\), and the desired
\(\ell_1\)-bounds follow from \eqref{eq_basis_properties_dim_reduction}.

So it remains to choose \(u,v\). We use angles between lines, always taken in
\([0,\pi/2]\). For a nonzero vector \(x\), let \(\beta_x\) be the largest number in $[0,\pi/4]$ such that
\begin{equation}
    \cos\beta_x+\sin\beta_x
 \leq \frac{\|x\|_1}{\|x\|_2}.
\end{equation}

Notice that the function $\cos\beta + \sin\beta$ is increasing on $[0, \pi/4]$. The number $\beta_x$ has the following meaning: if \(x\) makes angle at most
\(\beta_x\) with one of the two coordinate lines \(\mathbb{R}u\) or
\(\mathbb{R}v\), then
\begin{equation}
    |\langle x,u\rangle|+|\langle x,v\rangle|
 \leq \|x\|_2(\cos\beta_x+\sin\beta_x)
 \leq \|x\|_1.
\end{equation}

Thus we need to choose the two lines \(\mathbb{R}u,\mathbb{R}v\) such that
\(\mathbb{R}c\) is within angle \(\beta_c\) of one of them, and
\(\mathbb{R}d\) is within angle \(\beta_d\) of one of them.

We observe that if a non-zero vector $x$ satisfies $\beta_x<\pi/4$, then there exists a basis vector $e_j$ such that $\angle(\mathbb{R}x,\mathbb{R}e_j) \leq \pi/4$. Suppose
\(\beta_x<\pi/4\), and normalize \(x\) so that \(\|x\|_2=1\). Then $\|x\|_1 = \cos \beta_x + \sin \beta_x < \sqrt{2}$. Choose $e_j$ such that $|x_j|$ is maximal. Then $|x_j| = \|x\|_\infty$. We have
\begin{equation}
    1 = \|x\|_2^2 \leq \|x\|_\infty \|x\|_1 < |x_j| \sqrt{2}.
\end{equation}
Then we have $|x_j| > \sqrt{2}/2$, and the angle
\(\varphi\) between \(\mathbb{R}x\) and \(\mathbb{R}e_j\) is in
\([0,\pi/4]\), with \(|x_j|=\cos\varphi\). Moreover,
\begin{equation}
     \|x\|_1 - |x_j|=\sum_{i\neq j}|x_i|
 \geq \left(\sum_{i\neq j}|x_i|^2\right)^{1/2}
 =\sqrt{1-|x_j|^2}
 =\sin\varphi.
\end{equation}
Therefore, $\|x\|_1 \geq \cos\varphi+\sin\varphi$.
By the definition of \(\beta_x\), this implies \(\varphi\leq \beta_x\).
Thus, whenever \(\beta_x<\pi/4\), the line \(\mathbb{R}x\) is within angle
\(\beta_x\) of some standard coordinate line.

Let $\alpha:=\angle(\mathbb{R}c,\mathbb{R}d)\in[0,\pi/2]$ and $\Delta:=\min\{\alpha,\pi/2-\alpha\} \in [0, \pi/4]$.
We claim that $\Delta\leq \beta_c+\beta_d$.

If either \(\beta_c\) or \(\beta_d\) equals \(\pi/4\), this is immediate
because \(\Delta\leq \pi/4\). Otherwise, by the previous paragraph, there are
standard coordinate lines \(\mathbb{R}e_i\) and \(\mathbb{R}e_j\) such that $\angle(\mathbb{R}c,\mathbb{R}e_i)\leq \beta_c$ and $\angle(\mathbb{R}d,\mathbb{R}e_j)\leq \beta_d$.

If \(i=j\), then by the triangle inequality $\alpha\leq \beta_c+\beta_d$ and \(\Delta\leq \beta_c+\beta_d\). Otherwise, the coordinate lines \(\mathbb{R}e_i\) and
\(\mathbb{R}e_j\) are orthogonal, and $\frac{\pi}{2} \leq \alpha + \beta_c + \beta_d$. Then we have $\frac{\pi}{2}-\alpha\leq \beta_c+\beta_d$, implying \(\Delta\leq \beta_c+\beta_d\).

We can now choose the basis of \(W\). If \(\alpha\leq \pi/4\), then
\(\alpha\leq \beta_c+\beta_d\), so we choose a line \(L\subset W\) between
\(\mathbb{R}c\) and \(\mathbb{R}d\) such that $\angle(L,\mathbb{R}c)\leq \beta_c$ and $\angle(L,\mathbb{R}d)\leq \beta_d$.
If \(\alpha>\pi/4\), let \(D^\perp\subset W\) be the line orthogonal to
\(\mathbb{R}d\). Then we have $\angle(\mathbb{R}c,D^\perp)=\frac{\pi}{2}-\alpha
 \leq \beta_c+\beta_d$,
so we can choose a line \(L\subset W\) between \(\mathbb{R}c\) and \(D^\perp\)
such that $\angle(L,\mathbb{R}c)\leq \beta_c$ and $\angle(L,D^\perp)\leq \beta_d$.

Finally, take a unit vector \(u\) spanning \(L\), and let \(v\) be a unit vector
in \(W\) orthogonal to \(u\). In the first case, both \(c\) and \(d\) are close
to the line \(\mathbb{R}u\). In the second case, \(c\) is close to
\(\mathbb{R}u\), while \(d\) is close to \(\mathbb{R}v\). Hence, for
\(x\in\{c,d\}\), we have
\begin{equation}
    |\langle x,u\rangle|+|\langle x,v\rangle|\leq \|x\|_1,
\end{equation}
as desired.
\end{proof}

\begin{proof}[Proof of Lemma~\ref{lem_c2_t2}]
First, observe that $c_2(T_2) \leq \sqrt{2}$, as witnessed by the trivial factorization $T_2 = T_2 I_2$. The rest of the proof shows that for any factorization $T_2 = LR$ we have $\|L\|_{2 \to \infty} \|R\|_{1 \to 1} \geq \sqrt{2}$.

Consider any factorization \(T_2=LR\), where $L$ and $R$ are $2 \times 2$ matrices. Since \(T_2\) is invertible, both \(L\) and \(R\) are invertible. Let $u, v \in \mathbb{R}^2$ be the two rows of \(L\), and let $w = v - u$.
Then
\begin{equation}
    \|L\|_{2\to\infty}
=
\max\{\|u\|_2,\|v\|_2\}
=
\max\{\|u\|_2,\|u+w\|_2\}.
\end{equation}

Let \(u=(u_1,u_2)\), \(w=(w_1,w_2)\), and $\Delta=u_1w_2-u_2w_1$. With a calculation, we have
\begin{equation}
    R = L^{-1} T_2
=
\frac1{\Delta}
\begin{pmatrix}
w_2 & -u_2\\
-w_1 & u_1
\end{pmatrix}.
\end{equation}
Therefore
\begin{equation}
    \|R\|_{1\to 1}
=
\frac{\max\{\|u\|_1,\|w\|_1\}}{|\Delta|}.
\end{equation}

It remains to estimate \(|\Delta|\). We would like to turn $\ell_1$ norms into $\ell_\infty$ norms. In two dimensions, we can do this with a rotation by $\pi/4$ and rescaling by matrix $S=\begin{pmatrix}1&1\\1&-1\end{pmatrix}$.
For every \(x\in\mathbb{R}^2\), we have \(\|Sx\|_\infty = \|x\|_1\) and $\|S x\|_2=\sqrt2\,\|x\|_2$. Also,
\begin{equation}
    |\det(S u,S w)|=2|\det(u,w)|=2|\Delta|.
\end{equation}
We need the following claim.

\begin{claim}
    For all \(x,y\in\mathbb R^2\), we have
\begin{equation}
    |\det(x,y)|
\le
\max\{\|x\|_\infty,\|y\|_\infty\}
\max\{\|x\|_2,\|x+y\|_2\}.
\end{equation}
\end{claim}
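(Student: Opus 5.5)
The plan is to use the symmetries of the inequality to normalize, and then reduce it to a two-dimensional optimization whose extremal configurations can be listed and checked by hand.

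\textbf{Normalization.} Both sides are homogeneous of degree $2$ in the pair $(x,y)$, so we may assume $\max\{\|x\|_\infty,\|y\|_\infty\}=1$. That is, $x,y\in[-1,1]^2$ with at least one coordinate of modulus one. Write $z:=x+y$. Then $\det(x,y)=\det(x,z)$, and the claim becomes
\begin{equation*}
    |\det(x,z)|\le \max\{\|x\|_2,\|z\|_2\},\qquad x\in[-1,1]^2,\ z-x\in[-1,1]^2 .
\end{equation*}
The constraint set and both sides are invariant under the symmetries of the square: coordinate sign flips, the coordinate swap, and $(x,z)\mapsto(-x,-z)$. We will use these to reduce the number of cases.

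\textbf{Geometric reformulation.} The quantity $|\det(x,z)|$ is twice the area of the triangle with vertices $0$, $x$, $z$. Hence
\begin{equation*}
    |\det(x,z)| = \|z-x\|_2\cdot h ,
\end{equation*}
where $h$ is the distance from the origin to the line through $x$ and $z$. Trivially $h\le \min\{\|x\|_2,\|z\|_2\}$. This alone only gives the bound with an extra factor $\|y\|_2\le\sqrt2$, which is too weak. So the real content is a trade-off:
\begin{itemize}
    \item when $y$ is long (close to a corner of the square), the segment from $x$ to $z$ must pass near the origin, or else one of $x,z$ is far away;
    \item when $y$ is short, the trivial bound already suffices.
\end{itemize}

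\textbf{Optimization over $y$.} To make this precise, fix $x$ and $R:=\max\{\|x\|_2,\|z\|_2\}$. The quantity $\det(x,y)=\langle Jx,y\rangle$, with $J$ the rotation by $\pi/2$, is linear in $y$. It is to be maximized over the convex set
\begin{equation*}
    \{y\in[-1,1]^2:\ \|x+y\|_2\le R\}.
\end{equation*}
The maximum is therefore attained either at a corner of the square lying in the disc, or at a point where the circle $\|x+y\|_2=R$ meets an edge of the square, or at the tangency point of the circle in direction $Jx$.
\begin{itemize}
    \item \textbf{Tangency point.} Here $y=-x+R\,Jx/\|x\|_2$, and $\det(x,y)=R\|x\|_2\le R^2$. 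Combined with $\|x\|_\infty\le1$, a separate check shows this case is fine when $R\ge1$. When $R<1$, the corresponding product is scaled, and this will be handled together with the edge cases.
    \item \textbf{Corners.} Take $y=(\pm1,\pm1)$. After using the symmetries to reduce to one corner, say $y=(1,-1)$, we get $\det(x,y)=-x_1-x_2$. The inequality $|x_1+x_2|\le\max\{\|x\|_2,\|x+y\|_2\}$ is then elementary: the left side is at most $\sqrt2\,\|x\|_2$, and one compares it with $\|x+y\|_2^2=\|x\|_2^2+2(x_1-x_2)+2$.
    \item \textbf{Edge points.} Here $y=(\pm1,t)$ or $y=(t,\pm1)$. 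This gives a one-parameter inequality in $t$ for each fixed $x$, which we verify by squaring and reducing to a quadratic.
\end{itemize}

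\textbf{Main obstacle.} I expect the hard part to be organizing these boundary cases so that the bound comes out exactly with constant $1$, not $\sqrt2$. Equality is attained, for instance, at $x=(1,0)$, $y=(-1,1)$, where $|\det(x,y)|=1$ and $\max\{\|x\|_2,\|x+y\|_2\}=1$; it is attained in the same way at the trivial factorization. So there is no slack, and every case must be exact.

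If the case analysis becomes unwieldy, the fallback is to parametrize $x$ and $z$ in polar form, $x=\rho(\cos\theta,\sin\theta)$ and $z=\sigma(\cos\phi,\sin\phi)$, with $\rho,\sigma\le R$. We would then show that
\begin{equation*}
    \|z-x\|_\infty\le1 \quad\text{forces}\quad \rho\sigma|\sin(\phi-\theta)|\le \max\{\rho,\sigma\},
\end{equation*}
using the fact that $\|z-x\|_\infty \ge \|z-x\|_2/\sqrt2$, and treating the near-orthogonal configurations separately.
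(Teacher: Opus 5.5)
Your proposal has a genuine gap: the reduction is sound, but the case analysis that carries the proof is missing or wrong. The reduction itself is fine. For fixed $x$ and $R\ge\|x\|_2$, the map $y\mapsto\det(x,y)$ is linear, so its maximum over $[-1,1]^2\cap\{\|x+y\|_2\le R\}$ is attained at a corner inside the disc, at a circle--edge intersection, or at the tangency point. The problems are in what follows.

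\emph{Edge case.} This is not ``a one-parameter inequality in $t$ for each fixed $x$''. It must hold uniformly over $x\in[-1,1]^2$, so it is a three-variable polynomial inequality. It is also tight along whole curves of edge points: $x=(1,1)$, $y=(-1,\tfrac12)$ gives $\det(x,y)=\tfrac32=\|x+y\|_2$. So ``squaring and reducing to a quadratic'' is a promise, not a verification.

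\emph{Tangency case.} You argue this backwards, and it is in fact immediate. The tangency point satisfies $\|y\|_2^2=\|x\|_2^2+R^2\le 2$ with $R\ge\|x\|_2$. This forces $\|x\|_2\le 1$, hence $\det(x,y)=R\|x\|_2\le R$. For $R<1$ it is trivial, so there is nothing to defer to the edge cases.

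\emph{Polar fallback.} This fails as stated. Replacing $\|z-x\|_\infty\le 1$ by $\|z-x\|_2\le\sqrt{2}$ discards exactly the $\sqrt{2}$ you say you cannot afford, and not only near orthogonality. Take $x=(1,1)$ and $z$ of norm $\sqrt{2}$ at angle $60^\circ$ to $x$. Then $\|z-x\|_2=\sqrt{2}$, but $|\det(x,z)|=\sqrt{3}>\sqrt{2}=\max\{\|x\|_2,\|z\|_2\}$.

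The missing idea is that no optimization is needed: bound the two terms of the determinant separately, each by a single coordinate. Normalize $\max\{\|x\|_\infty,\|y\|_\infty\}\le 1$ and write $x=(a,b)$, $y=(e,f)$. Using a sign flip of one coordinate in both vectors, and the coordinate swap, arrange $a,b\ge 0$ and $D:=af-be\ge 0$. Then there are three cases.
\begin{itemize}
\item If $e\ge 0$, then $D\le af\le a\le\|x\|_2$.
\item If $f\le 0$, then $D\le -be\le b\le\|x\|_2$.
\item Otherwise $e<0<f$, and $D=af+b(-e)\le f+b$. This is the second coordinate of $x+y$, so $D\le\|x+y\|_2$.
\end{itemize}
This is the paper's proof. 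It handles all the tight configurations at once, including the edge-interior equality curve above, without enumerating extreme points.
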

\begin{proof}
By homogeneity, it is enough to assume $\max\{\|x\|_\infty,\|y\|_\infty\}\le 1$. Let \(x=(a,b)\) and \(y=(e,f)\). Then $|\det(x,y)|=|af-be|$.
Without loss of generality, we can change signs of coordinates and swap the coordinates. Therefore, we can assume $a,b\ge 0$ and $af-be\ge 0$.
Let \(D=af-be\). Consider the following cases.

\begin{enumerate}
\item If \(e\ge 0\), then $D=af-be\le af\le a\le \|x\|_2$.
\item If \(f\le 0\), then $D=af-be\le -be\le b\le \|x\|_2$.
\item The only remaining case is \(e<0<f\). Then, since \(a\le 1\) and \(-e\le 1\), $D=af+b(-e)\le f+b$.
But \(b+f\) is the second coordinate of \(x+y\), so $D\le b+f\le \|x+y\|_2$.
\end{enumerate}
Thus the claim follows.
\end{proof}
 
Applying the claim to \(x=S u\) and \(y=S w\), we get
\begin{equation}
    2|\Delta|
\le
\max\{\|S u\|_\infty,\|S w\|_\infty\}
\max\{\|S u\|_2,\|S(u+w)\|_2\}.
\end{equation}
Thus
\begin{equation}
    2|\Delta|
\le
\max\{\|u\|_1,\|w\|_1\}\,
\sqrt2\max\{\|u\|_2,\|u+w\|_2\}.
\end{equation}
Therefore we have
\begin{equation}
    \|L\|_{2\to\infty}\,\|R\|_{1\to 1}
=
\max\{\|u\|_2,\|u+w\|_2\}
\frac{\max\{\|u\|_1,\|w\|_1\}}{|\Delta|}
\ge \sqrt2,
\end{equation}
which concludes the proof.
\end{proof}

\begin{proof}[Proof of Lemma~\ref{lem_cf_t2}]
First, $c_F(T_2) \leq \sqrt{3/2}$ as witnessed by the trivial factorization $T_2 = T_2 I_2$. We will now prove that $c_F(T_2) \geq \sqrt{3/2}$.

We will use the same rotation plus rescaling trick as in the previous lemma to turn $\ell_1$ norms into $\ell_\infty$ norms. Let $S=\begin{pmatrix}1&1\\1&-1\end{pmatrix}$.
Then we have \(\|x\|_1=\|Sx\|_\infty\) for every \(x\in\mathbb{R}^2\),
\(S^{-1}=\frac{1}{2}S\), and \(SS^T=2I\).

Consider any factorization \(T_2=LR\), where $L$ and $R$ are $2 \times 2$ matrices. Let \(\widehat{L}=LS^{-1}\) and \(\widehat{R}=SR\). Then
\(T_2=\widehat{L}\widehat{R}\), \(L=\widehat{L}S\), and for the norms of $L$ and $R$ we have
\begin{equation}
    \|L\|_F^2
=
2\|\widehat{L}\|_F^2 \qquad \text{and} \qquad
\|R\|_{1\to 1}
=
\max_j \|R_{:, j}\|_1
=
\max_j \|\widehat{R}_{:, j}\|_\infty
=
\|\widehat{R}\|_{\infty},
\end{equation}
where $\|\widehat{R}\|_{\infty}$ is the maximum absolute value of an entry of $\widehat{R}$. Therefore
\begin{equation}
\frac{1}{\sqrt{2}}\|L\|_F\|R\|_{1\to 1}
=
\|\widehat{L}\|_F\|\widehat{R}\|_{\infty}.
\end{equation}

By homogeneity, we may assume \(\|\widehat{R}\|_{\infty}=1\). Let $\widehat{R}=\begin{pmatrix}a&b\\ c&d\end{pmatrix}$ and $\Delta=ad-bc$. Then
\begin{equation}
\widehat{L} = T_2 \widehat{R}^{-1}
=
\frac{1}{\Delta}
\begin{pmatrix}
d&-b\\
d-c&a-b
\end{pmatrix},
\end{equation}
\begin{equation}
\|\widehat{L}\|_F^2 = \|T_2 \widehat{R}^{-1}\|_F^2
=
\frac{d^2+b^2+(d-c)^2+(a-b)^2}{(ad-bc)^2}.
\end{equation}

It is enough to prove that, whenever \(|a|,|b|,|c|,|d|\le 1\),
\begin{equation}
\label{eq_goal_cf_t2}
d^2+b^2+(d-c)^2+(a-b)^2
\ge
\frac{3}{2}(ad-bc)^2.
\end{equation}

Let \(x=(a,c)\), \(y=(b,d)\). Then \(x,y\in [-1,1]^2\), and
the desired inequality becomes
\begin{equation}
\Psi(x,y):=
\|y\|_2^2+\|x-y\|_2^2-\frac{3}{2}\det(x,y)^2
\ge 0.
\end{equation}
$\Psi$ is a non-convex function defined on a 4-dimensional cube. To prove the above inequality, we are going to consider quite a few cases. First, let $(x, y)$ be the global minimum of $\Psi$ on $[-1,1]^2 \times [-1,1]^2$.
\begin{enumerate}
    \item Suppose $x$ is in the interior of $[-1,1]^2$.

Let \(J = \begin{pmatrix} 0 & 1 \\ -1 & 0 \end{pmatrix}\), so that \(\det(x,y)=x Jy\). Then
\begin{equation}
\nabla_x\Psi=2(x-y)-3\det(x,y)\,Jy=0.
\end{equation}
Thus \(x-y=\frac{3}{2}\det(x,y)\,Jy\). Taking the dot product with \(Jy\), and using
\(y Jy=0\), gives \(\det(x,y)=\frac{3}{2}\det(x,y)\|y\|_2^2\). If \(\det(x,y)=0\), then
\(\Psi(x,y)\ge 0\). Otherwise, \(\|y\|_2^2=2/3\), and then
\(\|x-y\|_2^2=\frac{3}{2}\det(x,y)^2\), so \(\Psi(x,y)=2/3>0\).

\item Suppose $y$ is in the interior of $[-1,1]^2$.

Fix \(x\), and suppose that \(y\) is an interior minimizer. Let
\(J' = \begin{pmatrix} 0 & -1 \\ 1 & 0 \end{pmatrix}\), so that \(\det(x,y)=y J'x\). Then
\begin{equation}
\nabla_y\Psi=4y-2x-3\det(x,y)\,J'x=0.
\end{equation}
Taking the dot product with \(J'x\) gives \(4\det(x,y)=3\det(x,y)\|x\|_2^2\). If \(\det(x,y)=0\), then
\(\Psi(x,y)\ge 0\). Otherwise, \(\|x\|_2^2=4/3\), and
\(y=\frac{1}{2}x+\frac{3}{4}\det(x,y)\,J'x\). Therefore
\(\|y\|_2^2=\|x-y\|_2^2=\frac{1}{3}+\frac{3}{4}\det(x,y)^2\), so
\(\Psi(x,y)=2/3>0\).

\item The remaining case is when both \(x\) and \(y\) are on the boundary of
\([-1,1]^2\). Since \(\Psi\) is invariant under the symmetries of the square, we may
assume that \(y=(t,1)\), where \(-1\le t\le 1\). The point \(x\) lies on one of
the four edges of \([-1,1]^2\). Consider cases.
\begin{enumerate}
    \item 
If \(x=(s,1)\), then
\begin{equation}
\Psi(x,y)
=
\frac{1}{2}\left(2-s^2+2st+t^2\right).
\end{equation}
For fixed \(t\), this is concave in \(s\), so its minimum on \([-1,1]\) occurs
at \(s=\pm 1\). The endpoint values are \(\frac{1}{2}(t+1)^2\) and
\(\frac{1}{2}(t-1)^2\), hence \(\Psi(x,y)\ge 0\).

\item If \(x=(s,-1)\), then
\begin{equation}
\Psi(x,y)
=
\frac{1}{2}\left(10-s^2-10st+t^2\right).
\end{equation}
Again this is concave in \(s\), so it is enough to check \(s=\pm 1\). The
endpoint values are \(\frac{1}{2}(t-1)(t-9)\) and
\(\frac{1}{2}(t+1)(t+9)\), both of which are nonnegative for \(-1\le t\le 1\).

\item If \(x=(1,s)\), we have
\begin{equation}
\Psi(x,y) = \Psi((1,s), (t,1)) = -\frac{3}{2}s^2t^2 + s^2 + 3st - 2s + 2t^2 - 2t + \frac{3}{2}.
\end{equation}
We need to verify that, as a function of $s$ and $t$, it is non-negative on $[-1,1]\times[-1,1]$. It is quadratic in \(t\): 
\begin{equation} 
\Psi((1,s), (t,1)) = (2 - 3/2 \cdot s^2)t^2 + (3s - 2)t + s^2 - 2s + \frac{3}{2}. 
\end{equation} 
Since \(4-3s^2\ge 1\), this quadratic is convex in \(t\) for any fixed $s \in [-1,1]$. If its minimum on \([-1,1]\) occurs at an endpoint, then we use 
\begin{equation} 
\Psi((1,s), (1,1)) = (3-s)(1+s)/2 \ge 0, \qquad \Psi((1,s), (-1,1)) = (1-s)(s+11)/2 \ge 0. 
\end{equation} 
Otherwise the minimum occurs at the vertex \(t_0=\frac{2-3s}{4-3s^2}\). The condition \(t_0\le 1\) implies \(s\ge (3-\sqrt{33})/6\), while \(t_0\ge -1\) is automatic for \(-1\le s\le 1\). At the vertex, 
\begin{equation} 
\Psi((1,s), (t_0,1)) = \frac{(1-s)(3s^3-3s^2+2s+4)}{4-3s^2}. 
\end{equation} 
The denominator is positive, and \(1-s\ge 0\). The polynomial \(p(s)=3s^3-3s^2+2s+4\) is increasing because \(p'(s)=9s^2-6s+2>0\). Hence, for \(s\ge (3-\sqrt{33})/6\), \begin{equation} p(s) \ge p\left(\frac{3-\sqrt{33}}{6}\right) = 6-\frac{2\sqrt{33}}{3} > 0. \end{equation} Thus \(\Psi((1,s), (t_0,1))\ge 0\), and therefore \(\Psi((1,s), (t,1))\ge 0\) on the whole square.

\item If \(x=(-1,s)\), we have \(\Psi(x,y) = \Psi((-1,s), (t,1))\), which is covered
by the same argument as in the previous point.
\end{enumerate}
\end{enumerate}
Therefore \(\Psi(x,y)\ge 0\) for all \(x,y\in [-1,1]^2\), proving inequality~\eqref{eq_goal_cf_t2} and concluding the proof.
\end{proof}

\section{Factorization costs for tree-based methods}
\label{appendix_table_analysis}
Table~\ref{tab:tree-methods} reports the $\MaxSE$ and $\MeanSE$ errors for several continual counting mechanisms, including smooth binary trees, $k$-ary trees with and without subtraction, and averaged $k$-ary trees with subtraction. Some of these mechanisms were not developed for the pure-DP setting and therefore did not explicitly state $\MaxSE$ and $\MeanSE$. For completeness, we provide calculations of these quantities for all mechanisms considered.

\subsection{Smooth binary trees}

Consider the factorization \(T_n=LR\) yielded by the smooth binary tree mechanism \cite{andersson2023smooth}. Take a tree of height \(h\), and
let the active leaves be the \(h\)-bit strings with exactly \(k\) zeros. The
construction uses \(k=h/2\).

The key combinatorial facts are the following. The sum values are stored only in the tree nodes that are left children. Then, each input coordinate contributes
to exactly \(k\) stored tree sums, so every column of \(R\) has \(\ell_1\)-norm
\(k\). Hence \(\|R\|_{1\to 1}=k\). Also, each prefix query is recovered as a
sum of exactly \(h-k\) stored tree sums, so every row of \(L\) has
\(h-k\) nonzero entries, all equal to \(1\). Therefore
\(\|L\|_{2\to\infty}=\sqrt{h-k}\), and \(\|L\|_F^2=n(h-k)\).

For the choice \(k=h/2\), we have $h$ such that ${h \choose h/2} \geq n$, implying $h = (1+o(1))\log_2 n$. Then
\begin{equation}
    \|L\|_{2\to\infty}\|R\|_{1\to 1}
=
\sqrt{h/2}\cdot h/2
=
\left(\frac{\log_2 n}{2}\right)^{3/2} (1+o(1)), \quad
\MaxSE = \frac{\log_2^3 n}{4}  (1+o(1)).
\end{equation}

Similarly, since \(\frac{1}{n}\|L\|_F^2\) coincides with \(\|L\|_{2\to\infty}^2\) for this factorization, we have
\begin{equation}
    \frac{1}{\sqrt{n}} \|L\|_{F}\|R\|_{1\to 1}
=
\left(\frac{\log_2 n}{2}\right)^{3/2} (1+o(1)), \quad
\MeanSE = \frac{\log_2^3 n}{4} (1+o(1)).
\end{equation}

\subsection{$k$-ary trees}

Let $LR = T_n$ be a decomposition corresponding to the $k$-ary tree mechanism. 

The depth of the tree is $h = (1+o(1)) \log_k n = \frac{\log_2 n}{\log_2 k} (1+o(1))$. Thus, for the right matrix, we have $\|R\|_{1 \to 1} = h = \frac{\log_2 n}{\log_2 k}(1 + o(1))$.

In the worst case, we need to sum about $(k-1) h = \frac{k-1}{\log_2 k} \log_2 n$ nodes from the tree to get a given prefix sum. Thus $\|L\|_{2 \to \infty} = \sqrt{\frac{k-1}{\log_2 k} \log_2 n} (1 + o(1))$. Then
\begin{equation}
    \|L\|_{2\to\infty}\|R\|_{1\to 1}
=
(1+o(1)) \frac{\sqrt{k-1}}{\log_2^{3/2} k}\log_2^{3/2} n , \quad
\MaxSE = (1+o(1)) \frac{2(k-1)}{\log_2^{3} k}\log_2^{3} n.
\end{equation}

The function $\frac{2(k-1)}{\log_2^{3} k}$ over integer $k$ has a minimum at $k=17$ with the value $\frac{32}{\log_2^{3} 17} \approx 0.4686$.

\subsection{Averaged $k$-ary trees with subtraction}
\label{sub:avaraged_k_ary}

In this subsection, we describe a new variant of the $k$-ary tree mechanism with subtraction, obtained by averaging two different decompositions of the prefix interval. This method appeared in the implementation code for the paper by Imola, Boninsegna, Keller, Aamand, Chowdhury, and Pagh~\cite{imola2026differentially} on private quantile estimation. However, it has not been described in that paper or elsewhere, except in the form of Python code. With the authors' permission, we provide a description and error analysis of the method.

Fix the value $k$. Let $h := \lceil \log_k n \rceil$. A complete $k$-ary tree of height $h$ is built on top of the input vector $x$. Every node of this tree holds a sum of a segment of the input vector, as in the usual $k$-ary tree mechanism. 

One can express the prefix sum $x_1 + \ldots + x_t$ as a suitable linear combination of the node values. Notice that
\begin{equation}
    [x_1, \ldots, x_t] = \frac{1}{2} \left([x_1, \ldots, x_n] + [x_1, \ldots, x_t] - [x_{t+1}, \ldots, x_n] \right),
    \label{eq_averaged_trees_subtraction_trick}
\end{equation}
where we mean addition and subtraction of intervals in terms of their indicator functions. Figures \ref{fig_3_ary_tree} and \ref{fig_3_ary_tree_h3_v2} illustrate this equality.

\begin{figure}[t]
    \centering
    \begin{tikzpicture}[
  level distance=1.4cm,
  level 1/.style={sibling distance=4.5cm},
  level 2/.style={sibling distance=1.4cm},
  vertex/.style={circle, draw, inner sep=3pt, minimum size=5pt},
  greenvertex/.style={vertex, fill=red!100},
  redvertex/.style={vertex, fill=black!100},
  leaf/.style={vertex},
  every child node/.style={vertex}
]
\node[vertex] {}
  child { node[greenvertex] {}
    child { node[leaf, label=below:{$x_1$}] {} }
    child { node[leaf, label=below:{$x_2$}] {} }
    child { node[leaf, label=below:{$x_3$}] {} }
  }
  child { node[vertex] {}
    child { node[greenvertex, label=below:{$x_4$}] {} }
    child { node[greenvertex, label=below:{$x_5$}] {} }
    child { node[redvertex, label=below:{$x_6$}] {} }
  }
  child { node[redvertex] {}
    child { node[leaf, label=below:{$x_7$}] {} }
    child { node[leaf, label=below:{$x_8$}] {} }
    child { node[leaf, label=below:{$x_9$}] {} }
  };
\end{tikzpicture}
    \caption{A 3-ary tree of height 2. The red nodes compute $[x_1, \ldots x_5]$, the black nodes compute $[x_6, \ldots x_9]$. Then $[x_1, \ldots x_5] = (\text{root} + \text{red} - \text{black})/2$.}
    \label{fig_3_ary_tree}
\end{figure}

\begin{figure}[!b]
    \centering
    \resizebox{\linewidth}{!}{%
\begin{tikzpicture}[
  level distance=1.35cm,
  level 1/.style={sibling distance=6cm},
  level 2/.style={sibling distance=2cm},
  level 3/.style={sibling distance=0.66cm},
  vertex/.style={circle, draw, inner sep=2pt, minimum size=5pt},
  greenvertex/.style={vertex, fill=red},
  redvertex/.style={vertex, fill=black},
  leaf/.style={vertex}
]

\node[vertex] {}
  child { node[greenvertex] {}
    child { node[vertex] {}
      child { node[leaf, label=below:{$x_1$}] {} }
      child { node[leaf, label=below:{$x_2$}] {} }
      child { node[leaf, label=below:{$x_3$}] {} }
    }
    child { node[vertex] {}
      child { node[leaf, label=below:{$x_4$}] {} }
      child { node[leaf, label=below:{$x_5$}] {} }
      child { node[leaf, label=below:{$x_6$}] {} }
    }
    child { node[vertex] {}
      child { node[leaf, label=below:{$x_7$}] {} }
      child { node[leaf, label=below:{$x_8$}] {} }
      child { node[leaf, label=below:{$x_9$}] {} }
    }
  }
  child { node[vertex] {}
    child { node[greenvertex] {}
      child { node[leaf, label=below:{$x_{10}$}] {} }
      child { node[leaf, label=below:{$x_{11}$}] {} }
      child { node[leaf, label=below:{$x_{12}$}] {} }
    }
    child { node[vertex] {}
      child { node[greenvertex, label=below:{$x_{13}$}] {} }
      child { node[greenvertex, label=below:{$x_{14}$}] {} }
      child { node[redvertex, label=below:{$x_{15}$}] {} }
    }
    child { node[redvertex] {}
      child { node[leaf, label=below:{$x_{16}$}] {} }
      child { node[leaf, label=below:{$x_{17}$}] {} }
      child { node[leaf, label=below:{$x_{18}$}] {} }
    }
  }
  child { node[redvertex] {}
    child { node[vertex] {}
      child { node[leaf, label=below:{$x_{19}$}] {} }
      child { node[leaf, label=below:{$x_{20}$}] {} }
      child { node[leaf, label=below:{$x_{21}$}] {} }
    }
    child { node[vertex] {}
      child { node[leaf, label=below:{$x_{22}$}] {} }
      child { node[leaf, label=below:{$x_{23}$}] {} }
      child { node[leaf, label=below:{$x_{24}$}] {} }
    }
    child { node[vertex] {}
      child { node[leaf, label=below:{$x_{25}$}] {} }
      child { node[leaf, label=below:{$x_{26}$}] {} }
      child { node[leaf, label=below:{$x_{27}$}] {} }
    }
  };

\end{tikzpicture}
}
    \caption{A 3-ary tree of height 3. The red nodes compute $[x_1, \ldots x_{14}]$, the black nodes compute $[x_{15}, \ldots x_{27}]$. Then $[x_1, \ldots x_{14}] = (\text{root} + \text{red} - \text{black})/2$.}
    \label{fig_3_ary_tree_h3_v2}
\end{figure}

This gives the way to compute the prefix sum $[x_1, \ldots, x_t]$. First, compute $[x_1, \ldots, x_t]$ as in the usual $k$-ary tree (red nodes in Figure~\ref{fig_3_ary_tree}). Then compute $[x_{t+1}, \ldots, x_n]$ similarly (black nodes in Figure~\ref{fig_3_ary_tree}). Then express $[x_1, \ldots, x_t]$ as in Eq.~\eqref{eq_averaged_trees_subtraction_trick}: the root node plus the red nodes minus the black nodes, all divided by two. 

Now let us analyze the corresponding factorization costs. The right matrix is the usual $k$-ary tree matrix, with $\|R\|_{1 \to 1} = h + 1$. Consider the rows of the left matrix. In an easy case when the row index is $t=n=k^h$, this row has a single ``1'' corresponding to the root of the tree, and the 2-norm of the row is 1. Every other row has entries in $\{-1/2, 0, 1/2\}$. Suppose that colored vertices are present in the tree at levels $\{1, \ldots, h'\}$, where the root is considered level $0$. Notice that $h'$ does not have to coincide with $h$ if $t$ is divisible by $k$. Then every level in $\{1, \ldots, h'-1\}$ contains exactly $k-1$ colored nodes, and the level $h'$ contains $k$ colored nodes. Every row of $L$ uses red nodes, black nodes, and the root, and thus its 2-norm is at most $\frac{1}{2} \sqrt{(k-1)(h-1) + 1 + k} = \frac{1}{2} \sqrt{(k-1)h} \cdot (1+o(1))$. This gives us
\begin{equation}
    \|L\|_{2\to\infty}\|R\|_{1\to 1}
=
\frac{\sqrt{k-1} \cdot h^{3/2}}{2} (1+o(1)), \quad
\MaxSE = \frac{(k-1) \log_2^3 n}{2 \log_2^3 k}  (1+o(1)),
\end{equation}
\begin{equation}
    \frac{1}{\sqrt{n}} \|L\|_{F}\|R\|_{1\to 1}
=
\frac{\sqrt{k-1} \cdot h^{3/2}}{2} (1+o(1)), \quad
\MeanSE = \frac{(k-1) \log_2^3 n}{2 \log_2^3 k}  (1+o(1)).
\end{equation}

Notice that the values for $\MaxSE$ and $\MeanSE$ coincide. The best coefficient is attained at $k = 17$ and equals $\frac{8}{\log_2^3 17} \approx 0.1171$.

\end{document}